


\documentclass[final,3p,times]{elsarticle}

\usepackage[]{todo}

\usepackage{latexsym}
\usepackage{amsthm}
\usepackage[cmex10]{amsmath}

\usepackage{graphicx}
\usepackage{float}
\usepackage{subfig}
\usepackage[]{placeins}
\usepackage[]{algorithm}
\usepackage{algpseudocode}
\usepackage{multirow}
\usepackage[subnum]{cases}
\usepackage{tabularx}


\newcommand\noi{\noindent}
\newcommand{\E}{\mathbb{E}}

\makeatletter
\renewcommand\appendix{\par
  \setcounter{section}{0}%
  \setcounter{subsection}{0}%
  \setcounter{equation}{0}%
  \setcounter{table}{0}
  \setcounter{figure}{0}
  \gdef\theequation{\@Alph\c@section.\arabic{equation}}%
  \gdef\thefigure{\@Alph\c@section.\arabic{figure}}%
  \gdef\thetable{\@Alph\c@section.\arabic{table}}%
  \gdef\thesection{\Alph{section}}%
  \@addtoreset{equation}{section}%
  \@addtoreset{table}{section}
  \@addtoreset{figure}{section}
}
\makeatother

\begin{document}
\begin{frontmatter}
\title{Distributed mining of time--faded heavy hitters}
\cortext[cor1]{Corresponding author.}
\author [] {Marco~Pulimeno}
\ead{marco.pulimeno@unisalento.it}
\author [] {Italo Epicoco}
\ead{italo.epicoco@unisalento.it}
\author [] {Massimo~Cafaro\corref{cor1}}
\ead{massimo.cafaro@unisalento.it}
\address {Dept. of Engineering for Innovation, University of Salento, Via per Monteroni, 73100 Lecce, Italy}

\begin{abstract} We present \textsc{P2PTFHH} (Peer--to--Peer Time--Faded Heavy Hitters) which, to the best of our knowledge, is the first distributed algorithm for mining time--faded heavy hitters on unstructured P2P networks. \textsc{P2PTFHH} is based on the  \textsc{FDCMSS} (Forward Decay Count--Min Space-Saving) sequential algorithm, and efficiently exploits an averaging gossip protocol, by merging in each interaction the involved peers' underlying data structures. We formally prove the convergence and correctness properties of our distributed algorithm and show that it is fast and simple to implement. Extensive experimental results confirm that \textsc{P2PTFHH} retains the extreme accuracy and error bound provided by \textsc{FDCMSS} whilst showing excellent scalability. Our contributions are three-fold: (i) we prove that the averaging gossip protocol can be used jointly with our augmented sketch data structure for mining time--faded heavy hitters; (ii) we prove the error bounds on frequency estimation; (iii) we experimentally prove that \textsc{P2PTFHH} is extremely accurate and fast, allowing near real time processing of large datasets.
\end{abstract}

\begin{keyword}
peer--to--peer \sep averaging protocol \sep heavy hitters \sep time--fading model \sep sketches.
\end{keyword}

\theoremstyle{plain}
\newtheorem{thm}{Theorem}
\newtheorem{lem}[thm]{Lemma}
\newtheorem{prob}{Problem}
\newdefinition{rmk}{Remark}
\newproof{pf}{Proof}
 \newproof{pot}{Proof of Theorem \ref{thm2}}
\newtheorem{prop}[thm]{Proposition}
\newtheorem*{cor}{Corollary}
\newtheorem{corollary}[thm]{Corollary}
\newtheorem{observation}[thm]{Observation}
\newdefinition{defn}{Definition}
\newtheorem{conj}{Conjecture}
\newtheorem{exmp}{Example}
\theoremstyle{remark}
\newtheorem*{rem}{Remark}
\newtheorem*{note}{Note}
\newtheorem{case}{Case}
\newtheorem{claim}[thm]{Claim}
\newtheorem{fact}[thm]{Fact}
\newtheorem{assumption}[thm]{Assumption}

\end{frontmatter}


\section{Introduction}
\label{intro}

Distributed algorithms supporting large scale decentralized systems are being studied and developed in order to cope with complex infrastructures (peer-to-peer and sensor networks, IoT devices) and to face the fast paced growing of both the volume and dimension of data, coupled with the increasing information flow deriving by novel applications in disparate fields such as data mining, machine learning, artificial intelligence, optimization and control, and social networking.

Truly distributed solutions (i.e., we exclude here the class of \textit{centralized algorithms} in which an agent acts as a \textit{coordinator}) can be shown to provide the highly desirable properties of \textit{resilience} and \textit{scalability}, by empowering each individual agent (also known as peer) with information sensing and decision making. However, exploiting the communication network underlying the agents to design a protocol in which the peers interact and collaborate towards a common goal can be quite challenging. 

One way to cope with this issue is to impose a well-known structured topology (e.g., a spanning-tree), whose properties can then be leveraged by the distributed protocol. A different approach, which shall be used in this paper, is the use of a gossip--based protocol \cite{Demers:1987}. In particular, we are interested in \textit{distributed averaging}, which is a consensus algorithm serving as a foundational tool for the information dissemination of our distributed algorithm. 

The task of distributed averaging is to drive the states of the nodes of a network towards an agreed common value with regard to the initial values held by the agents; this value is the average of the agents' initial values. In such protocol, the agents share their status information only with a few other connected agents that are called \textit{neighbors}.

A synchronous gossip--based protocol consists of a sequence of rounds in which each peer randomly selects one or more peers, exchanges its local state information with the selected peers and updates its local state by using the received information. Therefore, gossip can be thought as a simple way of self-organizing the peer interactions in consensus seeking.

In this paper, we present our \textsc{P2PTFHH} (Peer--to--Peer Time--Faded Heavy Hitters) distributed algorithm. Building on distributed averaging, we design a gossip--based version of our sequential \textsc{FDCMSS} (Forward Decay Count-Min Space-Saving) algorithm for distributed mining of time--faded heavy hitters on unstructured P2P networks \cite{Cafaro-Pulimeno-Epicoco-Aloisio}. To the best of our knowledge, this is the first distributed protocol designed specifically for this task. The problem of mining heavy hitters (also known as frequent items) is considered a fundamental data mining task and has been extensively studied. Among the many possible applications, we recall here its wide applicability in the context of sensor data mining, for business decision support, analysis of web query logs, network measurement, monitoring and traffic analysis. In particular, we are interested to the problem of distributed mining of time--faded heavy hitters, a model that puts heavier weights on recent batches of streaming  data than older batches \cite{exp-decay}.

Our contributions are three-fold: (i) we formally prove that the averaging gossip protocol can be used jointly with our augmented sketch data structure for mining time--faded heavy hitters; (ii) we prove the error bounds on frequency estimation; (iii) we experimentally prove that \textsc{P2PTFHH} is extremely accurate and fast, allowing near real time processing of large datasets.

This paper is organized as follows. We recall in Section \ref{related-work} relevant related work and in Section \ref{definitions} preliminary definitions and concepts that shall be used in the rest of the manuscript. We present in Section \ref{alg} our \textsc{P2PTFHH} (Peer--to--Peer Time--Faded Heavy Hitters) algorithm and formally prove in Section \ref{correctness} its correctness. Next, we provide extensive experimental results in Section \ref{results}. Finally, we draw our conclusions in Section \ref{conclusions}.

\section{Related Work}
\label{related-work}

Misra and Gries \cite{Misra82} designed in 1982 the first algorithm for mining frequent items. However, their solution was not proposed as a streaming algorithm, even though it works in a streaming context. The \emph{Lossy Counting} and \emph{Sticky Sampling} algorithms by Manku et al. \cite{Manku02approximatefrequency}, published in 2002, were considered the first streaming algorithms until 2003, the year in which the Misra and Gries algorithm was independently rediscovered and improved, with regard to its running time, by Demaine et al. (the so-called \textit{Frequent} algorithm) \cite{DemaineLM02} and Karp et al. \cite{Karp}. The key idea to decrease the per item processing time was the use of a better data structure, based on a hash table. A few years later, Metwally et al. presented \emph{Space-Saving}, a novel algorithm providing a significant improvement with regard to the output's accuracy, whilst retaining the same running time and space required to process the input stream. 

All of the previous algorithms for detecting heavy hitters are commonly known as \emph{counter--based}, since they use a set of counters to keep track of the frequent items. Other algorithms are instead known as \emph{sketch--based}, owing to their use of a sketch data structure to monitor the input stream. A sketch is usually a bi-dimensional array data structure containing a counter in each cell. Hash functions map the stream items to corresponding cells in the sketch. In this class of algorithms, we recall here \emph{CountSketch} by Charikar et al. \cite{Charikar}, \emph{Group Test} \cite{Cormode-grouptest} and \emph{Count-Min} \cite{Cormode05} by Cormode and Muthukrishnan and \emph{hCount} \cite{Jin03} by Jin et al.

Parallel algorithms designed for fast mining of heavy hitters span both the message--passing and shared--memory paradigms. Cafaro et al. \cite{Cafaro-Pulimeno-Tempesta} \cite{cafaro-tempesta} \cite{Cafaro-Pulimeno} designed parallel versions of the Frequent and Space-Saving algorithms for message--passing architectures. Regarding shared-memory, we recall here parallel versions of Lossy Counting and Frequent by Zhang et al. \cite{Zhang2012} \cite{Zhang2013}, parallel versions of Space-Saving by Dat et al. \cite{Das2009}, Roy et al. \cite{Roy2012}, and Cafaro et al \cite{CPE:CPE4160}. Parallel algorithms for fundamental frequency--based aggregates, including heavy hitters were designed by Tangwongsan et al. \cite{Tangwongsan2014}. Among the accelerator based algorithms, exploiting GPU (Graphics Processing Unit) and the Intel Phi, we find Govindaraju et al. \cite{Govindaraju2005}, Erra and Frola \cite{Erra2012} and Cafaro et al. \cite{HPCS2017} \cite{CPE:CPE4160}. 

Mining Correlated Heavy Hitters (CHH) has been recently proposed by Lahiri et al. \cite{Lahiri2016}; common applications requiring accurate mining of CHHs are related to network monitoring and management, and to anomaly and intrusion detection. For instance, taking into account the stream of pairs (source address, destination address) consisting of IP packets traversing a router, the data mining task is the identification of the nodes accounting for the majority of the traffic passing through that router (these are the frequent items over a single dimension); however, given a frequent source, it is desirable and important as well to simultaneously discover the identity of the destinations receiving the majority of connections by the same source. After mining the first dimension to detect the most important sources, the second dimension is mined to detect the frequent destinations in the context of each identified source, i.e., the stream's correlated heavy hitters. Epicoco et al. \cite{Epicoco:2018:FAM:3182040.3182103} proposed a fast and more accurate algorithm for mining CHHs; a parallel, message-passing based version of this algorithm appears in Pulimeno et al. \cite{Pulimeno:WPDM2018}.

The above algorithms, when processing an input stream, do not discount the effect of old data. In practice, all of the items are given equal weight. However, this is not appropriate for those applications based on the underlying assumption that recent data is more useful and valuable than older, stale data. One way to handle such as a situation is using the so-called \emph{sliding window} model \cite{Datar} \cite{TCS-002}. The key idea is the use of a temporal window to capture fresh, recent items. this window periodically slides forward, allowing detection of only those frequent items falling in the window.

The time--fading model \cite{exp-decay} puts heavier weights on recent batches of streaming data than older batches. This is achieved by \emph{fading} the frequency count of older items: each item in the stream has an associated timestamp that shall be used to determine its weight, by means of a decaying factor $0 < \lambda < 1$. Therefore, instead of an item's frequency count, algorithms compute its \textit{decayed count} (also known as \textit{decayed frequency}) through a decay function assigning greater weight to more recent items. The older an item, the lower its decayed count is: in the case of exponential decay, the weight of an item occurred $n$ time units in the past, is $e^{-\lambda n}$, which is an exponentially decreasing quantity.

Noteworthy algorithms for mining time--faded heavy hitters include $\lambda$-HCount \cite{Chen-Mei} by Chen and Mei, FSSQ (Filtered Space-Saving with Quasi--heap) \cite{FSSQ} by Wu et al. and the \textsc{FDCMSS} algorithm \cite{Cafaro-Pulimeno-Epicoco-Aloisio} \cite{FDCMSSvsFSSQ} by Cafaro et al. These algorithms are sketch--based variants of Count-Min. Whilst $\lambda$-HCount and FSSQ rely on a backward decay exponential function, \textsc{FDCMSS} provides support for both backward and forward decay functions \cite{forward-decay}. For instance, by using a polynomial forward function, \textsc{FDCMSS} provides more flexibility with regard to time fading, since the time fades more slowly than by using an exponential function. Moreover, the use of forward decay allows \textsc{FDCMSS} dealing with out of order arrival of stream items easily. On the contrary, neither $\lambda$-HCount nor FSSQ can handle out of order items. Both $\lambda$-HCount and FSSQ require a dedicated data structure to keep track of frequent item candidates: $\lambda$-HCount uses a doubly linked list indexed by an hash function, whilst FSSQ uses a so--called Quasi-Heap, an heap in which heapify operations are delayed until necessary. Our \textsc{FDCMSS} algorithm does not require additional space beyond its data structure, a sketch in which each cell hosts a Space-Saving stream summary holding exactly two counters. We have shown that, fixing the same amount of space for all of the algorithms, \textsc{FDCMSS} outperforms $\lambda$-HCount and FSSQ with regard to the error committed. Moreover, \textsc{FDCMSS} provides the best error bound. Regarding the \textit{recall} property (i.e., the ability of outputting all of the true frequent items), both $\lambda$-HCount and FSSQ guarantee 100\% recall, whilst \textsc{FDCMSS} provides a probabilistic recall guarantee. Finally, with regard to the running time, \textsc{FDCMSS} is the fastest owing to its data structures and update strategy. A parallel message-passing based algorithm has been recently proposed in \cite{CAFARO2018115}.

Many algorithms for distributed mining of heavy hitters have been designed, e.g., \cite{Cao:2004}, \cite{Zhao:2006}, \cite{Keralapura:2006}, \cite{recent-freq-items}, \cite{Venkataraman}. Among the gossip--based algorithms for mining heavy hitters on unstructured P2P networks, it is worth recalling here \cite{Sacha}, \cite{CEM20131544}, and \cite{LAHIRI20101241}. 

The algorithms by Sacha and Montresor \cite{Sacha} and by Aem and Azkasap \cite{CEM20131544} are based on the same ideas. Indeed, the dataset to be mined is partitioned among the peers, which periodically exchange their local state consisting of the current subset of the whole dataset. The key assumption is that a peer is allowed to store the whole dataset (even though this requires obviously a huge amount of space), which is indeed obtained by means of the distributed averaging gossip interactions. In order to estimate $N$, the number of peers, the distributed averaging approach is used again as follows: one of the peers starts with a value equal to one and all of the others with a value equal to zero. It can be shown that the protocol converges to $1/N$, so that $N$ can be easily estimated. In particular, the convergence of the averaging gossip protocol has been investigated in \cite{Jelasity2005}; here, the authors proved that the variance around the mean value being computed is reduced in each round by a specific convergence factor.

Exchanging the full local state, besides requiring a huge amount of space, also increases to communication complexity. Therefore, in \cite{Sacha} the authors suggest to exchange only the top-$k$ most frequent items where $k$ is a user's defined parameter. The protocol's termination condition requires for each peer that the subset consisting of the top-$k$ items does not change for a specified number of consecutive rounds. In \cite{CEM20131544}, to cope with the communication complexity, the authors uses an additional data structure, which is an hash table storing all of the items (these items are never deleted, at the cost of increased space used) from which the algorithm randomly selects a specified number of items corresponding to a predefined message size. The termination condition requires two user's defined parameters: $\epsilon$ and $convLimit$. The algorithm terminates when for each item, the absolute difference between the estimated frequency in the current and previous rounds is less than or equal to $\epsilon$ for at least $convLimit$ consecutive rounds.

Both \cite{Sacha} and \cite{CEM20131544} require space complexity linear in the length $n$ of the dataset; this allows solving the \textit{exact} problem rather than the approximate problem, but at the expense of space.
 
In Lahiri and Tirthapura \cite{LAHIRI20101241}, the authors design an algorithm based on random sampling of the items and the averaging gossip protocol. Each item is assigned a random weight in the interval (0, 1) and a peer maintains and exchanges in each round the $t$ items whose weight is the lowest, with $t = \frac{128}{\psi^2} \ln \frac{3}{\delta}$; $\psi$ and $\delta$ denote respectively a threshold and a probability of failure. The proposed approach can only detect frequent items, since the algorithm does not provide frequency estimation. In particular, the output is a list of items that with high probability (defined by $\delta$) contains the frequent items (with regard to the $\psi$ threshold). Regarding the space used, for each of the $t$ items the algorithm stores a tuple consisting of four fields: the peer identifier, the item index in the peer's local dataset, the item value and its random weight.   

None of the previous gossip-based protocols has been designed to mine time--faded heavy hitters on unstructured P2P networks. To the best of our knowledge, \textsc{P2PTFHH} is the first distributed protocol designed specifically for this task.

\section{Preliminary definitions}
\label{definitions}

In this Section we introduce preliminary definitions and the notation used throughout the paper. We deal with an input data stream $\sigma$ consisting of a sequence of $n$ items drawn from a universe $\mathcal{U}$; without loss of generality, let $m$ be the number of distinct items in $\sigma$ i.e., let $\mathcal{U}=\{1,2,\ldots,m\}$, which we shall also denote by $[m]$. Let $f_i$ be the frequency of the item $i \in \mathcal{U}$ (i.e., its number of occurrences in $\sigma$), and denote the frequency vector by $\textbf{f} = (f_1,\ldots,f_m)$. Moreover, let $0 < \phi < 1$ be a support threshold, $0 < \epsilon < 1$ a tolerance such that $\epsilon < \phi$ and denote the 1-norm of $\textbf{f}$ (which represents the total number of occurrences of all of the stream items) by $||\textbf{f}||_1$. 

In this paper, we are concerned with the problem of detecting heavy hitters in a stream which is distributed among $p$ peers with the additional constraint that recent items are more relevant with regard to older items. In the \emph{time--faded} model, recent items are weighted more than former items while in the \emph{sliding window} model only the items' occurrences which appear in a window sliding over time are considered. Our algorithm, \textsc{P2PTFHH}, works in the former model.

The time--fading model \cite{recent-freq-items} \cite{exp-decay} \cite{Chen-Mei} \cite{FSSQ} uses a \emph{fading} frequency count of older items to represent the freshness of recent items. This is achieved by computing the item's \textit{decayed frequency} through the use of a decay function that assign greater weight to more recent occurrences of an item than to older ones: the older an occurrences is, the lower its decayed weight.

\begin{defn}
\label{decay-function}
Let $w(t_i,t)$ be a decayed function which computes the decayed weight at time $t$ for an occurrence of the item $i$ arrived at time $t_i$. A decayed function must satisfy the following properties: 
\begin{enumerate}
\item $w(t_i,t) = 1$ when $t_i = t$ and $0 \leq w(t_i,t) \leq 1$ for all $t > t_i$;
\item $w$ is a monotone non-increasing function as time $t$ increases, i.e., $t' \geq t \implies w(t_i, t') \leq w(t_i, t)$.
\end{enumerate}
\end{defn}

Related work has mostly exploited \textit{backward decay} functions, in which the weight of an item is a function of its age, $a$, where the age at time $t > t_i$ is simply $a = t-t_i$. In this case, $w(t_i, t)$ is given by $w(t_i, t) = \frac{h(t-t_i)}{h(t-t)}=\frac{h(t-t_i)}{h(0)}$, where $h$ is a positive monotone non-increasing function.

The term backward decay stems from the aim of measuring from the current time back to the item's timestamp. Prior algorithms and applications have been using backward exponential decay functions such as $h(a) = e^{-\lambda a}$, with $\lambda > 0$ being the decaying factor. 

In our algorithm, we use instead a \textit{forward decay} function, defined as follows (see \cite{forward-decay} for a detailed description of the forward decay approach). Under forward decay, the weight of an item is computed on the amount of time between the arrival of an item and a fixed point $L$, called the \textit{landmark} time, which, by convention, is some time earlier than the timestamps of all of the items. The idea is to look forward in time from the landmark to see an item, instead of looking backward from the current time.

\begin{defn}
	Given a positive monotone non-decreasing function $g$, and a landmark time $L$, the forward decayed weight of an item $i$ with arrival time $t_i > L$ measured at time $t \geq t_i$ is given by $w(t_i, t) = \frac{g(t_i-L)}{g(t-L)}$.
\end{defn}

The denominator is used to normalize the decayed weight so that $w(t_i, t)$ is always less than or equal to 1 as requested by Definition~\ref{decay-function}.

\begin{defn}
\label{item-decayed-count}
The \textit{decayed frequency} of an item $v$ in the input stream $\sigma$, computed at time $t$, is given by the sum of the decayed weights of all the occurrences of $v$ in $\sigma$: $f_v(t) = \sum_{v_i = v} w(t_i,t)$.
\end{defn}

\begin{defn}
\label{decayed-count}
The \textit{decayed count} at time $t$, $C(t)$, of a stream $\sigma$ of $n$ items is the sum of the decayed weights of all the items occurring in the stream: $C(t)=\sum_{i=1}^n w(t_i, t)$.
\end{defn}

The Approximate Time--Faded Heavy Hitters (ATFHH) problem is formally stated as follows.

\begin{prob} 
	\label{prob1}
	Approximate Time--Faded Heavy Hitters. Given a stream $\sigma$ of items with an associated timestamp, a threshold $0 < \phi < 1$ and a tolerance $0 < \epsilon < 1$ such that $\epsilon < \phi$, and letting $g$ be a decaying function used to determine the decayed frequencies and $t$ be the query time, return the set of items $H$, so that:

\begin{itemize}
\item $H$ contains all of the items $v$ with decayed frequency at time $t$ $f_v(t) > \phi C(t)$ (decayed frequent items);
 
\item $H$ does not contain any item $v$ such that $f_v(t) \leq (\phi-\epsilon) C(t)$. 
\end{itemize}

\end{prob}

In the following, when clear from the context, the query time shall be considered an implicit parameter, and we shall write $f_v$ and $C$ in place of $f_v(t)$ and $C(t)$. Our algorithm uses an averaging gossip protocol for unstructured P2P networks in which the peers exchange their internal state consisting of a Count--Min sketch data structure augmented by a Space-Saving summary associated to each sketch cell. In the following, we recall the main properties of the Count--Min and Space-Saving algorithms for ordinary, non decaying frequencies and the averaging gossip protocol.

Count--Min is based on a sketch of $d \times w$ cells, where each item is mapped to by means of $d$ different pairwise independent hash functions. Count--Min solves the \textit{frequency estimation} problem for arbitrary items with an estimation error $\epsilon \leq e/w$ and with probability $1 - e^{-d}$. The algorithm may also be extended to solve the \textit{approximate frequent items} problem as well, by using an additional heap data structure which is updated each time a cell is updated. Since in Count--Min the frequencies stored in the cells overestimate the true frequencies, a point query for an arbitrary item simply inspects all of the $d$ cells in which the query item is mapped to by the corresponding hash functions and returns the minimum of those $d$ counters. 

Space-Saving is a counter-based algorithm solving the heavy hitters problem. It is based on a stream summary data structure consisting of a given number of counters $k \ll n$, $n$ being the length of the stream. Each counter monitors an item in the stream and tracks its frequency. A substitution strategy is used when the algorithm processes an item not already monitored and all of the counters are occupied.

Let $\sigma$ be the input stream and denote by $\mathcal{S}$ the summary data structure of $k$ counters used by the Space-Saving algorithm. Moreover, denote by $\left|\mathcal{S}\right|$ the sum of the counters in $\mathcal{S}$, by $f_v$ the exact frequency of an item $v$ and by $\hat{f}_v$ its estimated frequency, let $\hat{f}^{min}$ be the minimum frequency in $\mathcal{S}$. If there exist at least one counter not monitoring any item, $\hat{f}^{min}$ is zero.

Finally, denote by $\textbf{f} = (f_1,\ldots,f_m)$ the frequency vector. The following relations hold (as shown in \cite{Metwally2006}):

\begin{equation}
\label{ss1}
\left|\mathcal{S}\right| = ||\textbf{f}||_1,
\end{equation}

\begin{equation}
\label{ss5}
\hat{f}_v - f_v \leq \hat{f}^{min} \leq \left\lfloor\frac{||\textbf{f}||_1}{k}\right\rfloor, \hspace{3mm} v \in \mathcal{U}.
\end{equation}

The gossip--based protocol \cite{Demers:1987} is a synchronous distributed algorithm consisting of periodic rounds. In each of the rounds, a peer (or agent) randomly selects one or more of its neighbors, exchanges its local state with them and finally updates its local state. The information is disseminated through the network by using one of the following possible communication methods: (i) \textit{push}, (ii) \textit{pull} or (iii) \textit{push--pull}. The main difference between push and pull is that in the former a peer randomly selects the peers to whom it wants to send its local state, whilst in the latter it randomly selects the peers from whom to receive the local state. Finally, in the hybrid push--pull communication style, a peer randomly selects the peers to send to and from whom to receive the local state. In this synchronous distributed model it is assumed that updating the local state of a peer is done in constant time, i.e., with $O(1)$ worst-case time complexity; moreover, the duration of a round is such that each peer can complete a push--pull communication within the round. 

We are interested in a specific gossip--based protocol, which is called \textit{distributed averaging}, and can be considered as a consensus protocol. For the purpose of our theoretical analysis, we assume that peers and communication links do not fail, and that neither new peers can join the network nor existing peers can leave it (the so-called \textit{churning} phenomenon). Therefore, the graph $G = (V, E)$ in which $V$ is the set of peers and $E$ models the connectivity, fully describing the underlying network topology, is not time-varying. However, it is worth noting here that our algorithm also works in time-varying graphs in which the network can change owing to failures or churning and we shall show an experimental evidence of that in Section~\ref{effect-of-churn}, in which we discuss the effect of churn. 

In \textit{uniform gossiping}, a peer $i$ can communicate with a randomly selected peer $j$. Instead, in our scenario the communication among the peers is restricted to neighbor peers i.e., two peers $i$ and $j$ are allowed to communicate if and only if a one-hop communication link exists between them; we assume that communication links are bidirectional. Initially, each peer $i$ is provided with or computes a real number $v_i$; the distributed averaging problem requires designing a distributed algorithm allowing each peer computing the average $v_{{\rm avg}} = \frac{1}{p}\sum_{i=1}^p v_i$ by exchanging information only with its neighbors. Letting $v_i(r)$ be the peer $i$ estimated value of $v_{{\rm avg}}$ at round $r$, a gossip interaction between peers $i$ and $j$ updates both peers' variables so that at round $r+1$ it holds that $v_{i}(r+1)=v_j(r+1) = \frac{1}{2}(v_i(r)+v_j(r))$. Of course, for a peer $i$ which is not gossiping at round $r$ it holds that $v_i(r+1) = v_i(r)$. It can be shown that distributed averaging converges exponentially fast to the target value $v_{{\rm avg}}$. In general, a peer is allowed to gossip with at most one peer at a time. In our algorithm, we allow each peer the possibility of gossiping with a predefined number of neighbors. We call \textit{fan-out}, $fo$, of the peer $i$ the number of its neighbors with which it communicates in each round; therefore, $1 \leq fo \leq \left\vert \{j: (i, j) \in E\} \right\vert $. Therefore, we explicitly allow two or more pairs of peers gossiping at the same time, with the constraint that the pairs have no peer in common. We formalize this notion in the following definition.

\begin{defn}
	Two gossip pairs of peers $(i,j)$ and $(x,y)$
	are \textit{noninteracting} if neither $i$ nor $j$ equals either $x$ or $y$.
\end{defn}

In our algorithm multiple noninteracting  pairs of allowable gossips may occur simultaneously. Non-interactivity is required in order to preserve and guarantee correctness of the results; in the literature non-interactivity is also called \textit{atomic} push--pull communication: given two peers $i$ and $j$, if peer $i$ sends a push message to $j$, then peer $i$ can not receive in the same round any intervening push message from any other peer $k$ before receiving the pull message from $j$ corresponding to its initial push message. 

It is worth noting here that our algorithm does not require explicitly assigning identifiers to the peers, and we do so only for convenience, in order to simplify the analysis; however, we do assume that each peer can distinguish its neighbors.

\section{The algorithm}
\label{alg}

In this section, we start by recalling our sequential algorithm \textsc{FDCMSS} \cite{Cafaro-Pulimeno-Epicoco-Aloisio}. The key data structure is an augmented Count--Min sketch $\mathcal{D}$, whose dimensions $d$ (rows) and $w$ (columns) are derived by input parameters $\epsilon$, the error tolerance, and $\delta$, the probability of failure. Whilst every cell in an ordinary Count--Min sketch contains a counter used for frequency estimation, in our case a cell holds a Space-Saving summary with exactly two counters. The idea behind the augmented sketch is to monitor the time--faded items that the sketch hash functions map to the corresponding cells by an instance of Space-Saving with two counters, so that for a given cell we are able to determine a \textit{majority item candidate} with regard to the sub-stream of items falling in that cell. In fact, we proved that, with high probability, if an item is frequent, then it appears as a majority item candidate in at least one of the $d$ sketch cells in which it falls.

Indeed, by using a Space--Saving summary with two counters in each cell, and letting $C_{i,j}$ denote the total decayed count of the items falling in the cell $\mathcal{D}[i][j]$, the majority item is, if it exists, the item whose decayed frequency is greater than $\frac{C_{i,j}}{2}$. The corresponding majority item candidate in the cell is the item monitored by the Space-Saving counter whose estimated decayed frequency is maximum. 

The main idea of the distributed version of the \textsc{FDCMSS} algorithm is to let each peer process its local data stream with the \textsc{FDCMSS} algorithm. Then, the peers engage in a gossip--based distributed averaging protocol, exchanging their local state which consists of the augmented sketch data structure obtained after processing the input stream and an estimate of the number of peers in the network.

The algorithm's initialization (pseudo-code provided as Algorithm \ref{initialize}) requires as input parameters $d$ and $w$, the dimensions of the local sketch; $\mathcal{N}_l$, the local dataset to be processed, being $l$ the peer's identifier; $\epsilon$, the error tolerance; $\delta_g$, a probability of failure of the gossip protocol (i.e., related to the event that the gossip protocol fails to provide the theoretical properties that it should guarantee with high probability); $\phi$, the frequent support threshold; $p^*$, an estimate of the number $p$ of peers in the network (we only require $p^* \geq p$); $R$, the number of rounds to be performed by the distributed algorithm; and $\gamma$, the gossip protocol convergence factor, whose meaning shall be explained in the theoretical analysis section. 

The initialization procedure starts calling the \textsc{CreateSketch} function which allocates and returns an augmented sketch: for each one of the $d \times w$ cells of the sketch, \textsc{CreateSketch} allocates a Space-Saving summary with two counters $c_1$ and $c_2$, both maintaining an item and a frequency value (given a counter $c_j, j=1,2$, we denote by $c_j.i$ and $c_j.f$ respectively the counter's item and its estimated decayed frequency). Moreover, \textsc{CreateSketch} selects $d$ pairwise independent hash functions $h_1,\ldots,h_d:[m] \rightarrow [w]$, mapping $m$ distinct items into $w$ cells. It is worth noting here that each peer is required to select the same $d$ hash functions in order to guarantee the correctness of the results. 

After the main data structures are created, each peer stores a local copy of the remaining input parameters. Then, the peer whose identifier is $l=1$ sets $\tilde{q}_{0,l}$ to 1 and all of the other peers sets this value to zero. The variable $\tilde{q}_{0,l}$ represents the estimate of peer $l$ at round $0$ of the number $p$ of peers (to be obtained by using the distributed averaging protocol: indeed, upon convergence this value approaches $1/p$ with high probability). 

\begin{algorithm}
	\begin{algorithmic}[1]
		\Require $d, w$, number of rows and columns of the sketch; $\mathcal{N}_l$, local dataset to be processed; $R$, total number of rounds to be executed; $\epsilon$, parameter controlling the false positives rate; $\delta_g$, the probability of gossip failure; $\phi$, the support threshold; $p^*$, an estimate of the number of peers; $\gamma$, the convergence factor; $L$, the landmark time.  
		\Ensure Initialization of the internal peer's state.
		\caption{\textsc{P2PTFHH}: Initialization}	
		\label{initialize}
		\Procedure{Initialize}{$d$, $w$, $\mathcal{N}_l$, $R, \epsilon, \delta_g, \phi, p^*, \gamma, L$}
		\Comment{initialization of node $l$}
		\State Let $l$ be the peer's identifier
		\State $\mathcal{D}_{0,l} \leftarrow$ \Call {CreateSketch} {$w$,$d$}
		\If{$l == 1$}
			\State $\tilde{q}_{0,l} \leftarrow 1$
		\Else
			\State $\tilde{q}_{0,l} \leftarrow 0$
		\EndIf
		\State  \Call{FDCMSS-UPDATE}{ $\mathcal{D}_{0,l}$, $\mathcal{N}_l$}
		\State $state_{0,l} \leftarrow (\mathcal{D}_{0,l}, 
																				 \tilde{q}_{0,l})$
		\EndProcedure
	\end{algorithmic}
\end{algorithm}

Each peer processes its local data stream following the update procedure of the \textsc{FDCMSS} algorithm, shown in pseudo-code as Algorithm \ref{FDCMSS-UPDATE}. For each item $i$ with timestamp $t_i$, the non normalized forward decayed weight $x$ of the item is computed, then the $d$ cells in which the item is mapped to by the corresponding hash functions $h_j(x), j=1,\dots,d$ are updated by using the Space-Saving item update procedure.  

We shortly describe here the update procedure of the Space-Saving algorithm, \textsc{SpaceSavingUpdate} in Algorithm \ref{FDCMSS-UPDATE}: let $\mathcal{S}$ denote the Space-Saving summary with two counters corresponding to the cell to be updated. When processing an item which is already monitored by a counter, its estimated frequency is incremented by the non normalized weight $x$. When processing an item which is not already monitored by one of the available counters, there are two possibilities. If a counter is available, it will be in charge of monitoring the item, and its estimated frequency is set to the non normalized weight $x$.  Otherwise, if all of the counters are already occupied (their frequencies are different from zero), the counter storing the item with minimum frequency is incremented by the non normalized weight $x$. Then, the monitored item is evicted from the counter and replaced by the new item. This happens since an item which is not monitored can not have a frequency greater than the minimal frequency. 

\begin{algorithm}
\begin{algorithmic}[1]
\Require $\mathcal{D}_{0,l}$, sketch data structure of peer $l$; $\mathcal{N}_l$, local dataset to be processed.
\Ensure update of sketch related to items in $\mathcal{N}_l$.
\Procedure {FDCMSS-UPDATE}{$\mathcal{D}_{0,l}$, $\mathcal{N}_l$}
\Comment{let $i$ be an item, and $t_i$ its timestamp}
\ForAll{$(i, t_i) \in \mathcal{N}_l$}
	\State $x \leftarrow g(t_i-L)$
	\Comment{compute the non normalized decayed weight of item $i$}
	\For{$j=1$ to $d$}
		\State $\mathcal{S} \leftarrow \mathcal{D}_{0,l}[j][h_j(i)]$
		\State \Call{SpaceSavingUpdate}{$\mathcal{S}, i, x$}
		\Comment {update the sketch}
	\EndFor
\EndFor
\EndProcedure
\caption{FDCMSS-UPDATE}
\label{FDCMSS-UPDATE}
\end{algorithmic}
\end{algorithm}

We represent the local state of a peer $l$ at round $r$ as a tuple $state_{r,l}$ consisting of the peer's local sketch $\mathcal{D}_{r,l}$, and the estimate $\tilde{q}_{r,l}$.

\begin{algorithm}
\begin{algorithmic}[1]
\caption{\textsc{P2PTFHH: Gossip protocol}}	
\label{p2ptfhhalg}

\Procedure{GOSSIP}{}
\For{$r = 0$ to $R$}
	\State $neighbours \leftarrow$ select $fo$ random neighbours
	\ForAll {$i \in neighbours$}
		\State \Call{SEND}{$push$, $i$, $state_{r,l}$}
	\EndFor
\EndFor
\EndProcedure

\Procedure{ON\_RECEIVE}{$msg$}
\If{$msg.type == push$}
	\State $state_{r+1,l} \leftarrow$ \Call{UPDATE}{$msg.state$, $state_{r,l}$}
	\State \Call{SEND}{$pull$, $msg.sender$, $state_{r+1,l}$}
\EndIf
\If{$msg.type == pull$}
	\State $state_{r+1,l} \leftarrow msg.state$
\EndIf
\EndProcedure

\Procedure{UPDATE}{$state_i$, $state_j$}
\State $(\mathcal{D}_i, \tilde{q}_i) \leftarrow state_i$
\State $(\mathcal{D}_j, \tilde{q}_j) \leftarrow state_j$
\State $\mathcal{D} \leftarrow$ \Call{MERGE}{$\mathcal{D}_i$, $\mathcal{D}_j$}
\ForAll{summary $\mathcal{S}$ in a cell of $\mathcal{D}$}
	\ForAll{counter $c \in \mathcal{S}$}
		\State $c.f \leftarrow \frac{c.f}{2}$
	\EndFor	
\EndFor		
\State $\tilde{q} \leftarrow \frac{\tilde{q}_i + \tilde{q}_j}{2}$  
\State $state \leftarrow (\mathcal{D}, \tilde{q})$
\State \Return $state$
\EndProcedure

\end{algorithmic}
\end{algorithm}

\begin{algorithm}
\begin{algorithmic}[1]
\Require $\mathcal{D}_1, \mathcal{D}_2$, sketch data structures to be merged.
\Ensure $\mathcal{G}$, merged sketch.
\caption{\textsc{Merge}}	
\label{p2ptfhhalg_merge}

\Procedure {MERGE}{$\mathcal{D}_1, \mathcal{D}_2$}
\ForAll{$\mathcal{S}^1_{ij} \in \mathcal{D}_1, \mathcal{S}^2_{ij} \in \mathcal{D}_2$}
		\State $m_1 \leftarrow $ \Call {min}{$c^1_1.f, c^1_2.f$}
		\Comment{$m_1$, the minimum of counters' frequency in $\mathcal{S}^1_{ij}$}
		\State $m_2 \leftarrow $ \Call {min}{$c^2_1.f, c^2_2.f$}
		\Comment{$m_2$, the minimum of counters' frequency in $\mathcal{S}^2_{ij}$}
\ForAll{$c_{s_1} \in \mathcal{S}_1$}
	\State $c_{s_2} \leftarrow \Call {Find}{\mathcal{S}_2, c_{s_1}.i}$
	\If{$c_{s_2}$}
		\State $c_{s_c}.f \leftarrow c_{s_1}.f + c_{s_2}.f$
		\State \Call {Delete}{$\mathcal{S}_2, c_{s_2}$}
	\Else
		\State $c_{s_c}.f \leftarrow c_{s_1}.f + m_2$
	\EndIf
	\State $c_{s_c}.i \leftarrow c_{s_1}.i$
	\State \Call {Insert}{$\mathcal{S}_C, c_{s_c}$}
\EndFor
\ForAll{$c_{s_2} \in \mathcal{S}_2$}
	\State $c_{s_c}.i \leftarrow c_{s_2}.i$
	\State $c_{s_c}.f \leftarrow c_{s_2}.f + m_1$
	\State \Call {Insert}{$\mathcal{S}_C, c_{s_c}$}
\EndFor
		\State \Call {Purge}{$\mathcal{S}_C$}
		\Comment {$\mathcal{S}_C$ now contains $2$ counters with the greatest frequencies}
		\State $\mathcal{G}[i][j] \leftarrow \mathcal{S}_C$
\EndFor
\State \Return $\mathcal{G}$
\EndProcedure

\end{algorithmic}
\end{algorithm}

\begin{algorithm}
\begin{algorithmic}[1]
\Require $t$, query time.
\Ensure $H$, the set of frequent items.
\Procedure{QUERY}{$t$}
\State $(\mathcal{\tilde{D}}_{r,l}, \tilde{q}_{r,l}) \leftarrow state_{r,l}$
\State $\epsilon^* \leftarrow p^* \times \sqrt{\frac{\gamma^{r}}{\delta_g}}$
\State $\tilde{p}_{r,l} \leftarrow 1/\tilde{q}_{r,l}$
\State $\tilde{C}_{r,l} \leftarrow 0$
\ForAll{summary $\mathcal{S} \in \mathcal{\tilde{D}}_{r,l}$}, 
	\ForAll{counter $c \in \mathcal{S}$}
		\State $\tilde{C}_{r,l} \leftarrow \tilde{C}_{r,l} + c.f$
	\EndFor	
\EndFor
\State $\tilde{C}_{r,l} \leftarrow \frac{\tilde{C}_{r,l}}{g(t-L)} $
\State $H \leftarrow \emptyset$
\State $\tau \leftarrow \phi \tilde{C}_{r,l} \frac{1-\epsilon^*}{1+\epsilon^*}$
\ForAll{summary $\mathcal{S} \in \mathcal{\tilde{D}}_{r,l}$}
		\State let $c_1$ and $c_2$ be the counters in $\mathcal{S}$ 
		\State $c_m \leftarrow \Call{argmax}{c_1, c_2}$
		\Comment {$c_m$ the counter with maximum decayed count}
		\If{$\frac{c_m.f}{g(t-L)}  > \tau$}	
			\State $p \leftarrow \Call{PointEstimate}{c_m.i, t}$
			\If{$p > \tau$}
				\State $H \leftarrow H \cup \{(c_m.i, p)\}$
			\EndIf
		\EndIf
\EndFor
\State \Return $H$
\EndProcedure

\Procedure {PointEstimate}{$i, t$}
\State $answer \leftarrow \infty$
\For{$k=1$ to $d$}
	\State $\mathcal{S} \leftarrow \mathcal{\tilde{D}}_{r,l}[k][h_k(i)]$
	\Comment{let $c_1$ and $c_2$ be the counters in $\mathcal{S}$}
	\If{$i == c_1.i$}
		\State $answer \leftarrow \Call{min}{answer, c_1.f}$
	\ElsIf{$i == c_2.i$}
			\State $answer \leftarrow \Call{min}{answer, c_2.f}$
		\Else
			\State $m \leftarrow \Call{min}{c_1.f, c_2.f}$
			\State $answer \leftarrow \Call{min}{answer, m}$
		\EndIf
\EndFor
\State \Return $\frac{answer}{g(t-L)}$
\EndProcedure

\caption{QUERY}
\label{query}
\end{algorithmic}
\end{algorithm}

In the \textsc{P2PTFHH} algorithm, all of the peers participate to the distributed gossip protocol for $R$ rounds (\textsc{Gossip} procedure in Algorithm \ref{p2ptfhhalg}). During each round, a peer increments $r$, the  round counter, selects $fo$ (the fan-out) neighbours uniformly at random and sends to each of them its local state in a message of type \textit{push}. Upon receiving a message, each peer executes the \textsc{on\_receive} procedure, whose pseudo-code is shown as Algorithm \ref{p2ptfhhalg}. From the message, the peer extracts the message's type, sender and state sent. A message is processed accordingly to its type as follows. A push message is handled in two steps. In the first one, the peer updates its local state by using the state received; this is done by invoking the \textsc{update} procedure that we shall describe later. In the second one, the peer sends back to the sender, in a message of type \textit{pull}, its updated local state. A pull message is handled by a peer setting its local state equal to the state received.

The \textsc{update} procedure, which is the last procedure described by Algorithm \ref{p2ptfhhalg}, works as follows: the two local sketches $\mathcal{D}_i$ and $\mathcal{D}_j$ belonging respectively to the peers $i$ and $j$ are merged by invoking the \textsc{merge} procedure, producing the new sketch $\mathcal{D}$; since we want to implement a distributed averaging protocol, we scan the counters of each stream summary $\mathcal{S}$ in the cells of $\mathcal{D}$, and for each counter $c$ we update its frequency $c.f$ dividing it by 2; finally, we compute as required by the averaging protocol the estimate $\tilde{q}$ and return the updated state just computed.

Two sketches are merged by the \textsc{merge} procedure (pseudo-code provided as Algorithm \ref{p2ptfhhalg_merge}) as follows: for every corresponding cell in the two sketches to be merged, the hosted Space--Saving summaries are merged following the steps described in \cite{Cafaro-Pulimeno-Tempesta}, i.e., building a temporary summary $\mathcal{S}_C$ consisting of all of the items monitored by both $\mathcal{S}_1$ and $\mathcal{S}_2$. To each item in $\mathcal{S}_C$ is assigned a decayed frequency computed as follows: if an item is present in both $\mathcal{S}_1$ and $\mathcal{S}_2$, its frequency is the sum of its corresponding frequencies in each summary; if the item is present only in one of either $\mathcal{S}_1$ or $\mathcal{S}_2$, its frequency is incremented by the minimum frequency of the other summary. At last, in order to derive the merged summary, we take only the two items in $\mathcal{S}_C$ with the greatest frequencies and discard the others. 

Once $R$ rounds of gossip have been executed, the user can invoke the \textsc{query} procedure (pseudo-code provided as Algorithm \ref{query}) on an arbitrary peer to retrieve the time--faded heavy hitters. This is done by computing $\tau$, a threshold that determines whether an item is a candidate heavy hitter or not, $\tilde{p}_{r,l}$, the estimate of $p$ and $\tilde{C}_{r,l}$, the global decayed count. Note that $\tau$ is defined in terms of $\epsilon^*$, whose meaning shall be explained in Section \ref{gba}. The total decayed count can be computed by summing up the decayed frequencies stored by the Space--Saving counters inside each single cell in a row of the sketch data structure $\mathcal{D}$. It is worth noting here that the decayed frequencies are still non normalized; the normalization occurs dividing by $g(t-L)$, where $t$ is the query time and $L$ denotes a landmark time in the past. The query procedure initializes $H$, an empty set, and then it inspects each of the $d \times w$ cells in the sketch $\mathcal{D}$. For a given cell, we determine $c_m$, the counter in the data structure $\mathcal{S}$ with maximum decayed count. We compare the normalized decayed count stored in $c_m$ with the threshold $\tau =  \phi \tilde{C}_{r,l} \frac{1-\epsilon^*}{1+\epsilon^*}$. If the normalized decayed frequency is greater, we pose a point query for the item $c_m.i$, shown in pseudo-code as the \textsc{PointEstimate} procedure. If the returned value is greater than the threshold $\tau$, then we insert in $H$ the pair $(c_m.i,p)$.

The point query for an item $j$ returns its estimated decayed frequency. After initializing the \textit{answer} variable to infinity, we inspect each of the $d$ cells in which the item is mapped to by the corresponding hash functions, to determine the minimum decayed frequency of the item. In each cell, if the item is stored by one of the Space-Saving counters, we set \textit{answer} to the minimum between \textit{answer} and the corresponding counter's decayed frequency. Otherwise (none of the two counters monitors the item $j$), we set \textit{answer} to the minimum between \textit{answer} and the minimum decayed frequency stored in the counters. Since the frequencies stored in all of the counters of the sketch are not normalized, we return the normalized frequency \textit{answer} dividing by $g(t-L)$. At the end of the query procedure the set $H$ is returned.

\section{Correctness}
\label{correctness}

In this Section, we prove that in our algorithm every peer correctly converges to a suitable sketch data structure that can be queried to solve the Approximate Time--Faded Heavy Hitters problem. We also prove a bound on the frequency estimation error committed by our algorithm. 

Before proceeding with our analysis, we need to recall the results by Jelasity et al. in \cite{Jelasity2005}, related to the convergence factor of the uniform gossiping protocol, and the main properties exhibited by the algorithm \textsc{FDCMSS} and by the procedure for merging \textsc{FDCMSS} sketches \cite{CAFARO2018115}, that we use in Algorithm \ref{p2ptfhhalg_merge}.

\subsection{Jelasity's averaging algorithm}
\label{jelasity}

In the work by Jelasity et al. \cite{Jelasity2005}, the proposed averaging protocol is interpreted as a distributed variance reduction algorithm, fully equivalent to a centralized algorithm which computes the average value among a set of elements. Consider a variance measure $\sigma_r^2$ defined as:

\begin{equation}
\label{eq_sigma2}
\sigma_r^2 = \frac{1}{p-1} \sum_{l=1}^{p}\left(w_{r,l} - \bar{w}\right)^2,
\end{equation}

where $w_{r,l}$ is the value held by the peer $l$ after $r$ rounds of the gossip algorithm and $\bar{w} = \frac{1}{p}\sum_{l=1}^p w_{0,l}$ is the mean of the initial values held by the peers. The authors in \cite{Jelasity2005} state that if $\psi_l$ is a random variable denoting the number of times a node $l$ is chosen as a member of the pair of nodes exchanging their states during a round of the protocol, and each pair of nodes is uncorrelated, then the following theorem holds.

\begin{thm}{\rm \cite{Jelasity2005}}
	\label{Jelasity}
	 If:
	\begin{enumerate}
		\item the random variables $\psi_1, \ldots , \psi_p$ are identically distributed (let $\psi$ denotes a random variable with this common distribution),
		\item after a pair of nodes $(i, j)$ is selected, the number of times the nodes $i$ and $j$ shall be selected again have identical distributions,
	\end{enumerate}
	then:
	\begin{equation}
	\label{eq_conv_factor}
	\mathbb{E}[\sigma_{r+1}^2] \approx \mathbb{E}[2^{-\psi}] \mathbb{E}[\sigma_{r}^2].
	\end{equation}
\end{thm}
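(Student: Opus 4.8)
The plan is to work entirely with the centered deviations $\delta_{r,l} = w_{r,l} - \bar{w}$. First I would observe that a single pairwise averaging conserves the pair's total, $w_i(r+1)+w_j(r+1) = w_i(r)+w_j(r)$, so the global mean $\bar{w}$ is invariant across rounds and $\sum_{l}\delta_{r,l}=0$ holds throughout. Consequently it suffices to control $S_r = \sum_{l=1}^{p}\delta_{r,l}^2 = (p-1)\sigma_r^2$, and the target statement \eqref{eq_conv_factor} is equivalent to $\mathbb{E}[S_{r+1}] \approx \mathbb{E}[2^{-\psi}]\,\mathbb{E}[S_r]$ after dividing by $p-1$.

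The elementary building block is the effect of one exchange. When the pair $(i,j)$ averages, both deviations are replaced by $\tfrac12(\delta_i+\delta_j)$, so the pair's contribution to $S$ changes from $\delta_i^2+\delta_j^2$ to $\tfrac12(\delta_i+\delta_j)^2$, a reduction of $\tfrac12(\delta_i-\delta_j)^2 \ge 0$. The crucial probabilistic step is to pass from this deterministic identity to a per-node multiplicative decay. Using the hypothesis that paired nodes are uncorrelated, $\mathbb{E}[\delta_i\delta_j]=0$, together with the symmetry supplied by assumptions (1)--(2) --- which make the partner of $l$ at any exchange behave, in expectation, like a node carrying the same squared deviation that $l$ currently carries --- I would show that one exchange involving $l$ replaces $\mathbb{E}[\delta_l^2]$ by $\tfrac14\big(\mathbb{E}[\delta_l^2]+\mathbb{E}[\delta_k^2]\big)=\tfrac12\mathbb{E}[\delta_l^2]$. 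In words, each participation of a node halves its expected squared deviation.

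I would then iterate this over the $\psi_l$ exchanges in which node $l$ takes part during the round, obtaining $\mathbb{E}[\delta_{r+1,l}^2 \mid \psi_l] \approx 2^{-\psi_l}\,\delta_{r,l}^2$. Summing over $l$ and invoking the independence of the random selection process (hence of $\psi_l$) from the held values, and then assumption (1) that the $\psi_l$ are identically distributed with common law $\psi$, the factor $\mathbb{E}[2^{-\psi_l}]=\mathbb{E}[2^{-\psi}]$ pulls out of the sum uniformly, yielding $\mathbb{E}[S_{r+1}] \approx \mathbb{E}[2^{-\psi}]\sum_{l}\mathbb{E}[\delta_{r,l}^2] = \mathbb{E}[2^{-\psi}]\,\mathbb{E}[S_r]$, which is the claim.

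I expect the main obstacle to be rigorously justifying the per-exchange halving of $\mathbb{E}[\delta_l^2]$: successive averagings introduce correlations between a node and its former partners, and the start-of-round deviations are not identically distributed across nodes, so the identities $\mathbb{E}[\delta_i\delta_j]=0$ and $\mathbb{E}[\delta_k^2]=\mathbb{E}[\delta_l^2]$ hold only in the aggregate, symmetrized sense guaranteed by assumptions (1)--(2) and the uncorrelated-pairs hypothesis. This is precisely why the conclusion is an approximation ($\approx$) rather than an exact recurrence, and I would be careful to state it as such, attributing the control of the neglected cross-terms to the symmetry assumptions.
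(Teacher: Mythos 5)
The paper does not actually prove this statement: it is imported verbatim from Jelasity et al.\ \cite{Jelasity2005} (the surrounding text explicitly attributes it there), so there is no in-paper proof to compare against. Judged against the argument in that source, your sketch reconstructs the same mechanism: invariance of the mean under pairwise averaging, reduction of a pair's squared deviations by $\tfrac12(\delta_i-\delta_j)^2$ per exchange, cancellation of cross terms via the uncorrelated-pairs hypothesis, attenuation by $2^{-\psi_l}$ after $\psi_l$ participations, and finally assumption (1) to pull out the common factor $\mathbb{E}[2^{-\psi}]$. The one step I would push back on is the per-node halving $\mathbb{E}[\delta_{r+1,l}^2]\approx\tfrac12\,\mathbb{E}[\delta_{r,l}^2]$ per exchange: as stated it needs $\mathbb{E}[\delta_k^2]=\mathbb{E}[\delta_l^2]$ for the partner $k$, which assumptions (1)--(2) do not supply --- they constrain the distribution of the selection counts, not of the held values, and at the start of a round the deviations are essentially deterministic and unequal across nodes. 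The cleaner route, and the one effectively taken in \cite{Jelasity2005}, is to do the bookkeeping in aggregate: a single exchange gives exactly $\mathbb{E}[S'] = \mathbb{E}[S] - \tfrac12\bigl(\mathbb{E}[\delta_i^2]+\mathbb{E}[\delta_j^2]\bigr)$ once the cross term vanishes, with no equidistribution of values needed, and assumption (2) is precisely what lets the induction over successive exchanges attach the factor $2^{-\psi_l}$ to the \emph{original} mass $\delta_{r,l}^2$ even after that mass has been split among several nodes. You correctly identify this symmetrization as the soft spot and as the source of the $\approx$; with the aggregate accounting made explicit, your argument matches the original at the same (heuristic) level of rigor, so I would call this an acceptable reconstruction rather than a gap.
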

\vspace{2mm}

The random variable $\psi$ only depends on how the pair of nodes are selected. From equation \eqref{eq_conv_factor}, the convergence factor is defined as:

\begin{equation}
\frac{\mathbb{E}[\sigma_{r+1}^2]}{ \mathbb{E}[\sigma_{r}^2]} = \mathbb{E}[2^{-\psi}];
\end{equation}

Therefore, the convergence factor depends on $\psi$ and, as a consequence, on the pair selection method. Jelasity et al. compute the convergence factor for different methods of pair selection; we are only interested in the one which allows simulating the distributed gossip--based averaging protocol. This method consists in drawing a random permutation of the nodes and then, for each node in that permutation, choosing another random node in order to form a pair. For this selection method, the convergence factor is $\gamma = \mathbb{E}[2^{-\psi}] = 1/(2\sqrt{e})$.

We now derive from theorem~\ref{Jelasity} the following proposition.

\begin{prop}
	\label{prop1}
	Let $\delta_g$ be a user-defined probability, $w_{r,l}$ the value held by peer $l$ after $r$ rounds of the averaging protocol, $p$ the number of peers participating in the protocol, $\gamma = \mathbb{E}[2^{-\psi}] = 1/(2\sqrt{e})$ the convergence factor, $\bar{w}$ the mean of the initial vector of values $\boldsymbol{w}_0$, i.e. $\bar{w} = 1/p \sum_{l=1}^{p} w_{0,l}$ and $\bar{\epsilon}=\sqrt{\frac{\gamma^r }{\delta_g}}$ the error factor. Then, with probability greater than or equal to $1 - \delta_g$ it holds that, for any peer $l$:
	
	\begin{equation}
	\label{eq_error_due_to_gossip}
	\left|w_{r,l} - \bar{w}\right| < \sqrt{(p-1) \sigma_0^2} \bar{\epsilon} < p\bar{w}\bar{\epsilon}
	\end{equation}
\end{prop}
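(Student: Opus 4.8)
The plan is to turn the expected--variance statement of Theorem~\ref{Jelasity} into a high--probability bound on the deviation of a single peer's value, and then to bound the initial variance $\sigma_0^2$ explicitly by elementary algebra. First I would iterate the recurrence \eqref{eq_conv_factor}. Since the initial values are deterministic we have $\mathbb{E}[\sigma_0^2] = \sigma_0^2$, so applying Theorem~\ref{Jelasity} repeatedly over $r$ rounds gives $\mathbb{E}[\sigma_r^2] \approx \gamma^r \sigma_0^2$, with $\gamma = \mathbb{E}[2^{-\psi}]$ the convergence factor.

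Next I would pass from the expectation to a tail bound. Because $\sigma_r^2$ is a non--negative random variable, Markov's inequality gives $\Pr\!\left(\sigma_r^2 \geq \mathbb{E}[\sigma_r^2]/\delta_g\right) \leq \delta_g$, hence with probability at least $1-\delta_g$ we have $\sigma_r^2 < \gamma^r \sigma_0^2 / \delta_g$. The deterministic observation that closes this half is that, for any fixed peer $l$, a single summand never exceeds the whole sum defining $\sigma_r^2$, that is $(w_{r,l}-\bar{w})^2 \leq \sum_{k=1}^{p}(w_{r,k}-\bar{w})^2 = (p-1)\sigma_r^2$. Chaining the two facts and taking square roots yields $\left|w_{r,l}-\bar{w}\right| < \sqrt{(p-1)\sigma_0^2}\,\sqrt{\gamma^r/\delta_g} = \sqrt{(p-1)\sigma_0^2}\,\bar{\epsilon}$, which is exactly the first inequality.

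For the second inequality it suffices to show $\sqrt{(p-1)\sigma_0^2} < p\bar{w}$, equivalently $(p-1)\sigma_0^2 = \sum_{l=1}^{p}(w_{0,l}-\bar{w})^2 < p^2\bar{w}^2$. Expanding the sum and using $\sum_{l=1}^{p} w_{0,l} = p\bar{w}$ gives $\sum_{l=1}^{p}(w_{0,l}-\bar{w})^2 = \sum_{l=1}^{p} w_{0,l}^2 - p\bar{w}^2$. Here I would invoke the non--negativity of the initial values (they are decayed counts, or the $0/1$ peer--count seeds), which forces the sum of squares below the square of the sum, $\sum_{l=1}^{p} w_{0,l}^2 \leq \left(\sum_{l=1}^{p} w_{0,l}\right)^2 = p^2\bar{w}^2$, since all cross terms are non--negative. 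Substituting gives $\sum_{l=1}^{p}(w_{0,l}-\bar{w})^2 \leq p^2\bar{w}^2 - p\bar{w}^2 < p^2\bar{w}^2$, and since at least one seed is strictly positive we have $\bar{w} > 0$, which legitimises taking the square root and completes the bound.

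The delicate points are interpretive rather than computational. The $\approx$ in Theorem~\ref{Jelasity} must be treated as an equality for the expected--variance recurrence, as in \cite{Jelasity2005}; and the second inequality genuinely depends on the non--negativity of the quantities being averaged, which I would flag explicitly as a hypothesis met by our data. The Markov step is the real engine converting a statement about the mean of $\sigma_r^2$ into the per--peer, high--probability guarantee the proposition asserts, while the remaining manipulations are routine.
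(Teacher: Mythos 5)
Your proposal is correct and follows essentially the same route as the paper's proof: iterate the variance recurrence to get $\mathbb{E}[\sigma_r^2] = \gamma^r\sigma_0^2$, apply Markov's inequality to obtain the high-probability bound on $\sigma_r^2$, and bound the single summand $(w_{r,l}-\bar{w})^2$ by the full sum $(p-1)\sigma_r^2$. The only divergence is in the final step, where the paper simply asserts that the worst case is one peer holding the entire mass $p\bar{w}$ (giving $\sigma_0^2 \leq p\bar{w}^2$), whereas you derive the same bound algebraically from the non-negativity of the initial values via $\sum_l w_{0,l}^2 \leq \bigl(\sum_l w_{0,l}\bigr)^2$ — a slightly more rigorous justification that also makes explicit the hypotheses ($w_{0,l}\geq 0$ and $\bar{w}>0$) on which the second inequality genuinely depends.
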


\begin{proof}
	
	From equation \eqref{eq_conv_factor} it follows that:
	
	\begin{equation}
	\label{eq_expected_sigma2}
	\mathbb{E}[\sigma_{r}^2] = \mathbb{E}[2^{-\psi}]^r \sigma_0^2;
	\end{equation}
	
	where $\sigma_0^2$ depends on the distribution of the initial numbers among the peers. By the Markov inequality, it holds that:
	
	\begin{equation}
	\mathbb{P}[\sigma_{r}^2 \ge \frac{\mathbb{E}[\sigma_{r}^2]}{\delta_g}] \le  \delta_g;
	\end{equation}
	
	\noindent or
	
	\begin{equation}
	\mathbb{P}[\sigma_{r}^2 < \frac{\mathbb{E}[\sigma_{r}^2]}{\delta_g}] \ge  1 - \delta_g.
	\end{equation}
	
	Considering eqs. \eqref{eq_sigma2} and \eqref{eq_expected_sigma2}, it follows that:
	
	\begin{equation}
	\mathbb{P}[\sum_{l=1}^{p}\left(w_{r,l} - \bar{w}\right)^2 < (p-1) \frac{\gamma^r \sigma_0^2}{\delta_g}] \ge  1 - \delta_g.
	\end{equation}
	
	As a consequence, with probability at least $1-\delta_g$:
	
	\begin{equation}
	max_{l \in [p]} \left(w_{r,l} - \bar{w}\right)^2 \le \sum_{l=1}^{p}\left(w_{r,l} - \bar{w}\right)^2 < (p-1) \frac{\gamma^r \sigma_0^2}{\delta_g},
	\end{equation}
	
	\noindent which implies:
	
	\begin{equation}
	max_{l \in [p]} \left|w_{r,l} - \bar{w}\right| < \sqrt{(p-1)\sigma_0^2} \sqrt{\frac{\gamma^r }{\delta_g}}.
	\end{equation}
	
	The initial distribution $\sigma_0^2$ of the elements over the peers typically is not known in advance, but the worst case happens when only one peer has the whole quantity $p\bar{w}$ and the other $p-1$ peers hold the value $0$. In this case, it follows that  $\sigma_0^2 \leq p\bar{w}^2$, and hence, with probability $1-\delta_g$:
	
	\begin{equation}
	max_{l \in [p]} \left|w_{r,l} - \bar{w}\right| < \sqrt{(p-1)\sigma_0^2} \sqrt{\frac{\gamma^r }{\delta_g}} < p\bar{w} \bar{\epsilon}.
	\end{equation}
	
	This proves the proposition.	
\end{proof}

Equation \eqref{eq_error_due_to_gossip} gives an upper bound on the error made by any peer estimating the value $\bar{w}$ after $r$ rounds of the Jelasity's averaging algorithm. This bound is probabilistic and valid with probability greater than or equal to $1-\delta_g$. 

\subsection{\textsc{FDCMSS} algorithm and merging of augmented sketches}
\label{merge_section}
We recall here the main results related to the theoretical properties of the \textsc{FDCMSS} algorithm. 
Cafaro et al. in \cite{Cafaro-Pulimeno-Epicoco-Aloisio} proved that \textsc{FDCMSS} solves the  Approximate Time--Faded Heavy Hitters problem and give a bound of its frequency estimation error.
In particular, the authors state that, with high probability, if a time--faded item is frequent, then, in at least one of the \textsc{FDCMSS} sketch cells where it is mapped to by the hash functions, it is a majority item with regard to the sub-stream of items falling in the same cell. Therefore, \textsc{FDCMSS} will detect it. 

The following theorem holds, being $d \times w$ the dimensions of the sketch and $\phi$ the frequent threshold.

\begin{thm}
	\label{fdcmss-correct}
	If an item $i$ is frequent, then it appears as a majority item candidate in at least one of the $d$ cells in which it falls, with probability greater than or equal to $1 - (\frac{1}{2 \phi w})^d$. 
\end{thm}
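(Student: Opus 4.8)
The plan is to reduce the global claim to an independent per--row analysis, exploiting that the only randomness lies in the choice of the $d$ pairwise independent hash functions $h_1,\ldots,h_d$, while the forward--decayed weights of the stream items are fixed data. Fix a frequent item $i$, so that $f_i > \phi C$ by Problem~\ref{prob1}. For each row $k$ the item is routed to cell $(k,h_k(i))$, and I would write $N_k = \sum_{v\ne i,\ h_k(v)=h_k(i)} f_v$ for the total decayed frequency of the \emph{other} items colliding with $i$ in that row, so that the cell's total decayed count is $C_{k,h_k(i)} = f_i + N_k$. The target then splits into two parts: a deterministic step showing that a true majority of a cell is necessarily reported as that cell's candidate, and a probabilistic step bounding, row by row, the chance that $i$ fails to be such a majority.

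First I would establish the deterministic \emph{majority step}: in a single cell, if $f_i > \tfrac12 C_{k,h_k(i)}$ then $i$ is exactly the majority item candidate produced by the two--counter Space--Saving summary. This follows from \eqref{ss5} specialised to $k=2$, which gives $\hat f^{min}\le \lfloor C_{k,h_k(i)}/2\rfloor$: were $i$ unmonitored, its frequency would be at most $\hat f^{min}\le C_{k,h_k(i)}/2 < f_i$, a contradiction, so $i$ is monitored; and since Space--Saving never underestimates, the counter holding $i$ has estimate at least $f_i > C_{k,h_k(i)}/2$, hence strictly exceeds the other counter, because by \eqref{ss1} the two counters sum to $C_{k,h_k(i)}$. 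Thus $i$ attains the maximum estimated decayed count and is the candidate. This reduces the event ``$i$ is a candidate in row $k$'' to the purely combinatorial event $N_k < f_i$.

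Next I would bound the per--row failure probability. By linearity of expectation and the uniform marginals of a pairwise independent hash, $\Pr[h_k(v)=h_k(i)]=1/w$, whence $\mathbb{E}[N_k]=\tfrac1w\sum_{v\ne i}f_v=\tfrac{C-f_i}{w}\le \tfrac{C}{w}$. Since row $k$ fails only when $N_k\ge f_i$, Markov's inequality gives $\Pr[N_k\ge f_i]\le \mathbb{E}[N_k]/f_i \le (C/w)/f_i < 1/(\phi w)$, using $f_i>\phi C$. Sharpening this constant to the stated $1/(2\phi w)$ is the delicate endpoint discussed below; granting it, and using that the $d$ hash functions are drawn independently so that the $d$ collision events are independent, the probability that $i$ fails in \emph{every} one of its $d$ cells is at most $(1/(2\phi w))^d$. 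Taking the complement yields the claimed bound $1-(1/(2\phi w))^d$.

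The main obstacle I expect is precisely this factor of two: a crude application of Markov to the collision noise $N_k$ delivers $1/(\phi w)$, and recovering the sharper $1/(2\phi w)$ requires a more careful accounting of the majority threshold (the cell total must reach \emph{twice} $f_i$ before $i$ loses candidacy), for which I would follow the original \textsc{FDCMSS} analysis of \cite{Cafaro-Pulimeno-Epicoco-Aloisio} from which this theorem is recalled. A secondary point needing care is that the majority step must be justified for real--valued forward--decayed weights, and not merely for integer counts on a single pass; here I would note that relations \eqref{ss1} and \eqref{ss5} hold verbatim for weighted Space--Saving, so the argument transfers unchanged.
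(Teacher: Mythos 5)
The paper states this theorem without proof, recalling it from \cite{Cafaro-Pulimeno-Epicoco-Aloisio}; the closest argument actually written out here is the analogous per--cell analysis inside the proof of Theorem~\ref{thm_falsenegative}. Your skeleton matches that argument: the deterministic step (a true cell--majority item must be monitored and must hold the larger of the two counters, via \eqref{ss1} and \eqref{ss5} with $k=2$) is correct and is exactly the reduction the paper uses, and the final step of raising a per--row failure bound to the power $d$ is also the same.

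The genuine gap is the one you flag yourself: your Markov application gives only $1/(\phi w)$ per row, and you defer the factor of two to the cited reference. That factor is the entire quantitative content of the theorem, so as written the proposal proves the weaker bound $1-(1/(\phi w))^d$, not the stated one. For the record, the paper closes this step differently from your decomposition $S = f_i + N_k$: in the proof of Theorem~\ref{thm_falsenegative} the row--$k$ failure event is written as $\hat f_{i,k}\le \hat f^{min}\le S_{i,j}/2$, hence $S_{i,j}\ge 2\hat f_{i,k}\ge 2f_i>2\phi C$, and Markov is applied to the \emph{whole} cell count $S_{i,j}$ against the doubled threshold $2\hat f_{i,k}$ using $\mathbb{E}[S_{i,j}]=C/w$; the $2$ in the denominator comes from that doubled threshold. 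Note that this is not a harmless reformulation of your computation: since item $i$ contributes $f_i$ to its own cell deterministically, $S_{i,j}\ge 2f_i$ is the same event as $N_k\ge f_i$, and an expectation that accounts for this conditioning is $f_i+(C-f_i)/w$ rather than $C/w$. Your honest bound of $1/(\phi w)$ is what Markov yields on that event; the sharper constant in the paper rests on using the unconditional mean $C/w$ for the cell containing $i$. So you have correctly located the delicate point, but the proof is incomplete without either reproducing that step (and deciding whether to accept the unconditional--mean estimate) or finding an alternative argument that legitimately recovers the factor of two.
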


Moreover, regarding the error bound on frequency estimation of the \textsc{FDCMSS} algorithm, if $f_i$ is the exact decayed frequency of item $i$ in the stream $\sigma$ and $\hat{f}_i$ is the estimated decayed frequency of item $i$ returned by \textsc{FDCMSS} and $C$ is the total decayed count of all of the items in the input stream, then, the authors proved that the following error bound holds:

\begin{thm}
	\label{error-bound}
	Denoting by $C$ the total decayed count, and by $d$ and $w$ the number of rows and columns of the augmented sketch data structure, then $\forall i \in [m]$, $\hat{f}_i$ estimates the exact decayed count $f_i$ of item $i$ at query time  with error less than $eC/2w$ and probability greater than $1-e^{-d}$, i.e.:
	
	\begin{equation}
	f_i \le \hat{f}_i \le f_i + \frac{eC}{2w};
	\end{equation}
	
\end{thm}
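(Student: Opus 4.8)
The plan is to prove the two inequalities separately: the lower bound $f_i \le \hat{f}_i$ holds deterministically, while the upper bound $\hat{f}_i \le f_i + eC/2w$ holds with probability at least $1-e^{-d}$ and requires a Markov argument over the $d$ rows of the sketch. Throughout, I would write $\hat{f}_i^{(k)}$ for the value contributed by row $k$ in \textsc{PointEstimate} (so that $\hat{f}_i = \min_{1\le k\le d}\hat{f}_i^{(k)}$), let $C_k$ denote the total decayed count of all items hashing to the cell $(k,h_k(i))$, and set $X_k = C_k - f_i$, the decayed count contributed by the items \emph{other} than $i$ that collide with $i$ in that cell.

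For the lower bound I would argue cell by cell. Since every occurrence of $i$ is mapped by $h_k$ into the single cell $(k,h_k(i))$, the true decayed frequency of $i$ restricted to that sub-stream is exactly $f_i$. If $i$ is monitored in the cell, the Space-Saving estimate never underestimates, so $\hat{f}_i^{(k)}\ge f_i$; if $i$ is not monitored, \textsc{PointEstimate} returns $\hat{f}^{min}_k$, and a non-monitored item can have true frequency at most $\hat{f}^{min}_k$, so again $\hat{f}_i^{(k)}\ge f_i$. Taking the minimum over rows preserves the inequality, giving $\hat{f}_i\ge f_i$.

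The heart of the proof is a per-row bound showing that the two-counter Space-Saving summary halves the Count-Min collision error, namely $\hat{f}_i^{(k)}-f_i\le X_k/2$. The key fact is that the two counters in a cell sum exactly to $C_k = f_i + X_k$. If $i$ is monitored, say by counter $c_1$, then $\hat{f}_i^{(k)} = c_1.f$ and, by \eqref{ss5}, the overestimate is bounded by the minimum counter and hence by the companion counter: $\hat{f}_i^{(k)}-f_i \le c_2.f = C_k - \hat{f}_i^{(k)}$; rearranging yields $\hat{f}_i^{(k)}-f_i \le X_k/2$. If $i$ is not monitored, then $\hat{f}_i^{(k)} = \hat{f}^{min}_k \le C_k/2 = (f_i+X_k)/2$, so $\hat{f}_i^{(k)}-f_i \le (X_k-f_i)/2 \le X_k/2$. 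In both cases the per-row error is at most half the colliding mass, which is precisely the source of the improved constant $e/2w$ over the $e/w$ of ordinary Count-Min.

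From here the argument is the classical Count-Min tail bound. By pairwise independence of the $h_k$, $\mathbb{E}[X_k] = \sum_{j\ne i} f_j / w = (C-f_i)/w \le C/w$, so $\mathbb{E}[\hat{f}_i^{(k)}-f_i] \le C/2w$. Since the event $\{\hat{f}_i^{(k)}-f_i \ge eC/2w\}$ is contained in $\{X_k \ge eC/w\}$, Markov's inequality gives $\mathbb{P}[\hat{f}_i^{(k)}-f_i \ge eC/2w] \le (C/w)/(eC/w) = 1/e$; because $\hat{f}_i-f_i \ge eC/2w$ forces this to occur in all $d$ independently chosen rows, the probability is at most $e^{-d}$. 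Combining with the deterministic lower bound proves the two-sided statement. I expect the per-row halving lemma to be the main obstacle: one must treat the monitored and non-monitored cases separately and argue carefully that \eqref{ss5} lets the overestimate be charged against the companion counter, whose value equals $C_k$ minus $i$'s own estimate. This is exactly where the two-counter structure, rather than a single Count-Min counter, earns the factor of two.
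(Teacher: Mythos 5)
Your proof is correct, but note that the paper itself does not prove Theorem~\ref{error-bound} at all: it is recalled verbatim from the \textsc{FDCMSS} paper \cite{Cafaro-Pulimeno-Epicoco-Aloisio}, so there is no in-paper argument to compare against. Your reconstruction follows the standard Count--Min-with-Space-Saving analysis that the cited work uses, and the per-row halving lemma $\hat{f}_i^{(k)}-f_i\le X_k/2$ is exactly the step that earns the constant $e/2w$ in place of Count--Min's $e/w$. One refinement in your version is worth pointing out: you charge the per-row error to the colliding mass $X_k=C_k-f_i$ (whose expectation is $(C-f_i)/w\le C/w$ by pairwise independence) rather than to the full cell count $C_k$, whose expectation is $f_i+(C-f_i)/w$ and would not directly yield the $1/e$ Markov bound; the algebraic rearrangement $\hat{f}_i^{(k)}-f_i\le c_2.f=C_k-\hat{f}_i^{(k)}\Rightarrow\hat{f}_i^{(k)}\le f_i+X_k/2$ is the right way to extract the factor of two, and your non-monitored case via $\hat{f}^{min}_k\le C_k/2$ is handled correctly as well. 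Two small points you should make explicit if you write this up: (i) the Space-Saving identities \eqref{ss1} and \eqref{ss5} are stated in the paper for unit-increment streams, whereas here each update adds a real-valued decayed weight $g(t_i-L)$, so you need the weighted version of those guarantees (which does hold and is what \textsc{FDCMSS} relies on); and (ii) the final $e^{-d}$ step requires the $d$ hash functions to be chosen mutually independently across rows, which is the standard Count--Min assumption but is not literally implied by the phrase ``pairwise independent'' alone.
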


\textsc{P2PTFHH} follows the same structure of the gossip--based averaging protocol by Jelasity et al., where, instead of single values, two nodes communicating provide two sketches to be merged and \emph{averaged}. The merge operation refers to the procedure introduced by Cafaro et al. in \cite{CAFARO2018115}, where the authors proved that the merge procedure of two augmented sketch data structures preserves all of the properties of the sketch, including the fact that an \textsc{FDCMSS} sketch is 1-norm equivalent to a classical Count-Min sketch for the same input stream, i.e. the value of a cell in the Count-Min sketch is equal to the sum of the Space-Saving counters in the same cell of the \textsc{FDCMSS} sketch.

Therefore, if $\mathcal{D}_1$ and $\mathcal{D}_2$ are \textsc{FDCMSS} sketches which respectively refer to the streams $\mathcal{N}_1$ and $\mathcal{N}_2$, then the sketch  $\mathcal{D} = \mathcal{D}_1 \oplus \mathcal{D}_2$ is an \textsc{FDCMSS} sketch for the stream $\mathcal{N} = \mathcal{N}_1 \uplus \mathcal{N}_2$, where we have used the symbol $\oplus$ to indicate the merge operation.

\subsection{Convergence of \textsc{P2PTFHH}}
\label{convergence}

In order to prove the convergence of \textsc{P2PTFHH}, we shall use multisets to represent the input streams and introduce some operations on multisets and sketches in order to replicate the average pattern of Jelasty et al. algorithm. 

\begin{defn}
	\label{multiset}
	A multiset $\mathcal{N}=(N, f_{\mathcal{N}})$ is a pair where $N$ is some set, called the underlying set of elements, and $f_{\mathcal{N}}: N \rightarrow \mathbb{N}$ is a function.
	The generalized indicator function of $\mathcal{N}$ is 
	\begin{equation}
	I_\mathcal{N} (x) := \left\{ {\begin{array}{*{20}c}
		{f_{\mathcal{N}}(x)} & {x \in N} , \\
		0 & {x \notin N},  \\
		\end{array} }\right.
	\end{equation}
	
	\noindent where the integer--valued function $f_{\mathcal{N}}$, for each $x \in N$, provides its  \textit{frequency} (or multiplicity), i.e., the number of occurrences of $x$ in $\mathcal{N}$. 
	\noi The cardinality of $\mathcal{N}$ is expressed by

	\begin{equation}
	\left\vert{\mathcal{N}}\right\vert := Card(\mathcal{N}) = \sum\limits_{x \in N} {I_\mathcal{N} (x)},
	\end{equation}
	
	\noi whilst the cardinality of the underlying set $N$ is
	
	\begin{equation}
	\left\vert{N}\right\vert := Card(N) = \sum\limits_{x \in N} {1}.
	\end{equation}
\end{defn}

A multiset, or bag,  is defined by a proper set (the support set) and a multiplicity function: it is a set where elements can be repeated, i.e., an element in a multiset can have multiplicity greater than one. 

For our specific application, we relax the above definition of multiset, allowing the multiplicity function $f_N$ to assume real values.
Let $\mathcal{M}$ be the class of all of the multisets with support set included in $ \mathcal{U}$. 
We introduce the operation $\oslash_v: \mathcal{M} \rightarrow \mathcal{M}$, so that $\oslash_v(\mathcal{N}) = (N, f_N/v))$, i.e, the multiset $\oslash_v(\mathcal{N})$ has the same support set of $\mathcal{N}$, but each  element has a fraction $1/v$ of the multiplicity it has in $\mathcal{N}$. When $v \in \mathbb{N}$, we have that $\biguplus_{i=1}^v\oslash_v(\mathcal{N}) =  \oslash_{\frac{1}{v}}(\oslash_v(\mathcal{N})) = \mathcal{N}$.

We can define a similar operation for an augmented \textsc{FDCMSS} sketch $\mathcal{D}$ and, by abusing the same symbol, say that $\oslash_v(\mathcal{D}) = \mathcal{D'}$, where $\mathcal{D'}$ is obtained from $\mathcal{D}$ by dividing all of the frequencies' values in $\mathcal{D}$ by $v$.

It is immediate to see that if $\mathcal{D}$ is an augmented sketch for $\mathcal{N}$, then $\oslash_v(\mathcal{D})$ is a augmented sketch for $\oslash_v(\mathcal{N})$ and theorems \ref{fdcmss-correct} and \ref{error-bound} continue to hold. 

Algorithm~\ref{algo1}, called \textsc{AVG-Merge}, is a simplified global version of the distributed \textsc{P2PTFHH} that simulates the distributed averaging protocol. \textsc{AVG-Merge} operates on the global state of the network by simulating the distributed \textsc{P2PTFHH} protocol and allowing us to simplify its theoretical analysis. \textsc{GetPair}, not reported here, is a procedure which selects a pair of nodes and meets the properties requested in theorem \ref{Jelasity}. 

\begin{algorithm}
	\begin{algorithmic}
		\caption{\textsc{AVG-Merge}: global \textsc{FDCMSS} sketches average merging}
		\label{algo1}
		\Require{$\boldsymbol{\mathcal{D}}_r = (\mathcal{D}_{r,1}, \mathcal{D}_{r,2}, \ldots, \mathcal{D}_{r,p})$, the vector of \textsc{FDCMSS} sketches at round $r$; $p$, the number of peers.}
		\Ensure{$\boldsymbol{\mathcal{D}}_{r+1}$, the vector of \textsc{FDCMSS} sketches updated for round $r+1$.}
		\Procedure{AVG-Merge}{$\boldsymbol{\mathcal{D}}_r, p$}
		\State $l \leftarrow 0$
		\While{$l < p$}
		\State $(i, j) \leftarrow $ \Call{GetPair}{\ } 	
		\State $\mathcal{D}_{r,i} \leftarrow \mathcal{D}_{r,j} \leftarrow  \oslash_2(\mathcal{D}_{r,i} \oplus \mathcal{D}_{r,j})$	
		\State $l \leftarrow l + 1$
		\EndWhile
		\State $\boldsymbol{\mathcal{D}}_{r+1} \leftarrow \boldsymbol{\mathcal{D}}_r$
		\State \Return $\boldsymbol{\mathcal{D}}_{r+1}$ 
		\EndProcedure
		
	\end{algorithmic}
\end{algorithm}

\vspace{4mm}

Initially, each peer computes a local augmented sketch on its input stream, through the execution of the \textsc{FDCMSS} algorithm, then the distributed protocol starts. The initial distributed state of the system can be represented by the vector of the local augmented sketches ${\boldsymbol{\mathcal{D}}_0 = (\mathcal{D}_{0,1},\mathcal{D}_{0,2},\ldots,\mathcal{D}_{0,p})}$, where $p$ is the number of peers participating in the protocol. Another vector is naturally associated to $\boldsymbol{\mathcal{D}}_0$: the vector of the local input streams $\boldsymbol{\mathcal{N}}_0 = (\mathcal{N}_{0,1}, \mathcal{N}_{0,2}, \ldots, \mathcal{N}_{0,p})$. We have that $\biguplus_{l = 1}^p \mathcal{N}_{0,l} = \mathcal{N}$, where we denote by $\mathcal{N}$ the global input stream.

Each call to \textsc{AVG-Merge} corresponds to a round of \textsc{P2PTFHH}. It modifies $\boldsymbol{\mathcal{D}}_r$, the vector of the local sketches held by the peers at the end of round $r$, producing the vector $\boldsymbol{\mathcal{D}}_{r+1}$. Furthermore,  implicitly also $\boldsymbol{\mathcal{N}}_{r}$, the vector of local input streams to which the sketches refer, changes to $\boldsymbol{\mathcal{N}}_{r+1}$. Indeed, let $\boldsymbol{\mathcal{D}}_r$ and $\boldsymbol{\mathcal{N}}_r$ be the vectors of the sketches owned by the peers and the corresponding partition of the input stream $\mathcal{N}$ after the $r$th round. Then, after each iteration of the main loop of \textsc{AVG-Merge}, letting $(i, j)$ be the pair of communicating peers, i.e. the pair selected by \textsc{GetPair}, the vector of sketches becomes: 

\begin{equation}
\boldsymbol{\mathcal{D}}_{r}' = (\mathcal{D}_{r,1}, \mathcal{D}_{r,2}, \ldots, \oslash_2(\mathcal{D}_{r,i} \oplus \mathcal{D}_{r,j}), \ldots, \oslash_2(\mathcal{D}_{r,i} \oplus \mathcal{D}_{r,j}), \ldots, \mathcal{D}_{r,p}),
\end{equation}

\noindent and the corresponding vector of partitions of the input stream shall change to:

\begin{equation}
\label{elem-it-msets}
\boldsymbol{\mathcal{N}}_{r}' = (\mathcal{N}_{r,1}, \mathcal{N}_{r,2}, \ldots, \oslash_2(\mathcal{N}_{r,i} \uplus \mathcal{N}_{r,j}), \ldots, \oslash_2(\mathcal{N}_{r,i} \uplus \mathcal{N}_{r,j}), \ldots, \mathcal{N}_{r,p}).
\end{equation}

From what we said on the operations $\oplus$ and $\oslash$, after each elementary iteration of \textsc{AVG-Merge}, two invariants hold: 
\begin{enumerate}
	\item each peer $l$ owns a sketch $\mathcal{D}_{r,l}$ which is a correct augmented sketch data structure for the portion of input stream $\mathcal{N}_{r,l}$;
	\item  $\biguplus_{l = 1}^p \mathcal{N}_{r,l} = \mathcal{N}$.
\end{enumerate}

These invariants remain true after each iteration of the main loop of \textsc{AVG-Merge} and, consequently, after each call to \textsc{AVG-Merge}, that is after each round of the \textsc{P2PTFHH} distributed protocol, when we derive from the vectors $\boldsymbol{\mathcal{D}}_r$ and $\boldsymbol{\mathcal{N}}_r$, the new vectors $\boldsymbol{\mathcal{D}}_{r+1}$ and $\boldsymbol{\mathcal{N}}_{r+1}$. 

We can state that, for $r \rightarrow \infty$, the two vectors $\boldsymbol{\mathcal{D}}_r$ and $\boldsymbol{\mathcal{N}}_r$ converge respectively to:

\begin{equation}
\boldsymbol{\mathcal{D}}_{\infty} = \left(\mathcal{D}_{\rm avg}, \mathcal{D}_{\rm avg}, \ldots, \mathcal{D}_{\rm avg}\right)
\end{equation}

\noindent and 

\begin{equation}
\label{conv-msets}
\boldsymbol{\mathcal{N}}_{\infty} = \left(\mathcal{N}_{\rm avg}, \mathcal{N}_{\rm avg}, \ldots, \mathcal{N}_{\rm avg}\right),
\end{equation}

\noindent where $\mathcal{N}_{\rm avg} = \oslash_p(\mathcal{N})$ and $\mathcal{D}_{\rm avg}$ is a correct \textsc{FDCMSS} sketch for $\mathcal{N}_{\rm avg}$. 

This means that all of the peers converge to a sketch of $\oslash_p(\mathcal{N})$, from which, for the properties of the operations $\oplus$ and $\oslash$, a correct sketch for $\mathcal{N}$ can be derived by computing $\oslash_\frac{1}{p}(\mathcal{D}_{\rm avg})$ (we actually need to know the number of peers, which is not always the case, but we shall see in the following how we can estimate $p$), i.e. \textsc{P2PTFHH} converges. 

Thanks to the invariants discussed above, in order to prove the convergence of the local sketches to $\mathcal{D}_{\rm avg}$,  it is enough to verify that the  local input streams implicitly induced by the algorithm converge to $\mathcal{N}_{\rm avg}$.

We can represent each initial local input stream $\mathcal{N}_{0,l}$ for $l = 1,2, \ldots, p$, as the frequencies' vector of the items in that stream, $\boldsymbol{\tilde{f}}_{0,l} = (\tilde{f}_{0,l,1}, \tilde{f}_{0,l,2}, \ldots, \tilde{f}_{0, l,m})$. Each value $\tilde{f}_{0,l,i}$ corresponds to the frequency that item $i$ has in the initial local stream held by peer $l$. In this representation the operator $\oslash_p$ on a multiset translates to a multiplication of the frequencies' vector corresponding to that multiset by the scalar $1/p$. 

Now, the implicit transformation that the local streams of the selected peers, $i$ and $j$, undergo at each elementary iteration of \textsc{AVG-Merge}, i.e., equation \eqref{elem-it-msets}, can be rewritten as:

\begin{equation}
\label{elem-it-fvecs}
\boldsymbol{\tilde{F}}_r' = (\boldsymbol{\tilde{f}}_{r,1}, \boldsymbol{\tilde{f}}_{r,2}, \ldots, \frac{1}{2} (\boldsymbol{\tilde{f}}_{r,i} + \boldsymbol{\tilde{f}}_{r,j}), \ldots, \frac{1}{2} (\boldsymbol{\tilde{f}}_{r,i} + \boldsymbol{\tilde{f}}_{r,j}), \ldots, \boldsymbol{\tilde{f}}_{r,p}),
\end{equation}

\noindent where $\boldsymbol{\tilde{F}}_r$ is a matrix whose columns are the peers' vectors of frequencies after $r$ rounds, i.e. each $\boldsymbol{\tilde{f}}_{r,l}$ is the frequencies' vector corresponding to the multiset  ${\mathcal{N}}_{r,l}$. This matrix corresponds to the vector of multisets $\boldsymbol{\mathcal{N}}_r$ in equation \eqref{elem-it-msets}.

Eventually, it can be recognized in equation \eqref{elem-it-fvecs} the elementary step of the Jelasity's protocol applied componentwise to the frequencies' vectors of peers $i$ and $j$. We already know that the Jelasity's averaging protocol converges to the average of the values initially owned by the peers. Thus, for $r  \rightarrow \infty$,  $\boldsymbol{\tilde{F}}_r$ converges to: 

\begin{equation}
\boldsymbol{\tilde{F}}_\infty = (\boldsymbol{f}_{\rm avg}, \boldsymbol{f}_{\rm avg}, \ldots, \boldsymbol{f}_{\rm avg})
\end{equation}

\noindent where $\boldsymbol{f}_{\rm avg}$ is:

\begin{equation}
\boldsymbol{f}_{\rm avg} = (\bar{f}_1, \bar{f}_2, \ldots, \bar{f}_m),
\end{equation}

\noindent with $\bar{f}_i = \frac{1}{p} \sum_{l=1}^{p} \tilde{f}_{0,l,i}, \text{ for } i = 1,2, \ldots, m$ which is the representation as frequencies' vector of the multiset $\mathcal{N}_{\rm avg}$ in equation \eqref{conv-msets}, proving the convergence.

\subsection{Gossip-based approximation}
\label{gba}
In our \textsc{P2PTFHH} algorithm, all of the peers participate to the averaging protocol by exchanging their internal state, i.e., by  merging the local augmented sketches,  $\mathcal{D}_{r,l}$, and averaging the local values $\tilde{q}_{r,l}$. 

In order to estimate the number of peers, the averaging protocol is performed using the values initially distributed among the peers so that only one peer starts with a value equal to $1$ while the other $p-1$ peers start with values equal to $0$. Denoting with $\tilde{q}_{r,l}$ the value  held by peer $l$ at round $r$, this quantity approaches the average value $1/p$ as the algorithm proceeds. We can bound the error for the estimation of the number of peers. Considering the equation \ref{eq_error_due_to_gossip}, we have that :

\begin{equation}
\label{eq_qestimate}
\left|\tilde{q}_{r,l} - \frac{1}{p}\right| < \bar{\epsilon}.
\end{equation}

In fact, equation \ref{eq_qestimate} can be easily derived from equation \ref{eq_error_due_to_gossip} by substituting $\bar{w}$ with the average value $1/p$. Since $\tilde{q}$ is an estimate of the inverse of the number of peer, we can write equation \ref{eq_qestimate} introducing $\tilde{p} = 1/\tilde{q}$:

\begin{equation}
\frac{1}{p} - \bar{\epsilon} < \frac{1}{\tilde{p}_{r,l}} < \frac{1}{p} + \bar{\epsilon}.
\end{equation}

Since the number of peers is a positive number, we can set the constraint $\bar{\epsilon} < 1/p$; under this constraint all of the members of the previous relation are positive, hence it holds that:

\begin{equation}
\label{error_bound_on_p_1}
\frac{1}{1+p\bar{\epsilon}} < \frac{\tilde{p}_{r,l}}{p} < \frac{1}{1-p\bar{\epsilon}} 
\end{equation}

The problem with equation \eqref{error_bound_on_p_1} is that the estimation error bounds depend on $p$, but we may not know $p$ in advance. To overcome this problem, we introduce the value $p^*~\ge~p$, that is an estimate of the maximum number of peers we expect in the network, and we compute new bounds based on this value. Under the constraint $p^* \geq p$, we can be confident on the new computed bounds, though they may be weaker.

Let us set $\epsilon^* = p^* \bar{\epsilon}$. Given the constraint on $\bar{\epsilon}$, it holds that $0 < \epsilon^* < 1$, and, with probability $1-\delta_g$, for any peer $l = 1,2,\ldots,p$:

\begin{equation}
\label{error_bound_on_p_2}
\frac{p}{1+\epsilon^*} < \tilde{p}_{r,l} < \frac{p}{1-\epsilon^*} 
\end{equation}

As discussed in Section \ref{convergence}, as the the \textsc{P2PTFHH} algorithm proceeds, the augmented Sketch $\mathcal{D}_{r,l}$, held by a peer $l$ after $r$ rounds approaches $\mathcal{D}_{\rm avg}$ which corresponds to the augmented sketch we would obtain had we applied \textsc{FDCMSS} to the stream $\mathcal{N}_{\rm avg}$, which is, in turn, the stream corresponding to the average frequency vector that we would obtain applying the Jelasty averaging protocol component wise to the local peers' frequency vectors. 

Denoting with $\tilde{f}_{r,l,i}$ the decayed frequency of the item $i$ in the stream $\mathcal{N}_{r,l}$, after $r$ rounds, and considering the equation \ref{eq_error_due_to_gossip}, it holds that:

\begin{equation}
\label{eq_festimate}
\left|\tilde{f}_{r,l,i} - \frac{f_{i}}{p}\right| < f_{i}\bar{\epsilon}.
\end{equation}
   
Equation \ref{eq_festimate} can be easily derived from equation \ref{eq_error_due_to_gossip} by substituting $\bar{w}$ with the average value $f_{i}/p$. Considering the expression of $\epsilon^*$, it follows that, with probability $1-\delta_g$, for any peer $l = 1,2,\ldots,p$:

\begin{equation}
\label{freq_bounds}
\frac{f_i}{p} (1 - \epsilon^*) < \tilde{f}_{r,l,i} < \frac{f_i}{p} (1 + \epsilon^*)
\end{equation}

Finally, the total decayed count $C$ shall be estimated too. Let $\tilde{C}_{r,l}$ be the total decayed count estimate, we have that $\tilde{C}_{r,l} = \sum_{i = 1}^{d} \tilde{f}_{r,l,i}$, thus, from the previous equation \ref{freq_bounds}, it follows that, with probability $1-\delta_g$, for any peer $l = 1,2,\ldots,p$:

\begin{equation}
\frac{C}{p} (1 - \epsilon^*) < \tilde{C}_{r,l} < \frac{C}{p} (1 + \epsilon^*)
\end{equation}

Moreover, it is worth noting here that, due to its properties, the augmented sketch data structure being 1-norm equivalent to the classical Count--Min sketch,  
each peer can determine the total decayed count estimate $\tilde{C}_{r,l}$, simply by summing up the values of all of the cells in one row of its sketch, $\mathcal{D}_{r,l}$.
  
\subsection{Correctness and error bounds}
We shall show here that given an augmented sketch data structure $\mathcal{D}_{r,l}$ obtained by the peer $l$ after $r$ rounds of the \textsc{P2PTFHH} algorithm, we can select a set of items and their corresponding estimated frequencies, thus solving the Approximate Time--Faded Heavy Hitters Problem. We shall also determine the error bounds on frequencies' estimation and the relation among the number of the sketch cells to be used by each peer and the number $r$ of rounds to be executed in order to guarantee the false positives' tolerance requested by the user within a given probability of failure.  

\begin{thm}
	\label{thm_falsenegative}
	Given an input stream $\mathcal{N}$, with total decayed count $C$, distributed among $p$ peers, each one making use of an augmented sketch data structure of size $d \times w$, and given a threshold parameter $0<\phi<1$, after $r$ rounds of \textsc{P2PTFHH}, on any peer the \textsc{QUERY} procedure returns a set $H$ of items and their corresponding time--faded frequencies so that, with probability greater than $1-\delta_n$,\\
	
	$H$ includes all of the items in $\mathcal{N}$ with frequency $f > \phi C$; \\
	
	\noindent with $\delta_n = \delta_g + \left(\frac{1}{2 \phi w}\frac{1+\epsilon^*}{1-\epsilon^*}\right)^d(1-\delta_g)$, $\delta_g$ being the probability of failure pertaining to the gossip approximation.\\
\end{thm}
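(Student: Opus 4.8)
The plan is to reduce the distributed guarantee to the sequential \textsc{FDCMSS} detection guarantee of Theorem~\ref{fdcmss-correct}, applied on each peer's converged local sketch but against a slightly shrunk threshold. Fix an arbitrary item $i$ with $f_i > \phi C$ and an arbitrary peer $l$ (as in Theorem~\ref{fdcmss-correct}, the argument is carried out per heavy hitter). First I would recall from the convergence analysis of Section~\ref{convergence} that after $r$ rounds the sketch $\mathcal{D}_{r,l}$ is a correct augmented \textsc{FDCMSS} sketch for the local stream $\mathcal{N}_{r,l}$, whose decayed frequency vector has components $\tilde{f}_{r,l,i}$. Conditioning on the event that the gossip approximation succeeds---which happens with probability at least $1-\delta_g$---I would invoke the lower bound of \eqref{freq_bounds}, $\tilde{f}_{r,l,i} > \frac{f_i}{p}(1-\epsilon^*)$, together with the matching upper bound $\tilde{C}_{r,l} < \frac{C}{p}(1+\epsilon^*)$ on the local decayed count estimate.

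Dividing the frequency lower bound by the count upper bound and using $f_i/C > \phi$ gives
\begin{equation}
\frac{\tilde{f}_{r,l,i}}{\tilde{C}_{r,l}} > \frac{f_i}{C}\,\frac{1-\epsilon^*}{1+\epsilon^*} > \phi\,\frac{1-\epsilon^*}{1+\epsilon^*},
\end{equation}
so that $i$ is frequent in the local stream $\mathcal{N}_{r,l}$ with respect to the reduced threshold $\phi' := \phi\,\frac{1-\epsilon^*}{1+\epsilon^*}$, i.e. $\tilde{f}_{r,l,i} > \phi'\tilde{C}_{r,l} = \tau$. This is precisely the quantity $\tau$ against which \textsc{QUERY} tests each cell, which is the design reason for the factor $\frac{1-\epsilon^*}{1+\epsilon^*}$ appearing in the definition of $\tau$.

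Next I would apply Theorem~\ref{fdcmss-correct} to the local sketch with $\phi'$ playing the role of $\phi$: since $i$ is $\phi'$-frequent in $\mathcal{N}_{r,l}$, it appears as the majority item candidate $c_m$ in at least one of its $d$ cells with probability at least $1-\left(\frac{1}{2\phi' w}\right)^d = 1-\left(\frac{1}{2\phi w}\frac{1+\epsilon^*}{1-\epsilon^*}\right)^d$. In such a cell $c_m$ monitors $i$, and since every occurrence of $i$ in a row maps to the same cell, the normalized value of $c_m$ overestimates $\tilde{f}_{r,l,i} > \tau$, so the first test of \textsc{QUERY} is passed; by the overestimation property of Theorem~\ref{error-bound} the value returned by \textsc{PointEstimate} is likewise at least $\tilde{f}_{r,l,i}$ and thus exceeds $\tau$, so $i$ is inserted into $H$.

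Finally I would multiply the two guarantees. The gossip failure event depends only on the random pair selections of the averaging protocol, whereas the \textsc{FDCMSS} failure event depends only on the independently chosen hash functions; conditioning on the successful gossip event (which merely fixes the bounds on $\tilde{f}_{r,l,i}$ and $\tilde{C}_{r,l}$) leaves Theorem~\ref{fdcmss-correct} applicable with threshold $\phi'$. By independence, item $i$ is reported with probability at least $(1-\delta_g)\left(1-\left(\frac{1}{2\phi w}\frac{1+\epsilon^*}{1-\epsilon^*}\right)^d\right) = 1-\delta_n$, as claimed. I expect the main obstacle to be exactly this probabilistic bookkeeping: one must justify that the two failure modes are independent and that conditioning on the gossip bounds does not disturb the hashing argument, so that the product of the two success probabilities yields the stated $1-\delta_n = (1-\delta_g)\left(1-\left(\frac{1}{2\phi w}\frac{1+\epsilon^*}{1-\epsilon^*}\right)^d\right)$.
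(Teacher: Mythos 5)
Your proposal is correct and follows essentially the same route as the paper: condition on the gossip success event to show the heavy hitter is $\phi\frac{1-\epsilon^*}{1+\epsilon^*}$-frequent in the local averaged stream (hence exceeds $\tau$), bound the probability of it failing to be a majority item candidate in all $d$ cells, and multiply the two success probabilities to obtain $1-\delta_n$. The only difference is presentational: you invoke Theorem~\ref{fdcmss-correct} as a black box with the rescaled threshold $\phi'=\phi\frac{1-\epsilon^*}{1+\epsilon^*}$, whereas the paper unfolds the underlying Markov-inequality argument on the per-cell count $\tilde{S}_{i,j}$ and inserts the gossip bounds inline, arriving at the identical bound $\left(\frac{1}{2\phi w}\frac{1+\epsilon^*}{1-\epsilon^*}\right)^d$.
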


\begin{proof}
We recap the query procedure of the \textsc{P2PTFHH} algorithm: all of the sketch cells are inspected, if the majority item candidate in the current cell has a decayed frequency greater than the selection criteria $\tau = \phi \tilde{C}_{r,l} \frac{1-\epsilon^*}{1+\epsilon^*}$, then a \textsc{PointEstimate} query is issued to compute the estimated frequency $\hat{f}$ of the heavy hitter candidate and finally if  $\hat{f} > \tau$ the item is inserted in the set $H$.

Hence, to be selected a heavy hitter must:
\begin{itemize}
	\item appear as a majority item candidate in at least one of the $d$ cells in which it is mapped to;
	\item have estimated frequency $\hat{f}$ greater than $\tau$.
\end{itemize}

We first prove that if an item is a heavy hitter (i.e. $f>\phi C$), then its estimated frequency is greater than $\tau$ with probability at least $1-\delta_g$. We recap the main relations we proved above, valid with probability at least $1-\delta_g$, for all of the items $i$ and any  peer $l$ after $r$ rounds of the algorithm:

\begin{align}
\label{eq_properties1}
\frac{p}{1+\epsilon^*} < & \tilde{p}_{r,l} < \frac{p}{1-\epsilon^*}; \\
\label{eq_properties2}
\frac{f_i}{p} (1 - \epsilon^*) < & \tilde{f}_{r,l,i} < \frac{f_i}{p} (1 + \epsilon^*); \\
\label{eq_properties3}
\frac{C}{p} (1 - \epsilon^*) < & \tilde{C}_{r,l} < \frac{C}{p} (1 + \epsilon^*); \\
\end{align}

Moreover, owing to the \textsc{FDCMSS} properties it holds that:

\begin{equation}
\label{eq_properties4}
\tilde{f}_{r,l,i} \le \hat{f}_{r,l,i}.
\end{equation}


From the relations \eqref{eq_properties1}--\eqref{eq_properties4}, we can derive the following:

\begin{equation}
\hat{f}_{r,l,i} \frac{p}{1-\epsilon^*} > \tilde{f}_{r,l,i} \frac{p}{1-\epsilon^*} > f_i > \phi C > \phi \tilde{C}_{r,l} \frac{p}{1+\epsilon^*}
\end{equation}

Hence the following holds with probability at least $(1-\delta_g)$:

\begin{equation}
\hat{f}_{r,l,i} > \phi \tilde{C}_{r,l} \frac{1-\epsilon^*}{1+\epsilon^*}.
\end{equation}

We now prove that a heavy hitter appears in at least one of the $d$ cells in which it is mapped to with probability at least $\left[1-\left(\frac{1}{2w\phi} \frac{1+\epsilon^*}{1-\epsilon^*}\right)^{d}\right](1-\delta_g)$ and that the overall probability of \textsc{P2PTMHH} failing to retrieve all of the frequent items, or alternatively, the probability of \textsc{P2PTFHH} reporting false negatives, is $\delta_n$.

Let us take as a reference the state of a peer $l$ after the $r$th round of the algorithm and let $k$ be a frequent item for the global stream $\mathcal{N}$, i.e. $f_k>\phi C$. Let $\tilde{\mathcal{D}}_{r,l}[i][j]$, for $i \in \{1\ldots d\}$ and $j = h_i(k)$, be one of the $d$ cells in which the item $k$ is mapped to by the corresponding hash functions $h_i: [m] \rightarrow [w]$.  Moreover, let $\tilde{f}^{(i,j)}_{min}$ be the minimum of the two counters available in the Space-Saving summary $\tilde{\mathcal{S}}_{i,j}$, which monitors the items falling in the cell $\tilde{\mathcal{D}}_{r,l}[i][j]$, and let $\hat{f}_{i,k}$ be the estimated decayed count of the item $k$ related to the local stream $\tilde{\mathcal{N}}_{r,l}$ and determined through the Space-Saving summary $\tilde{\mathcal{S}}_{i,j}$. 
The item $k$ shall not appear as a majority item candidate in any of the $d$ cells in which it is mapped to iff the following event happens:

\begin{equation}
\label{failure_eq}
\hat{f}_{i,k} \leq \tilde{f}^{(i,j)}_{min}, \forall i=1,\ldots,d.
\end{equation}

Let $\tilde{S}_{i,j}$ be the total decayed count of the items falling in the cell $\tilde{D}_{r,l}[i][j]$. By equation (\ref{ss5}), for the minimum of the two counters in the cell $\tilde{D}_{r,l}[i][j]$, it holds that: $\tilde{f}^{(i,j)}_{min} \leq \frac{\tilde{S}_{i,j}}{2}$.

We must now determine the probability that the event described in equation (\ref{failure_eq}) occurs for a given row $i$ of the sketch, i.e. $\Pr[\hat{f}_{i,k} \leq \tilde{f}^{(i,j)}_{min}]$, under the condition that $f_k > \phi C$. 

Taking into account that $\tilde{f}^{(i,j)}_{min} \leq \frac{\tilde{S}_{i,j}}{2}$, it holds that:

\begin{equation}
\label{prob}
\Pr[\hat{f}_{i,k} \leq \tilde{f}^{(i,j)}_{min}] < \Pr[\tilde{S}_{i,j} > 2\hat{f}_{i,k}];
\end{equation}

and, as proved in \cite{Cafaro-Pulimeno-Epicoco-Aloisio}, $\E[\tilde{S}_{i,j}] = \frac{\tilde{C}_{r,l}}{w}$. Thus, using the Markov inequality we have: 

\begin{equation}
\Pr[\hat{f}_{i,k} \leq \tilde{f}^{(i,j)}_{min}] < \Pr[\tilde{S}_{i,j} > 2\hat{f}_{i,k}] \leq \frac{\E[\tilde{S}_{i,j}]}{2\hat{f}_{i,k}} = \frac{\tilde{C}_{r,l}}{2 w \hat{f}_{i,k}}.
\end{equation}

By applying equations \eqref{eq_properties1}--\eqref{eq_properties4} we have:

\begin{equation}
\Pr[\hat{f}_{i,k} \leq \tilde{f}^{(i,j)}_{min}] < \frac{\tilde{C}_{r,l}}{2 w \hat{f}_{i,k}} < \frac{\tilde{C}_{r,l}}{2 w \tilde{f}_{r,l,k}} < \frac{\tilde{C}_{r,l}}{2 w \frac{f_{k}}{p}(1-\epsilon^*)} < \frac{\frac{C}{p}(1+\epsilon^*)}{2 w \frac{f_{k}}{p}(1-\epsilon^*)}.
\end{equation}

Finally, taking into account that $k$ is a heavy hitter, i.e. $f_k>\phi C$, it holds that:

\begin{equation}
\label{single-cell-failure}
\Pr[\hat{f}_{i,k} \leq \tilde{f}^{(i,j)}_{min}] < \frac{1+\epsilon^*}{2\phi w(1-\epsilon^*)}.
\end{equation}

We fail to identify a frequent item $k$ when that item is not reported as a majority item candidate in any of the $d$ cells $\tilde{D}_{r,l}[i][j]$ in which it falls. Thus, by equation (\ref{single-cell-failure}), the probability of this event happening is: 

\begin{equation}
\Pr\left[\bigwedge_{i=1}^{d} (\hat{f}_{i,k} \leq \tilde{f}^{(i,j)}_{min})\right]  < \left(\frac{1}{2 \phi w}\frac{1+\epsilon^*}{1-\epsilon^*}\right)^d.
\end{equation}

Considering that the equations \eqref{eq_properties1}--\eqref{eq_properties4} are correct with a given probability $(1-\delta_g)$ (see Section \ref{jelasity}), we can conclude that if an item is a heavy hitter, it is reported with probability greater than or equal to $\left[1 - \left(\frac{1}{2 \phi w}\frac{1+\epsilon^*}{1-\epsilon^*}\right)^d\right]\left(1-\delta_g \right)$, and that, consequently, the probability of failure of \textsc{P2PTFHH} retrieving all of the frequent items is: $$\delta_n = \delta_g + \left(\frac{1}{2 \phi w}\frac{1+\epsilon^*}{1-\epsilon^*}\right)^d(1-\delta_g)$$.
\end{proof}

\begin{thm}
	\label{thm_falsepositive}
	Given an input stream $\mathcal{N}$, with total decayed count $C$, distributed among $p$ peers, each one making use of an augmented sketch data structure of size $d \times w$, and given a threshold parameter $0<\phi<1$, after $r$ rounds of \textsc{P2PTFHH}, on any peer the \textsc{QUERY} procedure returns a set $H$ of items and their corresponding time--faded frequencies so that, with probability greater than or equal $1-\delta_p$, \\
	
	 $H$ does not include any item in $\mathcal{N}$ with frequency $f \le (\phi - \epsilon) C$;\\
	
	\noindent with a false positives tolerance $\epsilon = \frac{4\epsilon^* \phi}{(1+\epsilon^*)^2} + \frac{e}{2w}\frac{1-\epsilon^*}{1+\epsilon^*}$, and a probability of failure $\delta_p = \delta_g + e^{-d}(1-\delta_g)$, $\delta_g$ being the probability of failure pertaining to the gossip approximation.\\
\end{thm}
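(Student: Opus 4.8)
The plan is to prove the contrapositive: rather than showing directly that every low-frequency item is excluded, I would show that any item the \textsc{QUERY} procedure admits into $H$ necessarily has true global decayed frequency $f_i > (\phi-\epsilon)C$ with the stated $\epsilon$. An item enters $H$ only after passing the point-query test $\hat{f}_{r,l,i} > \tau$, where $\tau = \phi\,\tilde{C}_{r,l}\frac{1-\epsilon^*}{1+\epsilon^*}$; since the $g(t-L)$ normalization divides both sides of this comparison it cancels, so I work throughout with normalized decayed frequencies and take $\hat{f}_{r,l,i} > \phi\,\tilde{C}_{r,l}\frac{1-\epsilon^*}{1+\epsilon^*}$ as the starting point of the chain.

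The structural mirror of Theorem~\ref{thm_falsenegative} is that there one bounds the majority-candidate event from below via Markov, whereas here I would invoke the \emph{upper} half of the \textsc{FDCMSS} error bound of Theorem~\ref{error-bound}, applied to the local sketch $\tilde{\mathcal{D}}_{r,l}$, which by the convergence argument of Section~\ref{convergence} is a correct sketch for $\tilde{\mathcal{N}}_{r,l}$ with local count $\tilde{C}_{r,l}$. This yields, with probability at least $1-e^{-d}$,
\begin{equation}
\hat{f}_{r,l,i} \le \tilde{f}_{r,l,i} + \frac{e\,\tilde{C}_{r,l}}{2w}.
\end{equation}
Combining with the selection criterion gives a lower bound on $\tilde{f}_{r,l,i}$, into which I substitute the gossip inequalities \eqref{eq_properties2} and \eqref{eq_properties3}: the \emph{upper} bound $\tilde{f}_{r,l,i} < \frac{f_i}{p}(1+\epsilon^*)$ on the left, and the \emph{lower} bound $\tilde{C}_{r,l} > \frac{C}{p}(1-\epsilon^*)$ on the right.

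The algebra then collapses cleanly. After cancelling the common factor $1/p$ and isolating $f_i$, I obtain $f_i > C\bigl(\phi\frac{(1-\epsilon^*)^2}{(1+\epsilon^*)^2} - \frac{e}{2w}\frac{1-\epsilon^*}{1+\epsilon^*}\bigr)$; using the identity $(1+\epsilon^*)^2-(1-\epsilon^*)^2 = 4\epsilon^*$ to write $1-\frac{(1-\epsilon^*)^2}{(1+\epsilon^*)^2} = \frac{4\epsilon^*}{(1+\epsilon^*)^2}$ turns this into exactly $f_i > (\phi-\epsilon)C$ with $\epsilon = \frac{4\epsilon^*\phi}{(1+\epsilon^*)^2}+\frac{e}{2w}\frac{1-\epsilon^*}{1+\epsilon^*}$. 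I would then assemble the failure probability: the gossip bounds \eqref{eq_properties1}--\eqref{eq_properties3} hold with probability at least $1-\delta_g$, and conditioned on them the Count--Min overestimate bound holds with probability at least $1-e^{-d}$, so the conclusion fails with probability at most $1-(1-\delta_g)(1-e^{-d}) = \delta_g + e^{-d}(1-\delta_g) = \delta_p$.

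The main obstacle is bookkeeping rather than depth: the right-hand side carries $\tilde{C}_{r,l}$ in two terms of opposite sign (the $+\phi\frac{1-\epsilon^*}{1+\epsilon^*}$ coming from $\tau$ and the $-\frac{e}{2w}$ coming from the error term), so substituting the \emph{lower} bound for $\tilde{C}_{r,l}$ is legitimate only once I verify that the bracketed coefficient $\phi\frac{1-\epsilon^*}{1+\epsilon^*}-\frac{e}{2w}$ is nonnegative, a mild condition on $w$ relative to $\phi$ and $\epsilon^*$ that I would state explicitly. The one genuinely satisfying step is recognising that the additive Count--Min error and the multiplicative gossip error do not interfere but separate into the two independent summands of $\epsilon$.
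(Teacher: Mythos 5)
Your proposal is correct and follows essentially the same route as the paper's proof: start from the selection criterion $\hat f_{r,l,i} > \tau$, apply the \textsc{FDCMSS} overestimate bound $\hat f_{r,l,i} \le \tilde f_{r,l,i} + \frac{e\tilde C_{r,l}}{2w}$, substitute the gossip inequalities, use the identity $(1+\epsilon^*)^2-(1-\epsilon^*)^2=4\epsilon^*$, and multiply the failure probabilities to get $\delta_p = \delta_g + e^{-d}(1-\delta_g)$. The only cosmetic difference is that the paper first divides the selection inequality by $\tilde C_{r,l}$ and bounds the ratio $\tilde f_{r,l,i}/\tilde C_{r,l}$ in a single step, which sidesteps the sign check on $\phi\frac{1-\epsilon^*}{1+\epsilon^*}-\frac{e}{2w}$ that your ordering requires (a condition the paper in any case imposes later, as $w>\frac{e}{2\phi}\frac{1+\epsilon^*}{1-\epsilon^*}$, to ensure $\epsilon<\phi$).
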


\begin{proof}
	Owing to the \textsc{FDCMSS} properties, the estimated frequency of an item $i$ is bounded by the following relation, with probability at least $1-e^{-d}$:
	
	\begin{equation}
		\label{eq_properties4b}
		\hat{f}_{r,l,i} \le \tilde{f}_{r,l,i} + \frac{e\tilde{C}_{r,l}}{2w}.
	\end{equation}
	
In order to compute the error, in terms of false positives' tolerance, that we commit with the selection criterion $\hat{f}>\tau$, we can use equations \eqref{eq_properties1}--\eqref{eq_properties3} and equation \eqref{eq_properties4b} and prove that if $\hat{f}_{r,l,i} > \phi \tilde{C}_{r,l} \frac{1-\epsilon^*}{1+\epsilon^*}$, then, with probability at least $(1-\delta_g)(1-e^{-d})$:
	
	\begin{equation}
	f_i > \left\{\phi - \left[\frac{4\epsilon^* \phi}{(1+\epsilon^*)^2} + \frac{e}{2w}\right]\right\} C.
	\end{equation}
	
	\noindent Applying equation \ref{eq_properties4b}, we have that, if an item $i$ is selected to be reported in the set $H$, then, with probability at least $(1-e^{-d})$:
	
	\begin{equation}
	\begin{split}
	& \tilde{f}_{r,l,i} + \frac{e\tilde{C}_{r,l}}{2w} > \hat{f}_{r,l,i} > \phi \tilde{C}_{r,l} \frac{1-\epsilon^*}{1+\epsilon^*} \implies\\
	& \frac{\tilde{f}_{r,l,i}}{\tilde{C}_{r,l}} + \frac{e}{2w} > \phi \frac{1-\epsilon^*}{1+\epsilon^*} .\\
	\end{split}
	\end{equation}
	
	\noindent Applying now equations \ref{eq_properties1}-\ref{eq_properties3}, which are valid with probability at least $1-\delta_g$, we can state that, with probability at least $(1-\delta_g)(1-e^{-d})$, i.e. with a probability of failure $\delta_p = \delta_g + e^{-d}(1-\delta_g)$, if item $i$ is reported in the output set $H$, then:

	\begin{equation}
	\begin{split}
	& \frac{f_i(1+\epsilon^*)}{C(1-\epsilon^*)} + \frac{e}{2w}> \phi \frac{1-\epsilon^*}{1+\epsilon^*} \implies \\
	& f_i > \phi C \left(\frac{1-\epsilon^*}{1+\epsilon^*}\right)^2 - \frac{eC}{2w}\frac{1-\epsilon^*}{1+\epsilon^*}\implies \\
	& f_i > \left\{\phi - \left[1 - \left(\frac{1-\epsilon^*}{1+\epsilon^*}\right)^2\right] \phi + \frac{e}{2w}\frac{1-\epsilon^*}{1+\epsilon^*}\right\}  C \implies \\
	& f_i > \left\{\phi - \left[\frac{4\epsilon^* \phi}{(1+\epsilon^*)^2} + \frac{e}{2w}\frac{1-\epsilon^*}{1+\epsilon^*}\right]\right\} C.
	\end{split}
	\end{equation}
\end{proof}

Theorems \ref{thm_falsenegative} and \ref{thm_falsepositive} show that the set of candidate frequent items produced by \textsc{P2PTFHH} satisfies the requirements of the Approximate Time Faded Heavy Hitters problem (Problem \ref{prob1}). Thus, we can set the parameters of the algorithm, i.e. the dimensions of the sketches and/or the number of rounds to be executed, so that the absence of false negatives and the tolerance on false positives are guaranteed with certain probabilities of failure that in turn depend on the algorithm's parameters and can be kept small. 

Thus, the overall probability of failure of \textsc{P2PTFHH} is driven by $\delta_p$ and we can avoid dealing with the two distinct probabilities of failure, $\delta_n$ and $\delta_p$. 

We know from Problem \ref{prob1} that the desired value for the tolerance $\epsilon$ must be such that $\epsilon < \phi$; therefore, starting from the expression of $\epsilon$ from theorem \ref{thm_falsepositive}, we have that:

\begin{equation}
\begin{split}
& \frac{4\epsilon^* \phi}{(1+\epsilon^*)^2} + \frac{e}{2w}\frac{1-\epsilon^*}{1+\epsilon^*} < \phi \implies \\
&  \left[\frac{4\epsilon^*}{(1+\epsilon^*)^2} - 1\right] \phi + \frac{e}{2w}\frac{1-\epsilon^*}{1+\epsilon^*} < 0 \implies \\
& \frac{e}{2w}\frac{1-\epsilon^*}{1+\epsilon^*} - \left(\frac{1-\epsilon^*}{1+\epsilon^*}\right)^2 \phi < 0 \implies \\
& \frac{e}{2w} -  \phi \frac{1-\epsilon^*}{1+\epsilon^*} < 0 \implies \\
& w > \frac{e}{2\phi} \frac{1+\epsilon^*}{1-\epsilon^*}.
\end{split}
\end{equation}

Under the condition above, we can see that $\delta_n < \delta_p$. In fact:

\begin{equation}
\begin{split}
w > \frac{e}{2\phi} \frac{1+\epsilon^*}{1-\epsilon^*} &\implies \left(\frac{1}{2 \phi w}\frac{1+\epsilon^*}{1-\epsilon^*}\right) < \frac{1}{e}\\
&\implies \delta_g + \left(\frac{1}{2 \phi w}\frac{1+\epsilon^*}{1-\epsilon^*}\right)^d (1 - \delta_g) < \delta_g + \left(\frac{1}{e}\right)^d (1 - \delta_g)\\
&\implies \delta_n < \delta_p
\end{split}
\end{equation}

We are now ready to state the following theorem, which follows straight from previous results and reasoning, and subsumes theorems \ref{thm_falsenegative} and \ref{thm_falsepositive}.

\begin{thm}
	\label{thm_correctness}
	Given an input stream $\mathcal{N}$, with total decayed count $C$, distributed among $p$ peers, each one making use of an augmented sketch data structure of size $d \times w$, and given a threshold parameter $0<\phi<1$, after $r$ rounds of \textsc{P2PTFHH}, on any peer the \textsc{QUERY} procedure returns a set $H$ of items and their corresponding time--faded frequencies so that, with probability greater than $1-\delta$:
	\begin{enumerate}
		\item $H$ includes all of the true time--faded heavy hitters ($f>\phi C$);
		\item $H$ does not include any item in $\mathcal{N}$ with frequency $f \le (\phi - \epsilon) C$;
	\end{enumerate}
	with a false positives tolerance $\epsilon = \frac{4\epsilon^* \phi}{(1+\epsilon^*)^2} + \frac{e}{2w}\frac{1-\epsilon^*}{1+\epsilon^*}$,\\
	and a probability of failure $\delta = \delta_g + e^{-d}(1-\delta_g)$, $\delta_g$ being the probability of failure pertaining to the averaging gossip protocol.\\
\end{thm}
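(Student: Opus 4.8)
The plan is to derive Theorem~\ref{thm_correctness} as a consolidation of the two preceding theorems, so the work is essentially careful bookkeeping of the two failure probabilities rather than a fresh probabilistic argument. First I would invoke Theorem~\ref{thm_falsenegative}, which already guarantees that $H$ contains every item with $f > \phi C$ except with probability at most $\delta_n = \delta_g + \left(\frac{1}{2\phi w}\frac{1+\epsilon^*}{1-\epsilon^*}\right)^d(1-\delta_g)$, together with Theorem~\ref{thm_falsepositive}, which guarantees that $H$ excludes every item with $f \le (\phi-\epsilon)C$ --- for the stated tolerance $\epsilon = \frac{4\epsilon^*\phi}{(1+\epsilon^*)^2} + \frac{e}{2w}\frac{1-\epsilon^*}{1+\epsilon^*}$ --- except with probability at most $\delta_p = \delta_g + e^{-d}(1-\delta_g)$. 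These two statements already establish properties (1) and (2) separately; what remains is to express their simultaneous validity through a single probability of failure $\delta$.

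The crucial device is the feasibility condition coming from Problem~\ref{prob1}, namely $\epsilon < \phi$. Substituting the expression for $\epsilon$ and simplifying (as carried out just before the theorem) yields the requirement $w > \frac{e}{2\phi}\frac{1+\epsilon^*}{1-\epsilon^*}$. I would then observe that this very inequality forces $\frac{1}{2\phi w}\frac{1+\epsilon^*}{1-\epsilon^*} < \frac{1}{e}$, whence $\left(\frac{1}{2\phi w}\frac{1+\epsilon^*}{1-\epsilon^*}\right)^d < e^{-d}$ and therefore $\delta_n < \delta_p$. In other words, once the sketch width is chosen large enough to make the false-positive tolerance meaningful, the false-negative failure term is automatically dominated by the false-positive one.

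To combine the guarantees I would exploit the fact that both failure events decompose over the same gossip-approximation event: each theorem fails either because the averaging approximation of equations \eqref{eq_properties1}--\eqref{eq_properties4} breaks down (probability $\delta_g$) or, conditionally on that approximation holding, because the underlying \textsc{FDCMSS} guarantee fails --- the majority-candidate event of conditional probability $\left(\frac{1}{2\phi w}\frac{1+\epsilon^*}{1-\epsilon^*}\right)^d$ for false negatives, and the Count--Min overestimate event of conditional probability $e^{-d}$ for false positives. Conditioning on the gossip event and using the dominance $\delta_n < \delta_p$ established above, I would summarize the overall failure by the single larger quantity $\delta = \delta_p = \delta_g + e^{-d}(1-\delta_g)$, concluding that both (1) and (2) hold with probability greater than $1-\delta$, which yields the membership requirements of Problem~\ref{prob1}.

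The step I expect to require the most care is precisely this last combination. Strictly, the two conditional \textsc{FDCMSS} failure modes call for a union bound whose conditional part is $\left(\frac{1}{2\phi w}\frac{1+\epsilon^*}{1-\epsilon^*}\right)^d + e^{-d}$, so the clean collapse to $\delta_p$ rests on absorbing the smaller false-negative term into the false-positive one and treating $\delta_p$ as the governing failure probability. Making this absorption explicit --- rather than merely asserting that the failure is \emph{driven by} $\delta_p$ --- is the one place where the argument must be stated with some precision, and it is what allows the theorem to present a single tolerance $\epsilon$ and a single probability of failure $\delta$.
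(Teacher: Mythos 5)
Your proposal follows essentially the same route as the paper: the theorem is presented there as a direct consolidation of Theorems~\ref{thm_falsenegative} and~\ref{thm_falsepositive}, using the feasibility condition $\epsilon < \phi$ to derive $w > \frac{e}{2\phi}\frac{1+\epsilon^*}{1-\epsilon^*}$ and hence $\delta_n < \delta_p$, so that the single failure probability $\delta = \delta_p$ governs both guarantees. You are in fact slightly more explicit than the paper about the union-bound subtlety in collapsing the two conditional \textsc{FDCMSS} failure modes into $\delta_p$ alone, which the paper passes over by simply asserting that the overall failure is ``driven by'' $\delta_p$.
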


\subsection{Error bounds on the estimated frequency}
The estimated frequencies provided by $\mathcal{D}_{r,l}$ refer to average frequencies. Thus, in order to obtain an estimate of the global frequency $f_i$ of an item $i$, we need to multiply $\hat{f}_{r,l,i}$ by $\tilde{p}_{r,l}$. From eqs. \eqref{eq_properties1}--\eqref{eq_properties4} and \eqref{eq_properties4b} we can compute the corresponding error bounds. The following theorem holds.

\begin{thm}
	Given an input stream $\mathcal{N}$ with total decayed count $C$, distributed among $p$ peers, each one making use of an augmented sketch data structure of size $d \times w$, and given a threshold parameter $0<\phi<1$, after $r$ rounds of \textsc{P2PTFHH}, any peer can report a time--faded estimated frequency $f^s_{r,l,i}$ of an item $i \in [m]$ so that, with probability $1-\delta$:
	
	\begin{equation}
	\label{est_error}
	\frac{1-\epsilon^*}{1+\epsilon^*} f_i < f^s_{r,l,i} < \frac{1+\epsilon^*}{1-\epsilon^*} \left(f_i + \frac{eC}{2w}\right),
	\end{equation}
	
	\noindent with $\delta = \delta_g + e^{-d}(1-\delta_g)$, $\delta_g$ being the probability of failure pertaining to the gossip approximation.
\end{thm}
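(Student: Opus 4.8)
The plan is to observe that the reported global frequency is, by construction, $f^s_{r,l,i} = \hat{f}_{r,l,i}\,\tilde{p}_{r,l}$, so the theorem reduces to multiplying together the per-item, per-peer-count, and per-total-count bounds already established in equations \eqref{eq_properties1}--\eqref{eq_properties4}, together with the \textsc{FDCMSS} upper bound \eqref{eq_properties4b}. The whole argument is a direct composition of these relations, and since every quantity involved ($f_i$, $C$, $\tilde{p}_{r,l}$, the estimated frequencies, and $0<\epsilon^*<1$) is positive, multiplying the inequalities preserves their direction. No new probabilistic machinery is required.

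For the lower bound I would first chain the deterministic Space-Saving overestimate \eqref{eq_properties4}, namely $\hat{f}_{r,l,i}\ge\tilde{f}_{r,l,i}$, with the left inequality of \eqref{eq_properties2} to obtain $\hat{f}_{r,l,i} > \frac{f_i}{p}(1-\epsilon^*)$. Multiplying this by the left inequality of \eqref{eq_properties1}, $\tilde{p}_{r,l} > \frac{p}{1+\epsilon^*}$, the factor $p$ cancels and yields $f^s_{r,l,i} > \frac{1-\epsilon^*}{1+\epsilon^*}f_i$, which is exactly the left side of \eqref{est_error}.

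For the upper bound I would start from \eqref{eq_properties4b}, substitute the right inequalities of \eqref{eq_properties2} and \eqref{eq_properties3} for $\tilde{f}_{r,l,i}$ and $\tilde{C}_{r,l}$, and factor out $\frac{1+\epsilon^*}{p}$ to get $\hat{f}_{r,l,i} < \frac{1+\epsilon^*}{p}\left(f_i + \frac{eC}{2w}\right)$. Multiplying by the right inequality of \eqref{eq_properties1}, $\tilde{p}_{r,l} < \frac{p}{1-\epsilon^*}$, again cancels $p$ and delivers the right side of \eqref{est_error}, completing the deterministic part of the argument.

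The only point that needs genuine care — and the step I would treat as the main obstacle — is the bookkeeping of the probability of failure, handled exactly as in the proof of Theorem \ref{thm_falsepositive}. The gossip relations \eqref{eq_properties1}--\eqref{eq_properties3} hold jointly with probability at least $1-\delta_g$, the overestimate \eqref{eq_properties4} is deterministic, and the \textsc{FDCMSS} upper bound \eqref{eq_properties4b} holds with probability at least $1-e^{-d}$ over the independent choice of hash functions. Treating the gossip randomness and the hashing randomness as independent, both families of bounds hold simultaneously with probability at least $(1-\delta_g)(1-e^{-d})$, so \eqref{est_error} is valid with probability $1-\delta$ where $\delta = 1-(1-\delta_g)(1-e^{-d}) = \delta_g + e^{-d}(1-\delta_g)$, matching the claimed failure probability.
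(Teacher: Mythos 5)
Your proposal is correct and follows essentially the same route as the paper's proof: both combine the gossip bounds \eqref{eq_properties1}--\eqref{eq_properties3} with the \textsc{FDCMSS} relations \eqref{eq_properties4} and \eqref{eq_properties4b}, multiply through by $\tilde{p}_{r,l}$ so the factor $p$ cancels, and account for the failure probability as $\delta = 1-(1-\delta_g)(1-e^{-d})$. The only cosmetic difference is the order of operations (the paper multiplies the chain of inequalities by $\tilde{p}_{r,l}$ before substituting the gossip bounds, whereas you substitute first), which changes nothing of substance.
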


\begin{proof}
	From equation \eqref{eq_properties1}, with probability greater than $1-\delta_g$:
	
	\begin{equation}
	\label{eq_properties5}
	\frac{1}{1+\epsilon^*} < \frac{\tilde{p}_{r,l}}{p} < \frac{1}{1-\epsilon^*} 
	\end{equation}
	
	\noindent and from equation \eqref{eq_properties2} and equation \eqref{eq_properties3}, with probability greater than $1-\delta_g$:
	
	\begin{equation}
	\label{eq_properties6}
	\begin{split}
	f_i \frac{\tilde{p}_{r,l}}{p} (1 - \epsilon^*) < \tilde{f}_{r,l,i}  \tilde{p}_{r,l} < f_i \frac{\tilde{p}_{r,l}}{p} (1 + \epsilon^*), \\
	C \frac{\tilde{p}_{r,l}}{p} (1 - \epsilon^*) < \tilde{C}_{r,l} \tilde{p}_{r,l} < C \frac{\tilde{p}_{r,l}}{p} (1 + \epsilon^*). \\
	\end{split}
	\end{equation}
	
	\noindent Now, starting from eqs. \eqref{eq_properties4}, \eqref{eq_properties4b}  and taking into account eqs. \eqref{eq_properties5}, \eqref{eq_properties6}, with probability greater than $(1-\delta_g)(1-e^{-d})$, it follows that:
	
	\begin{equation}
	\begin{split}
	\tilde{f}_{r,l,i} \tilde{p}_{r,l} \le &\hat{f}_{r,l,i} \tilde{p}_{r,l} \le \tilde{f}_{r,l,i} \tilde{p}_{r,l} + \frac{e\tilde{C}_{r,l}}{2w} \tilde{p}_{r,l} \implies \\
	f_i \frac{\tilde{p}_{r,l}}{p} (1 - \epsilon^*) < &\hat{f}_{r,l,i} \tilde{p}_{r,l} < \left(f_i + \frac{eC}{2w}\right) \frac{\tilde{p}_{r,l}}{p} (1 + \epsilon^*) \implies \\
	\frac{1-\epsilon^*}{1+\epsilon^*} f_i < &\hat{f}_{r,l,i} \tilde{p}_{r,l}  < \frac{1+\epsilon^*}{1-\epsilon^*} \left(f_i + \frac{eC}{2w}\right).
	\end{split}
	\end{equation}
	
	\noindent Eventually, setting $f^s_{r,l,i} = \hat{f}_{r,l,i} \tilde{p}_{r,l}$, the relation \eqref{est_error} follows.
\end{proof}

\subsection{Practical considerations}
\label{prat-cons}
We conclude this Section discussing how to select proper values for the parameters $d$, $w$ and $R$, which represent respectively the number of rows and columns of the augmented sketch data structure and the minimum number of rounds required to solve the Approximate Time--Faded Heavy Hitters Problem with a  given threshold $\phi$, tolerance $\epsilon$ and probability of failure $\delta$. Theorem~\ref{thm_correctness} proves the correctness of the \textsc{P2PTFHH} algorithm, also providing the relation which ties the value of $\epsilon$ to the parameters of the algorithm. 

Let us consider as fixed the values of $p^*$ and $\delta_g$, then, in order to guarantee that the desired false positives tolerance $\epsilon$ is not exceeded, the user can set the number of rounds $R$ and/or the number of sketch columns $w$.
By fixing the value of $\epsilon$, the relation between $w$ and $R$ is given by equation \eqref{eq_w_r}.

\begin{equation}
\label{eq_w_r}
w=\frac{e(1-\epsilon^{*^2})}{2\epsilon \left(1+\epsilon^*\right)^2-8 \phi \epsilon^*}=\frac{e(1-p^{*^2} \frac{\gamma^R}{\delta_g})}{2\epsilon \left(1+p^* \sqrt{\frac{\gamma^R}{\delta_g}}\right)^2-8 \phi p^* \sqrt{\frac{\gamma^R}{\delta_g}}}
\end{equation}

\begin{figure}[]
	\centering
	\caption{Relationship between the number of sketch columns and the number of rounds to guarantee a given level of false positive tolerance $\epsilon$.} 
	\label{plot_w_r}
\end{figure}

Among all of the possible values for $R$ and $w$, the user could follow a strategy oriented to maintain the number of rounds (hence the running time) as fewer as possible and to choose $w$ accordingly or vice-versa to maintain the number of counters (hence the space) as lower as possible and to choose $R$ accordingly. Let us now discuss both strategies.

With the first strategy, which can be called \textit{time-dominant}, the user is interested on choosing $R$ and $w$ which guarantee a given $\epsilon$ such that $R$ is minimum. Equation \eqref{eq_w_r} reveals that $R$ is a monotone decreasing function with $w$, hence the minimum value for $R$ is obtained when $w$ tends to infinity; moreover, since $w>0$ the minimum value for $R$ can be computed by imposing the following constraint:

\begin{equation}
2\epsilon \left(p^* \sqrt{\frac{\gamma^R}{\delta_g}}+1\right)^2-8 \phi p^* \sqrt{\frac{\gamma^R}{\delta_g}} > 0,
\end{equation}

\noindent from which it follows that

\begin{equation}
R> \frac{ \log {\delta_g}+2 \log \left( \frac{2 \phi -\epsilon - 2 \sqrt{\phi^2-\epsilon \phi}}{\epsilon p^*} \right)}{\log \gamma}.
\end{equation}

\noindent Since $R$ is an integer, the minimum value of $R$ is given by:

\begin{equation}
\label{eq_min_r}
R_{min}=\left\lfloor \frac{ \log {\delta_g}+2 \log \left( \frac{2 \phi -\epsilon - 2 \sqrt{\phi^2-\epsilon \phi}}{\epsilon p^*} \right)}{\log \gamma}  \right\rfloor +1.
\end{equation}

\noindent Substituting the value of $R_{min}$ provided by equation \eqref{eq_min_r} into equation \eqref{eq_w_r} for $R$, it is possible to obtain the value for $w$.

With the second strategy, which can be called \textit{space-dominant}, the user wants to keep the memory footprint as low as possible. Equation \eqref{eq_w_r} reveals that $w$ is a monotone decreasing function with $R$, hence the minimum value for $w$ is obtained when $R$ tends to infinity. Evaluating equation \eqref{eq_w_r} for $R \rightarrow \infty$ it holds that the minimum value for $w$ is given by:

\begin{equation}
\label{eq_min_w_exact}
w>\frac{e}{2\epsilon}.
\end{equation}

\noindent Considering that $w$ is an integer value,

\begin{equation}
\label{eq_min_w}
w_{min}= \left \lfloor \frac{e}{2\epsilon} \right \rfloor + 1.
\end{equation}

\noindent solving equation \eqref{eq_w_r} by $\epsilon^*$ and using equation \eqref{eq_min_w} it holds that:

\begin{equation}
\epsilon^* = \frac{2w_{min}(2\phi -\epsilon) - \sqrt{16 \phi w^2_{min} (\phi -\epsilon)+ e^2}}{e + 2\epsilon w_{min}}.
\end{equation}

\noindent Since $\epsilon^* = p^* \sqrt{\frac{\gamma^R}{\delta_g}}$, it holds that:

\begin{equation}
R = \frac{1}{\log \gamma} \left( 2\log \epsilon^* - 2 \log p^* + \log \delta_g \right).
\end{equation}

Since $R$ is an integer, 

\begin{equation}
R = \left \lfloor \frac{1}{\log \gamma} \left( 2\log \epsilon^* - 2 \log p^* + \log \delta_g \right) \right \rfloor + 1.
\end{equation}

Finally, we discuss how the parameters of the algorithm influence the probability of failure $\delta$. \textsc{P2PTFHH} is a probabilistic algorithm in which two sources of randomness occur: the first source of randomness is due to the way the information is spread over the peers at each round of the gossip protocol and is represented by the probability $\delta_g$; the second source of randomness is related to the mapping of an item to a cell of the augmented sketch data structure through the corresponding hash functions. The overall probability of failure of \textsc{P2PTFHH} in solving the Problem \ref{prob1} with the desired guarantees can be controlled and made as small as desired, by acting on the parameter $d$, i.e. the number of rows of sketches, and the value of $\delta_g$, that in turn depends on the value of $R$.

Theorem \ref{thm_correctness} states that the Approximated Time--Faded Heavy Hitters Problem can be solved without false negative items and with a bounded number of false positive items with probability of failure less than $\delta = \delta_g + e^{-d}(1-\delta_g)$. Thus, given a desired value of $\delta$, we have to choose $d$ such that:

\begin{equation}
d \geq \log {\frac{1-\delta_g}{\delta - \delta_g}}.
\end{equation}

%
%
%

\section{Experimental results}
\label{results}
\label{results}
We implemented a simulator in C++ using the igraph library \cite{libigraph} in order to test the performance of our algorithm. The simulator has been compiled using the GNU C++ compiler g++ 4.8.5 on CentOS Linux 7. The tests have been performed on a machine equipped with two hexa-core Intel Xeon-E5 2620 CPUs at 2.0 GHz and 64 GB of main memory.
The source code of the simulator is freely available for inspection and for reproducibility of results contacting the authors by email. 

Objective of the experiments was to provide evidence of \textsc{P2PTFHH}'s behavior in various conditions, through the use of some useful and commonly referred metrics: \textit{Recall}, \textit{Precision}, and \textit{frequency estimation Error}. Recall is defined as the fraction of frequent items retrieved by an algorithm over the total number of frequent items.  Precision is the fraction of frequent items retrieved over the total number of items reported as frequent items candidates. The frequency estimation error that we measure and report is the \textit{Average Relative Error}, ARE, committed in the frequency estimation of items outputted by the algorithm as frequent candidates. Relative Error is defined as usual as $\frac{\left| f^S-f \right|}{f}$, where $f^S$ is the frequency reported for an item and $f$ is its true frequency.

For every experiment, a global input stream of items has been generated following a Zipfian distribution  (items are 32 bits unsigned integers) and each peer has been assigned a distinct part of that global stream, thus simulating the scenario in which each peer processes, independently of the other peers, its own local sub-stream.
The peers collaboratively discover the frequent items of the union of their sub-streams. 
The experiments have been repeated 10 times setting each time a different seed for the pseudo-random number generator used for creating the input data. For each execution, we collected the peers' statistics discussed above. Then, with reference to each peer, we determined the average value of those statistics over the ten executions. At last, we computed the mean and confidence interval of each statistics over all of the peers and plotted these values.

We fixed the size of the global stream at $100$ millions of items, and varied each one of the other parameters: the skew of the Zipfian distribution, $\rho$, the number of peers, $p$, the frequent items threshold, $\phi$, the width of the sketch used by each peer, $sw$, (while the depth is fixed to $4$) , and the fan-out, $fo$, setting non varying parameters to default values. Every experiment has been carried out by generating random P2P network topologies through both the Barabasi-Albert and the Erdos-Renyi random graphs models. Table~\ref{experiments} reports the sets of values (first row) and the default values (second row) used for each parameter. All of the plots representing the evolution of Recall and Precision are on linear-linear scale, whilst the plots that show the behaviour of the relative error are on linear-logarithmic scale.

\begin{table*}
	\renewcommand{\arraystretch}{1.3}
	\caption{Experiment values}
	\label{experiments}
	\centering
	\small
	\begin{tabular}{| c | c | c | c | c | c |}
		\hline
		$\boldsymbol{\rho}$ &  $\boldsymbol{\phi}$ & $\boldsymbol{p\  (\times10^3)}$  & $\boldsymbol{sw\  (\times10^3)}$ & $\boldsymbol{r}$ & $\boldsymbol{fo}$\\
		\hline
		\{0.9, 1.1, 1.3, 1.5\} & \{0.01, 0.02, 0.03, 0.04\}  & \{1, 5, 10, 15\} &  \{1.5, 2.5, 3.5, 4.5\} &  \{21, 24, 27\} & \{1, 2, 3\}\\ 
		\hline
		1,2 & 0.02 & 5 &  2.5 &  24 &  1\\ \hline
	\end{tabular}
\end{table*}

\begin{figure*}[h]
	\centering
	\begin{tabular}{ccc}		
		\subfloat[Recall]{
			\includegraphics[width=0.3\textwidth]{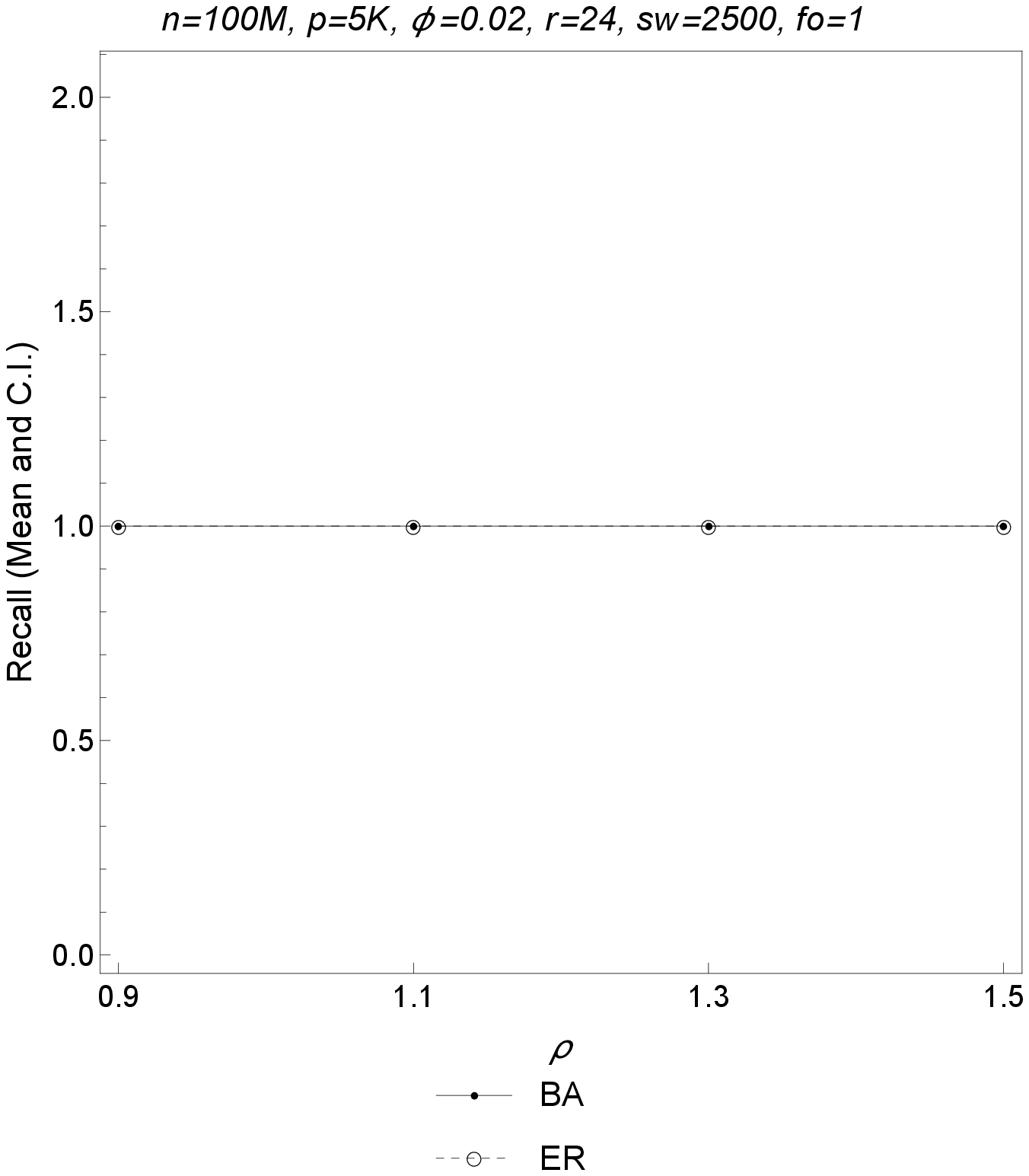}
			\label{sk-rec}
		} &
		
		\subfloat[Precision]{
			\includegraphics[width=0.3\textwidth]{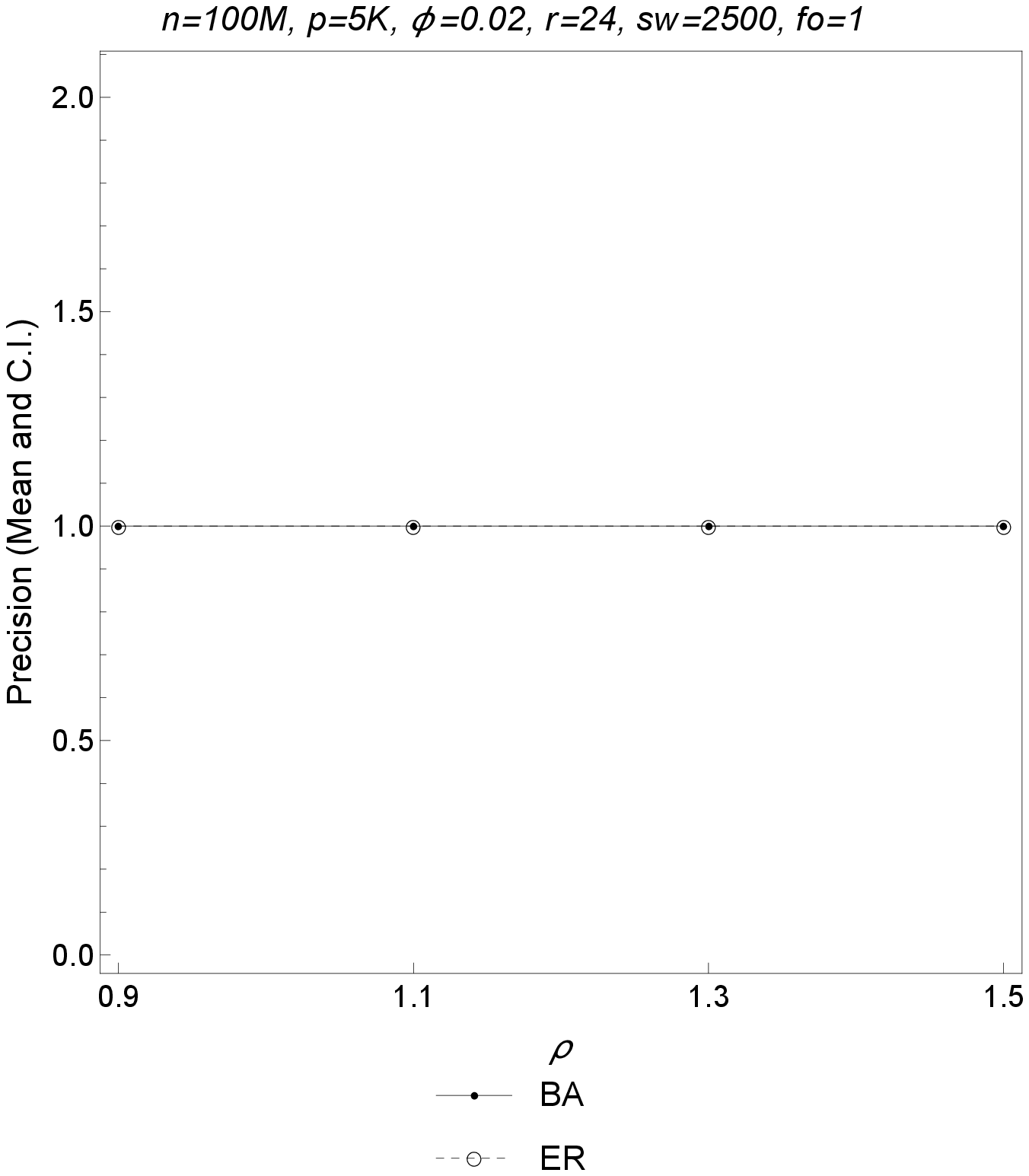}
			\label{sk-prec}
		} &
		
		\subfloat[Average Relative Error ]{
			\includegraphics[width=0.3\textwidth]{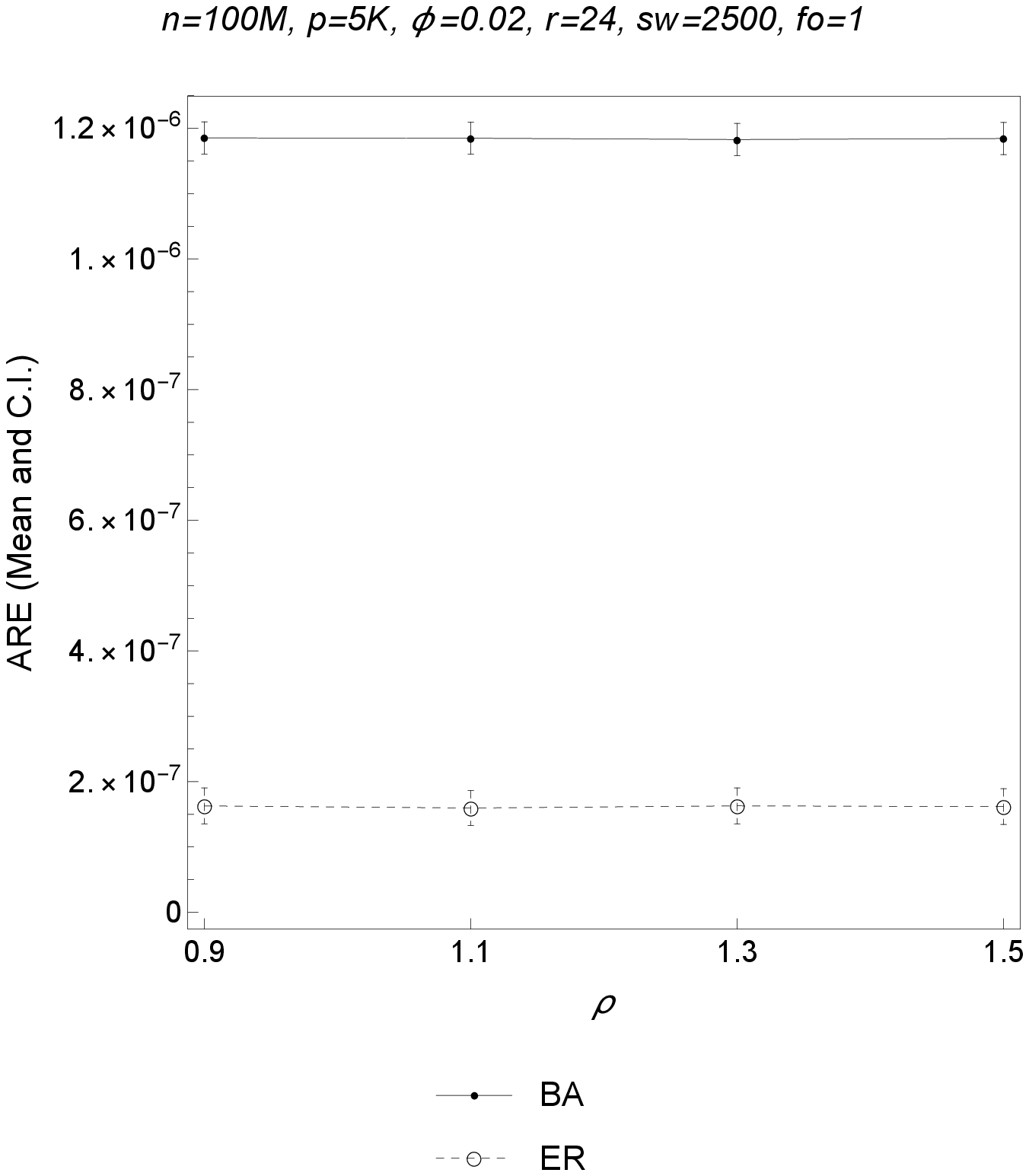}
			\label{sk-are}
		} 
	\end{tabular}
	
	\caption{Recall, Precision and Average Relative Error (mean and confidence interval) varying the skewness of the input distribution, for both a Barabasi-Albert (BA) and an Erdos-Renyi (ER) type of network graph.} 
	\label{skew_plot}
\end{figure*}

Figures from \ref{skew_plot} to \ref{r_plot} report the results of experiments carried out without taking into account peer's churning.  The Recall, Precision and the Average Relative Error measured when varying the skewness of the input disribution are respectively shown by Fig.~\ref{sk-rec}, Fig.~\ref{sk-prec} and Fig.~\ref{sk-are}. Recall and Precision are always $100\%$, i.e. the algorithm is very robust with regard to skewness variations in the input. Moreover, we observe very low Average Relative Errors on frequency estimation, both in Barabasi-Albert graph (the worst) and Erdos-Renyi case (the better).

\begin{figure*}[h]
	\centering
	\begin{tabular}{ccc}		
		\subfloat[Recall]{
			\includegraphics[width=0.3\textwidth]{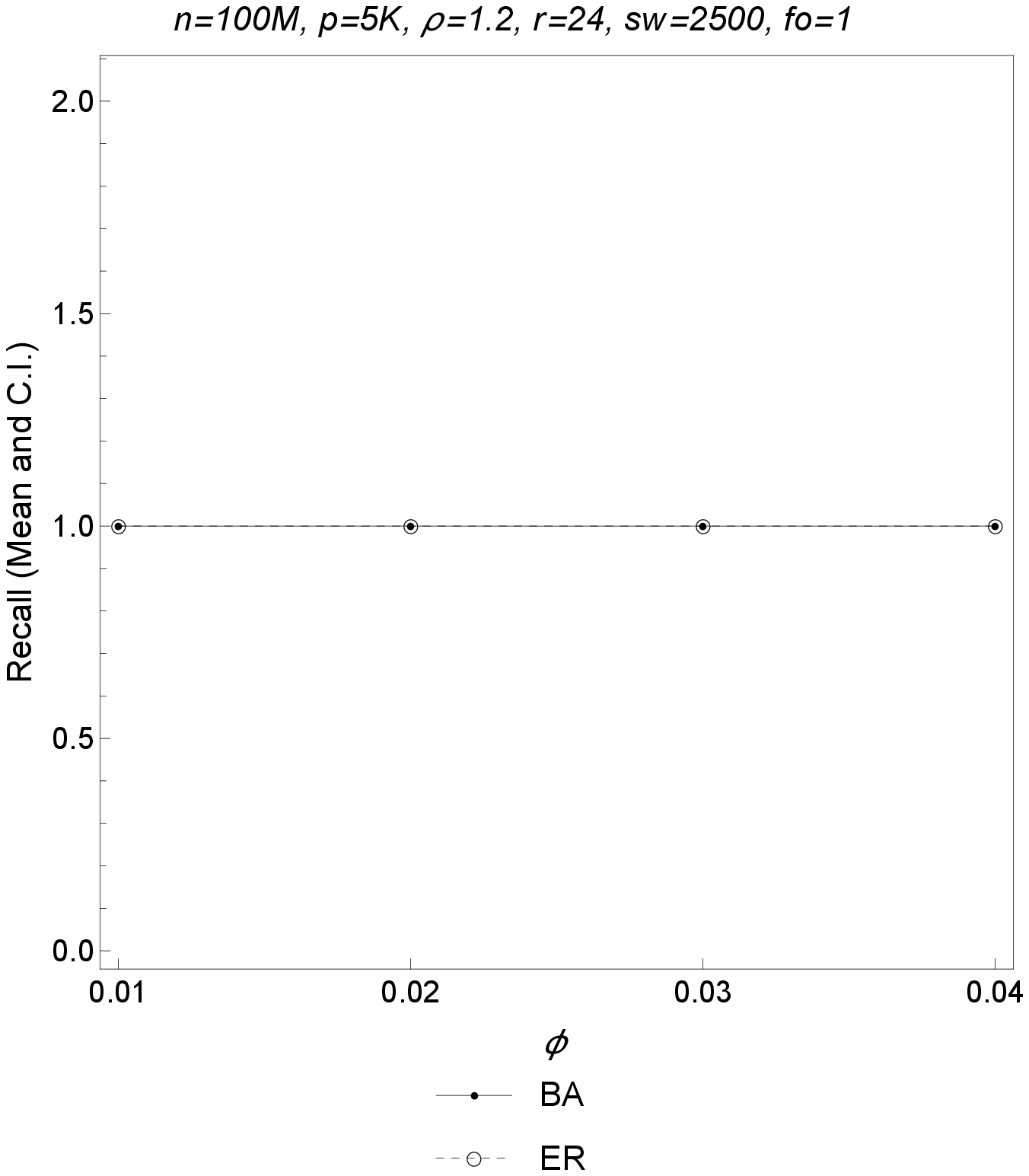}
			\label{phi-rec}
		} &
		
		\subfloat[Precision]{
			\includegraphics[width=0.3\textwidth]{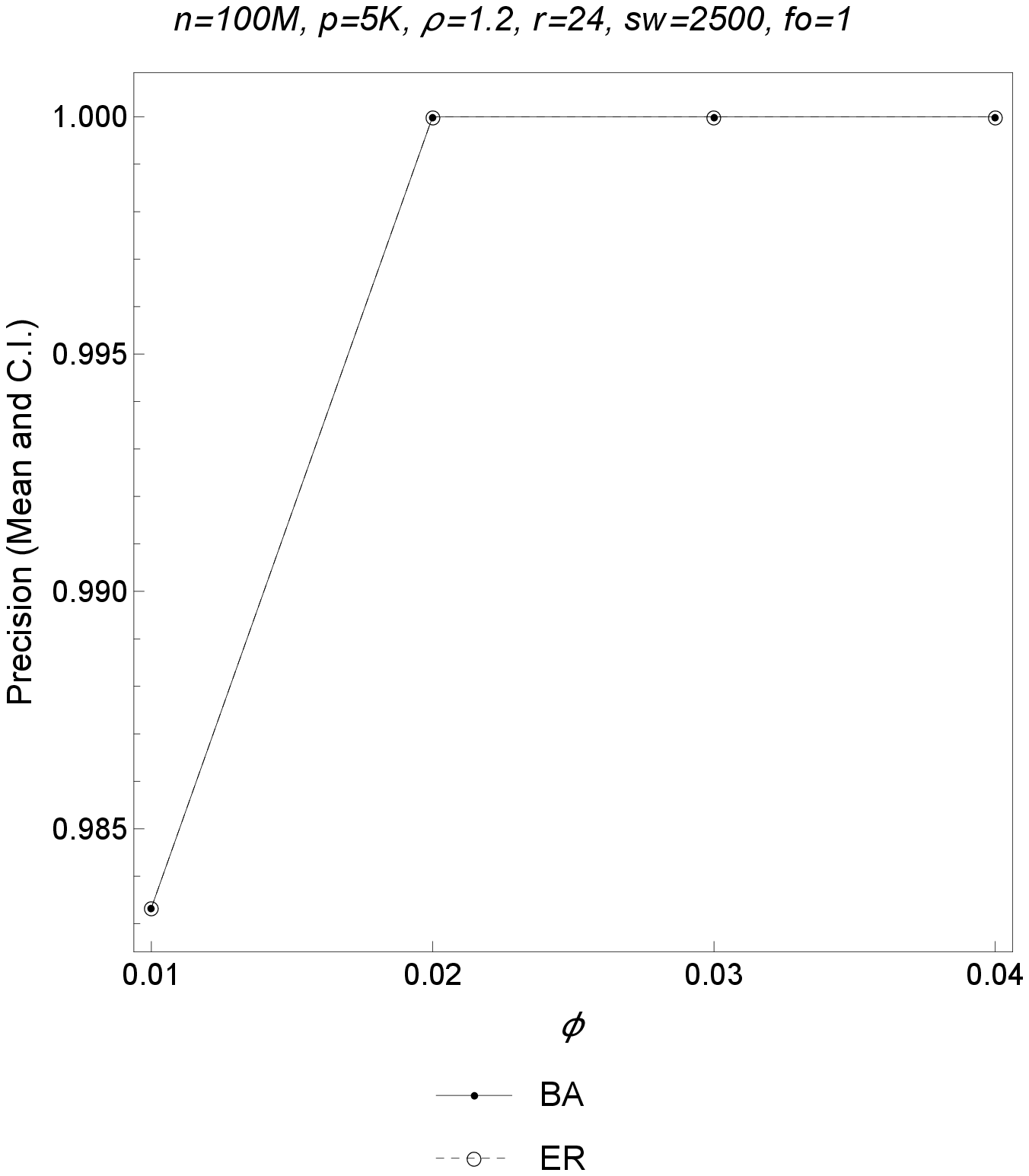}
			\label{phi-prec}
		} &
		
		\subfloat[Average Relative Error ]{
			\includegraphics[width=0.3\textwidth]{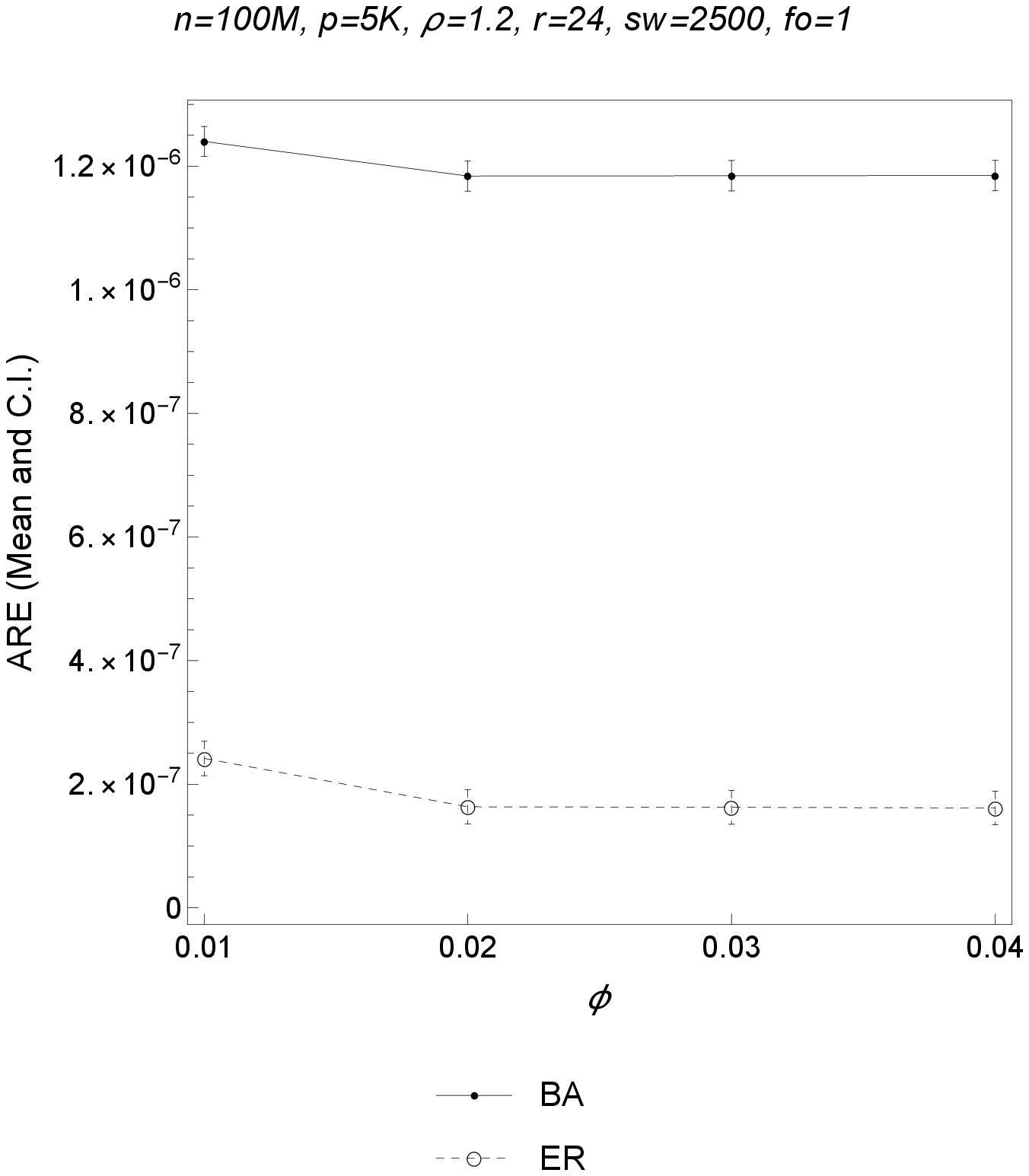}
			\label{phi-are}
		} 
		
	\end{tabular}
	
	\caption{Recall, Precision and Average Relative Error (mean and confidence interval) varying the frequent items threshold $\phi$, for both a Barabasi-Albert (BA) and an Erdos-Renyi (ER) type of network graph.} 
	\label{phi_plot}
\end{figure*}

Figure~\ref{phi_plot} shows how  \textsc{P2PTFHH} behaves with regard  to variations of the threshold $\phi$. Also in this setting, Recall (Fig.~\ref{phi-rec}) is always $100\%$, Precision (Fig.~\ref{phi-prec}), on the other hand, presents a value slightly below $100\%$ only when $\phi = 0.01$. Average Relative Errors (Fig.~\ref{phi-are}) confirm what already told about the skewness plots.

\begin{figure*}[h]
	\centering
	\begin{tabular}{ccc}		
		\subfloat[Recall]{
			\includegraphics[width=0.285\textwidth]{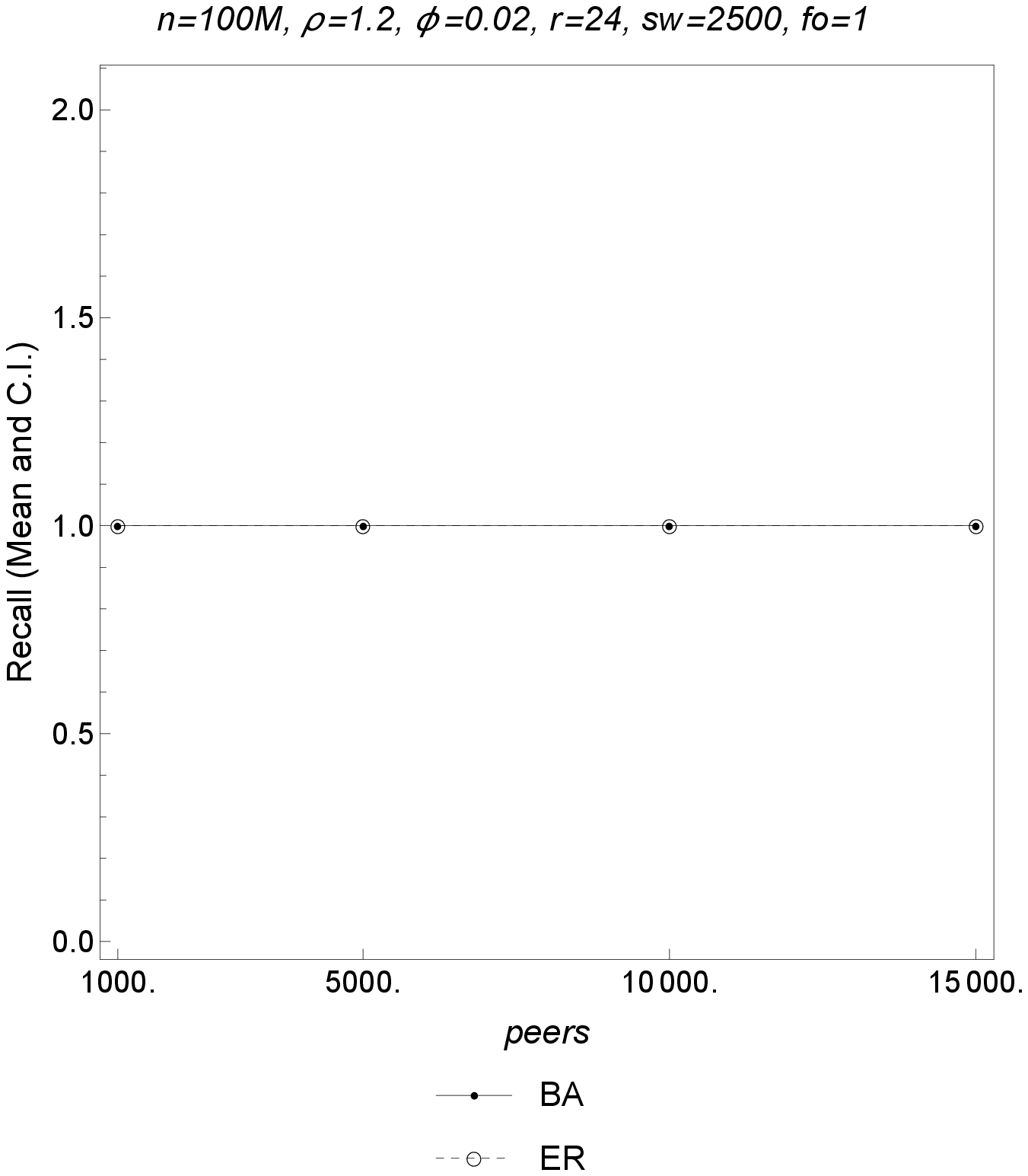}
			\label{p-rec}
		} &
		
		\subfloat[Precision]{
			\includegraphics[width=0.285\textwidth]{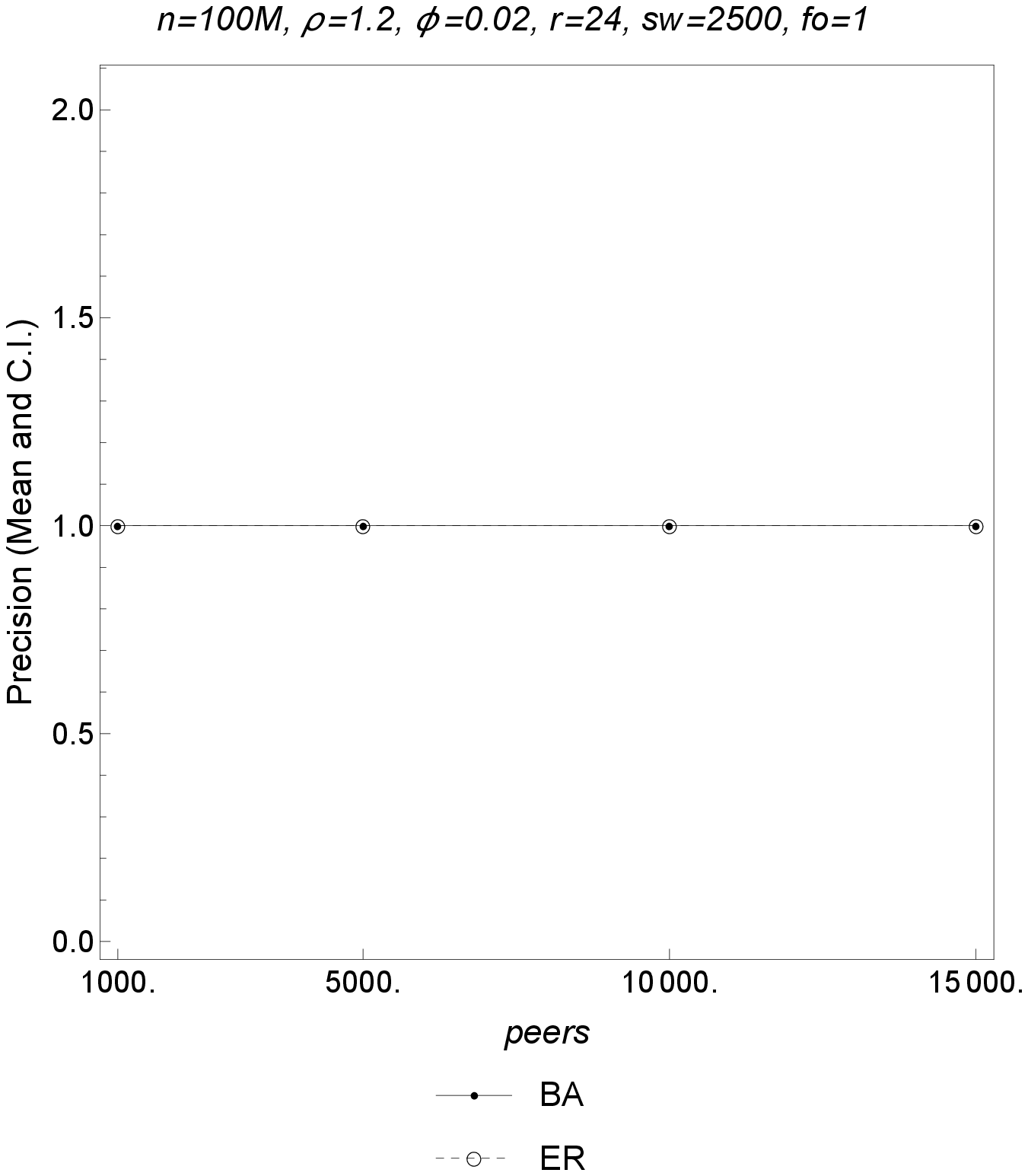}
			\label{p-prec}
		} &
		
		\subfloat[Average Relative Error ]{
			\includegraphics[width=0.33\textwidth]{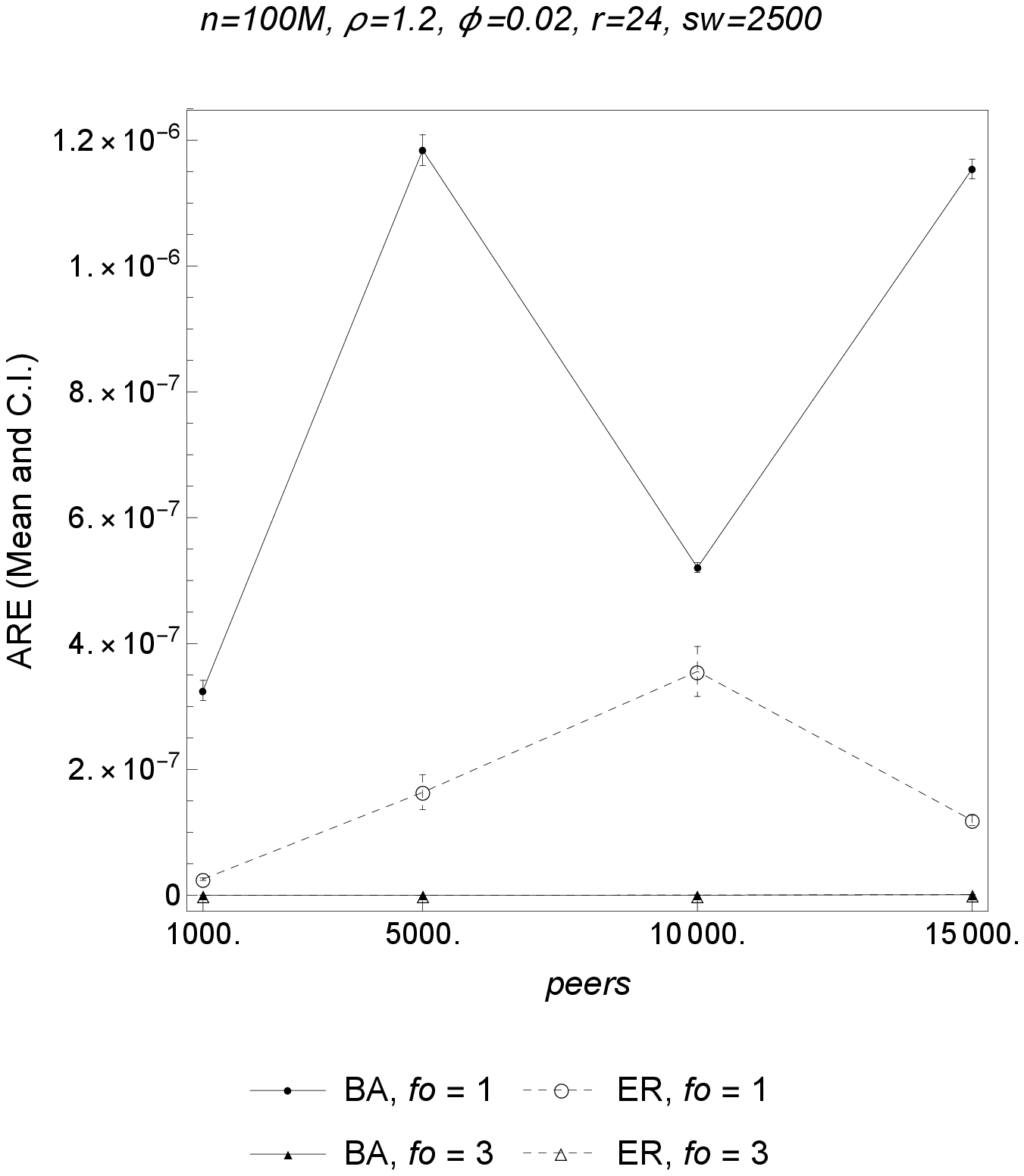}
			\label{p-are}
		}
		
	\end{tabular}
	
	\caption{Recall, Precision and Average Relative Error (mean and confidence interval) varying the number of peers participating in the computation,  for both a Barabasi-Albert (BA) and an Erdos-Renyi (ER) type of network graph, and setting a fan-out $fo$ equal to 1 and 2 in case of the ARE plot.} 
	\label{peers_plot}
\end{figure*}

Figure~\ref{peers_plot} plots the results of experiments where we varied the number of peers. Recall and Precision stay at the max value, whilst the Average Relative Error (Fig.~\ref{p-are}) slightly increases when the number of peers grows. Figure~\ref{p-are} also shows that the error decreases when the fan out increases from $1$ to $3$. This behavior is expected, with a greater fan out the number of messages exchanged in a round increases and information spreads quickly. If the number of rounds executed and the dimension of the sketches used by the peers is not adequate to the number of peers in the network, the metrics tend to get worse.

\begin{figure*}[h]
	\centering
	\begin{tabular}{ccc}		
		\subfloat[Recall]{
			\includegraphics[width=0.3\textwidth]{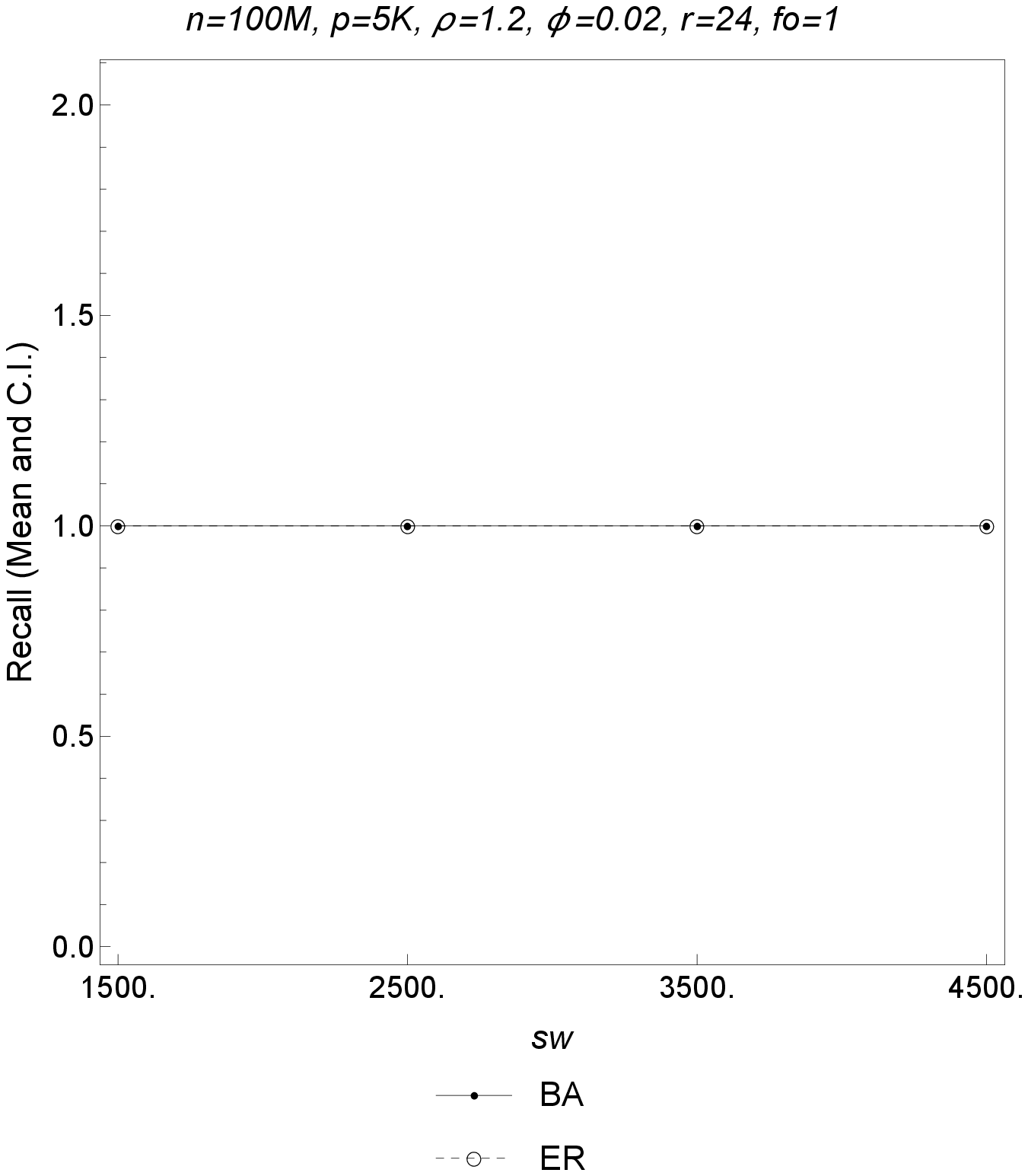}
			\label{sw-rec}
		} &
		
		\subfloat[Precision]{
			\includegraphics[width=0.3\textwidth]{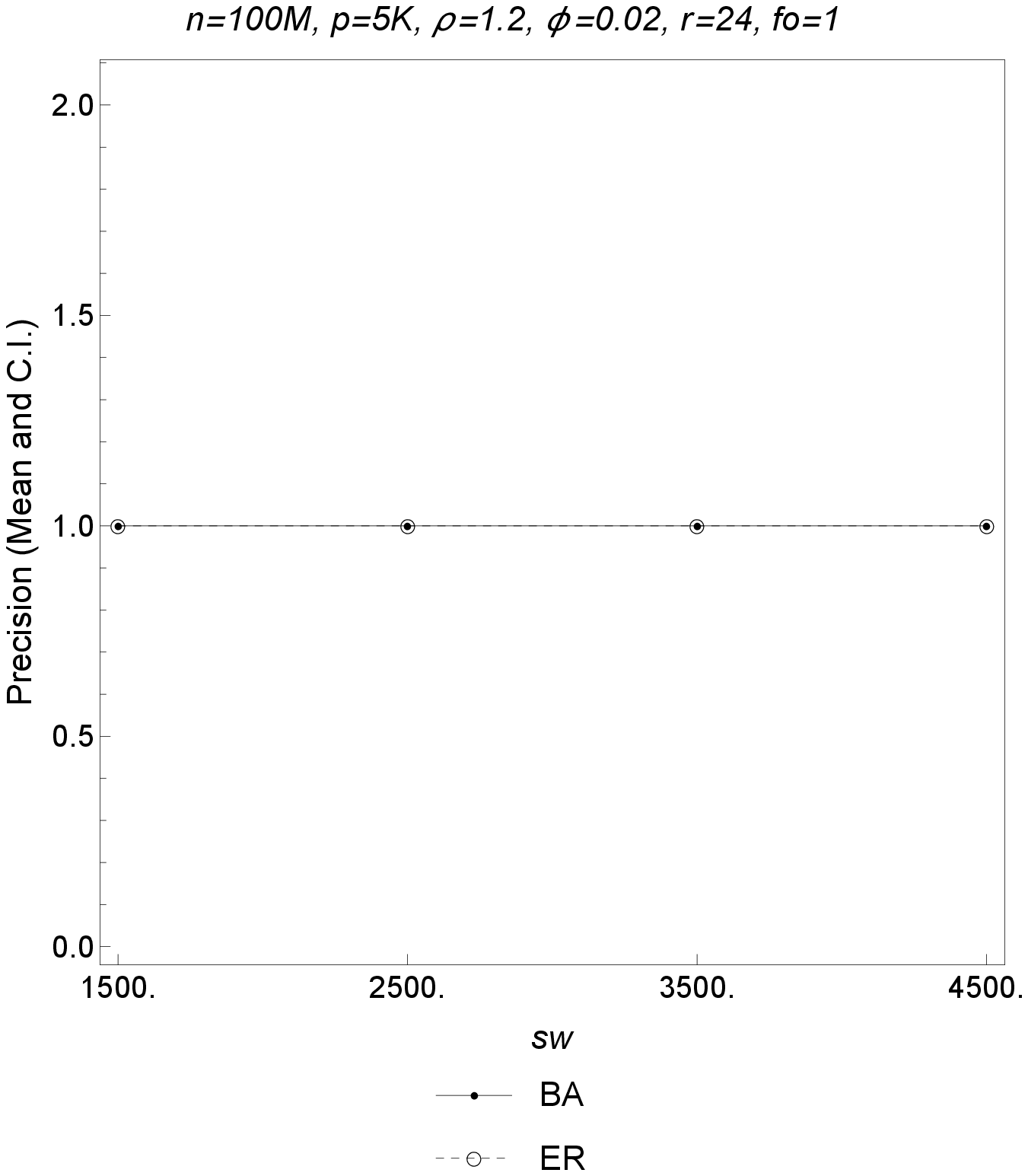}
			\label{sw-prec}
		} &
		
		\subfloat[Average Relative Error ]{
			\includegraphics[width=0.3\textwidth]{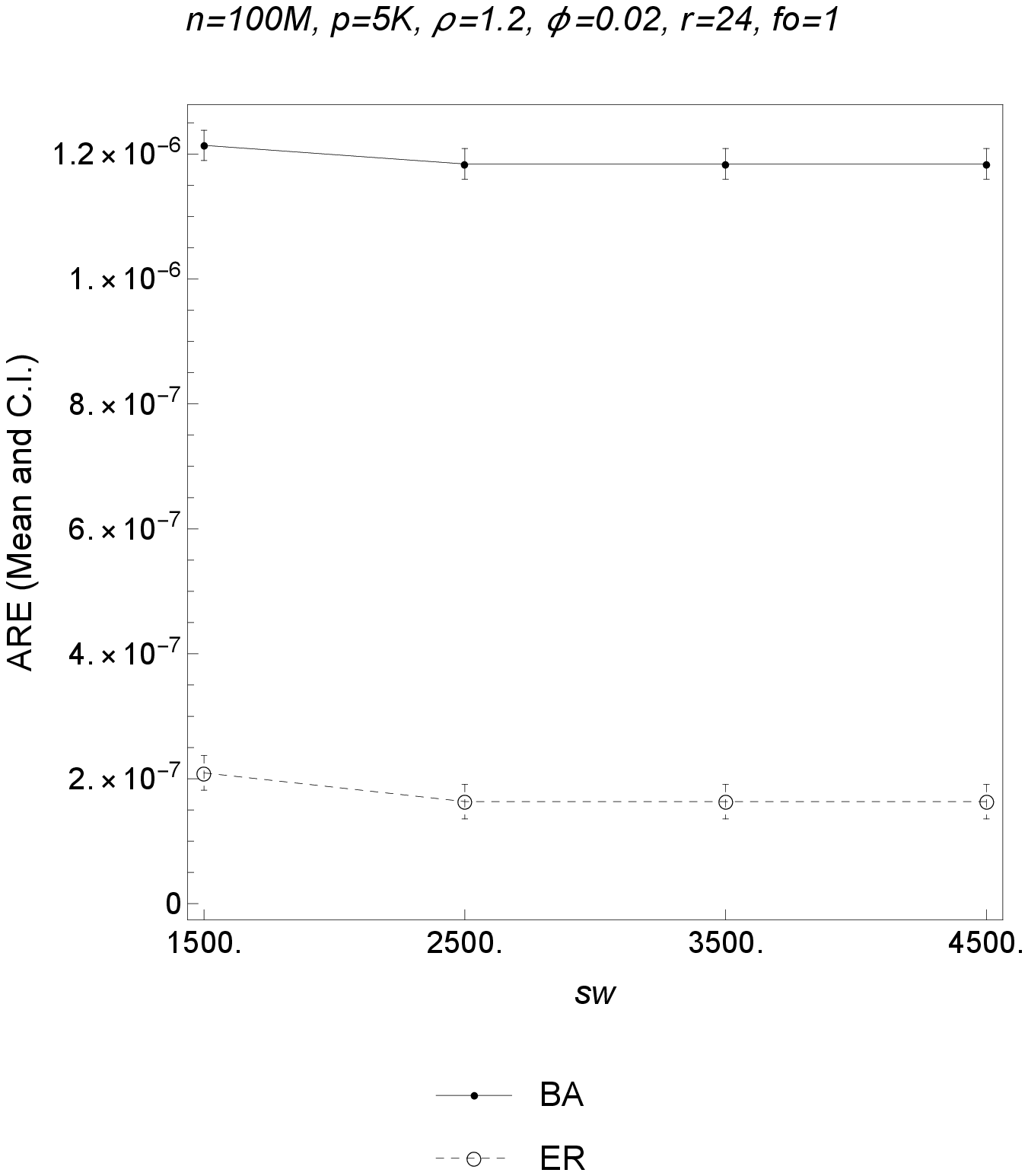}
			\label{sw-are}
		} 
	\end{tabular}
	
	\caption{Recall, Precision and Average Relative Error (mean and confidence interval) varying the number of Space-Saving counters used by each peer,  for both a Barabasi-Albert (BA) and an Erdos-Renyi (ER) type of network graph.} 
	\label{sw_plot}
\end{figure*}

The plots related to the experiments in which we varied the width $sw$ of peers' sketches (Figure~\ref{sw_plot}) do not present a particular behavior in the interval of values tested, showing that in this case the sketch size used was always enough with regard to the number of rounds executed in order to guarantee a good accuracy.

\begin{figure*}[h]
	\centering
	\begin{tabular}{ccc}		
		\subfloat[Recall]{
			\includegraphics[width=0.3\textwidth]{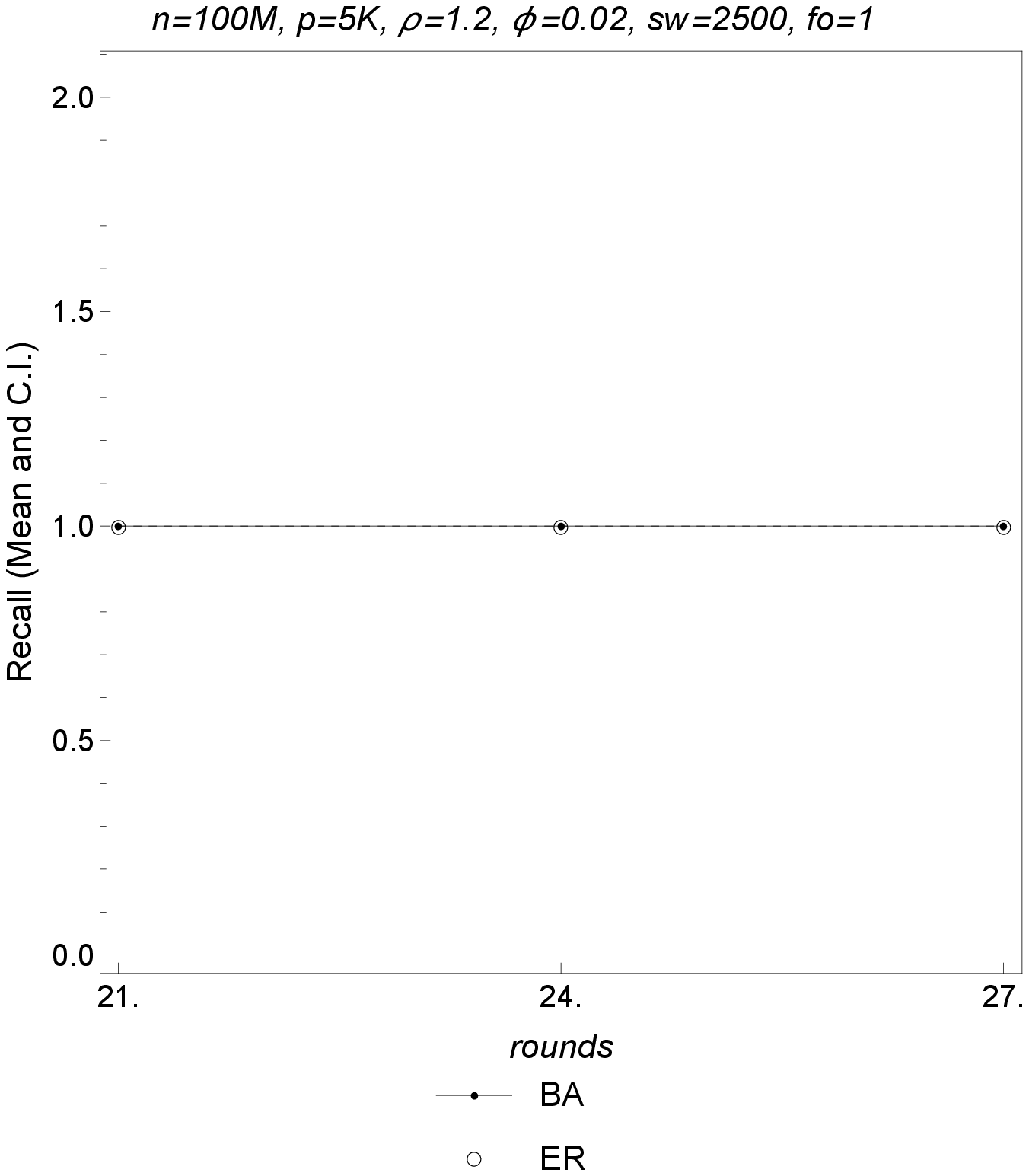}
			\label{r-rec}
		} &
		
		\subfloat[Precision]{
			\includegraphics[width=0.3\textwidth]{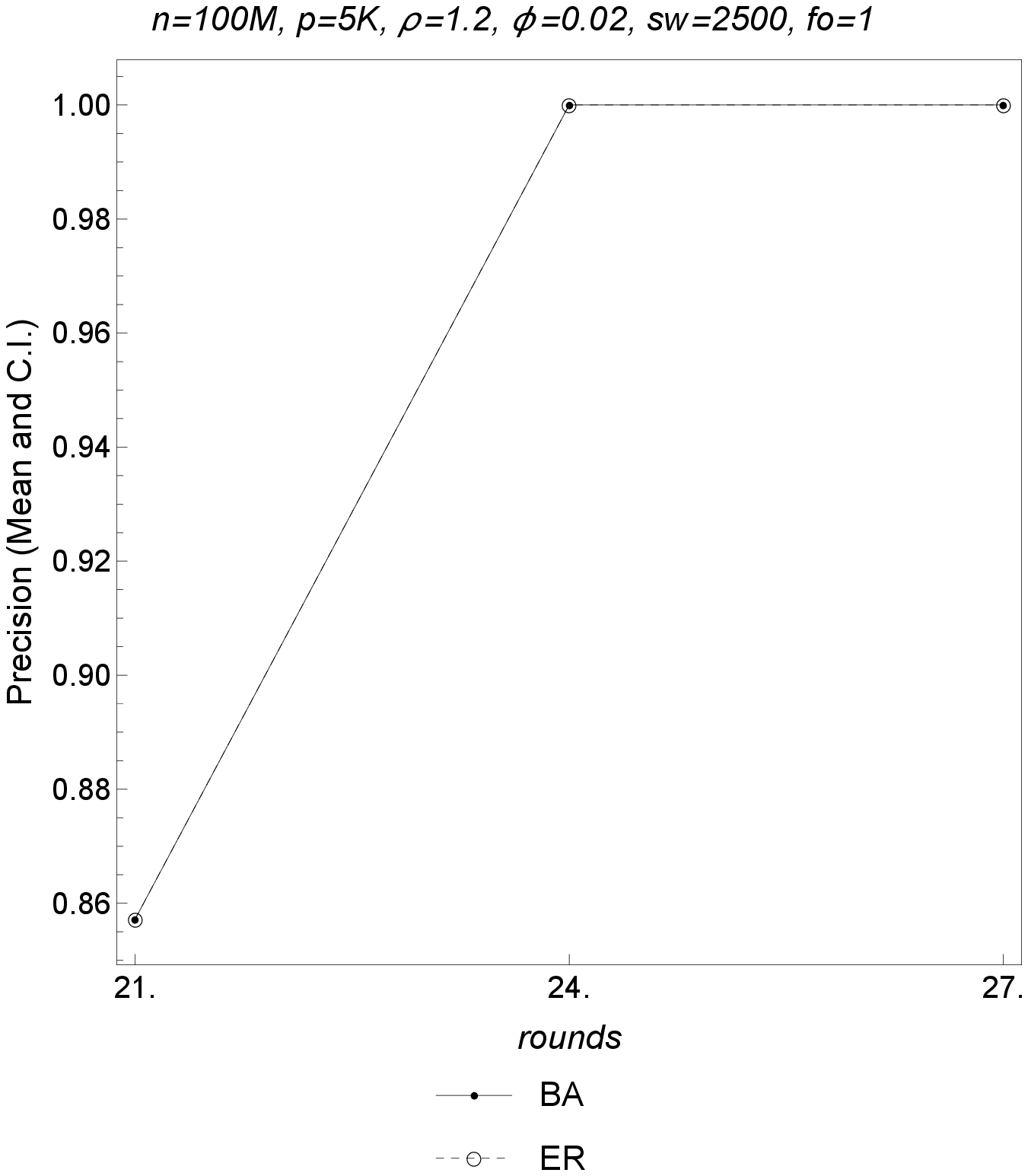}
			\label{r-prec}
		} &
		
		\subfloat[Average Relative Error ]{
			\includegraphics[width=0.3\textwidth]{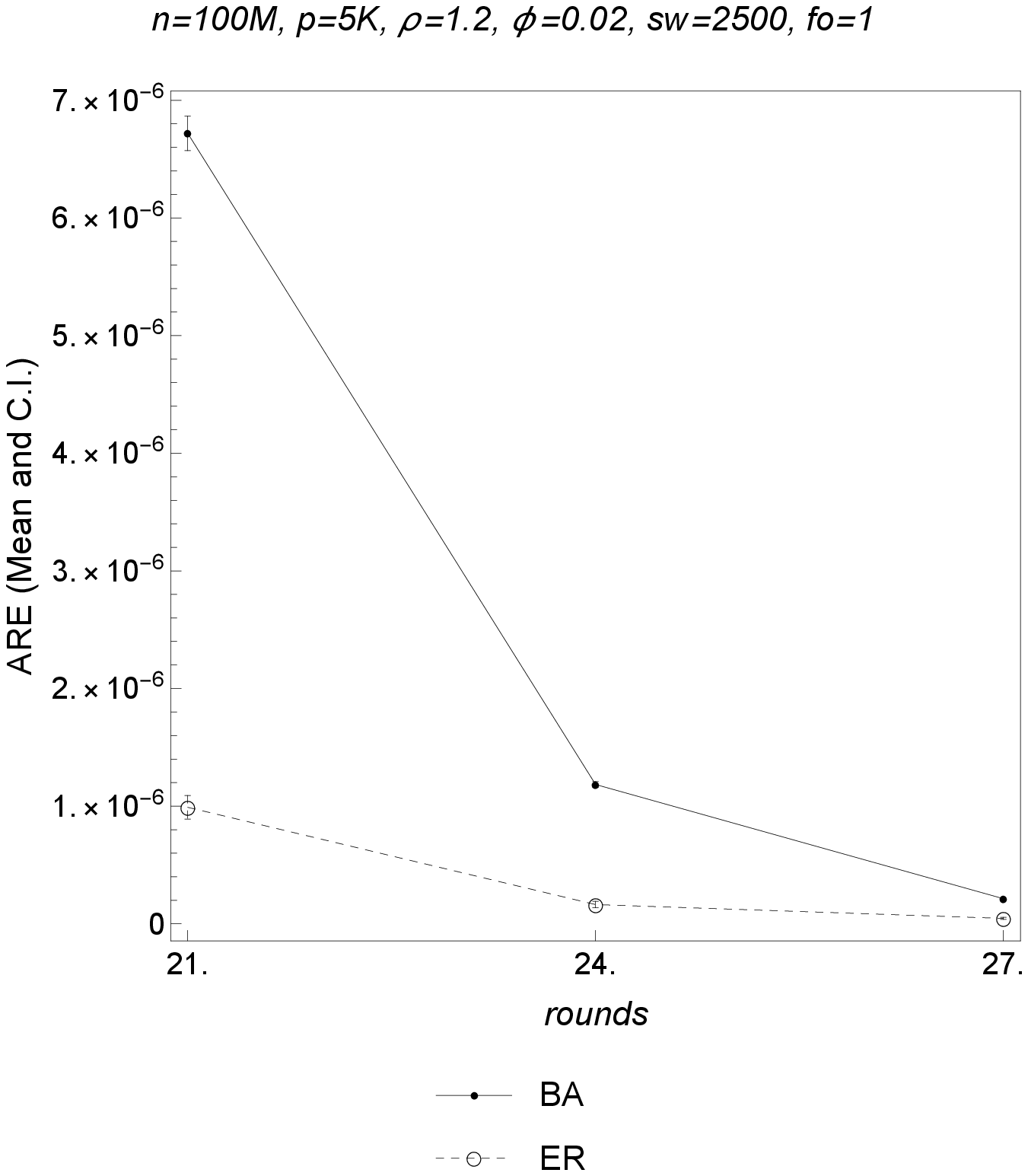}
			\label{r-are}
		} 
		
	\end{tabular}
	
	\caption{Recall, Precision and Average Relative Error (mean and confidence interval) varying the number of rounds executed,  for both a Barabasi-Albert (BA) and an Erdos-Renyi (ER) type of network graph.} 
	\label{r_plot}
\end{figure*}

A major sensitivity is exhibited by the algorithm when the number of rounds executed is varied, Figure~\ref{r_plot}. We note that the Precision grows and the Average Relative Error decreases as the number of rounds increases. This behavior is expected, given the theoretical analysis. 

Overall the experiments show that our algorithm exhibits very good performance in terms of Recall, Precision, and Average Relative Error of the frequency estimation when the guidance of the theoretical analysis is taken into account in determining the size of sketches used by peers and the number of rounds to be executed. Furthermore, the algorithm proves to be very robust to variations in the skewness of the input dataset and the frequent items threshold.

Nonetheless, the results above refer only to the case of a static network. In the following, we show how the algorithm behaves in presence of churning.

\subsection{Effect of the churn}
\label{effect-of-churn}
The performance of our \textsc{P2PTFHH} algorithm has been tested also in more realistic contexts with the adoption of  two churning models, the \textit{fail-stop} model and the  \textit{Yao} model, proposed by Yao et al. \cite{Yao2006}. 

In the \textit{fail-stop} model, a peer could leave the network with a given failure probability and the failed peers can not join the network anymore. 

In the \textit{Yao} model, peers randomly join and leave the network. For each peer $i$, a random average \textit{lifetime} duration $l_i$ is generated from a Shifted Pareto distribution with $\alpha=3$, $\beta = 1$ and $\mu = 1.01$. At the same way, a random average \textit{offline} duration $d_i$ is generated from a Shifted Pareto distribution with same $\alpha$ and $\mu$ values and $\beta= 2$. We recall here that if $X \sim \text{Pareto(II)}(\mu,\beta,\alpha)$, i.e., $X$ is a random variable with a Pareto Type II distribution (also named Shifted Pareto), than its cumulative distribution function is $F_X(x) = 1 - \left(1 + \frac{x - \mu}{\beta}\right)^{-\alpha}$.

The values $l_i$ and $d_i$  are used to configure, for each peer $i$, two distributions $F_i$ and $G_i$. Distributions $G_i$ are Shifted Pareto distributions with $\beta = 3$ and $\alpha = 2d_i$. While, distributions $F_i$ can be both Pareto distributions with $\beta = 2$, $\alpha = 2l_i$, or exponential distributions with $\lambda = 1/l_i$. Whenever the state of a peer changes, a duration value is drawn from one of the distributions, $F_i$ or $G_i$, based on the type of duration values (lifetime or offtime) which must be generated.  We run our experiments with both the variants Pareto and Exponential lifetimes.

When using the fail-stop model, we run our algorithm with the default parameter values of Table \ref{experiments} and varying the failure probability through the values: $0.0, 0.01, 0.05, 0.1$. 
As shown in Figures~\ref{fp-rec} and \ref{fp-prec}, the recall and precision metrics are not affected by the introduction of peer failures up to a failure probability equal to $0.1$. However, as expected, the average relative error on frequency estimations gets worse going from about $10^{-7}$ in case of no churn to about $10^{-1}$ when the failure probability is $0.1$, see Figure~\ref{fp-are}. 

\begin{figure*}[h]
	\centering
	\begin{tabular}{ccc}		
		\subfloat[Recall]{
			\includegraphics[width=0.3\textwidth]{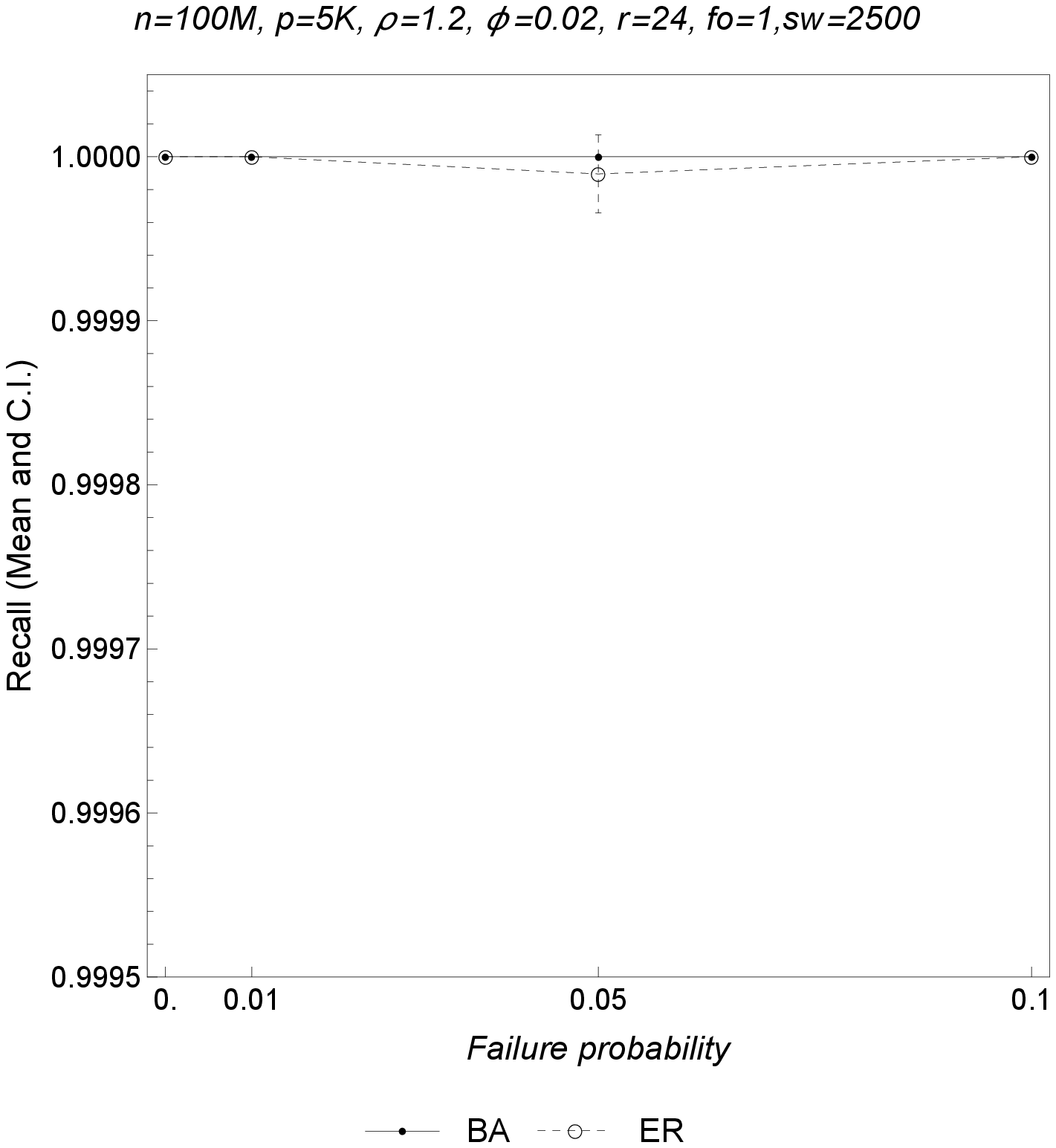}
			\label{fp-rec}
		} &
		
		\subfloat[Precision]{
			\includegraphics[width=0.3\textwidth]{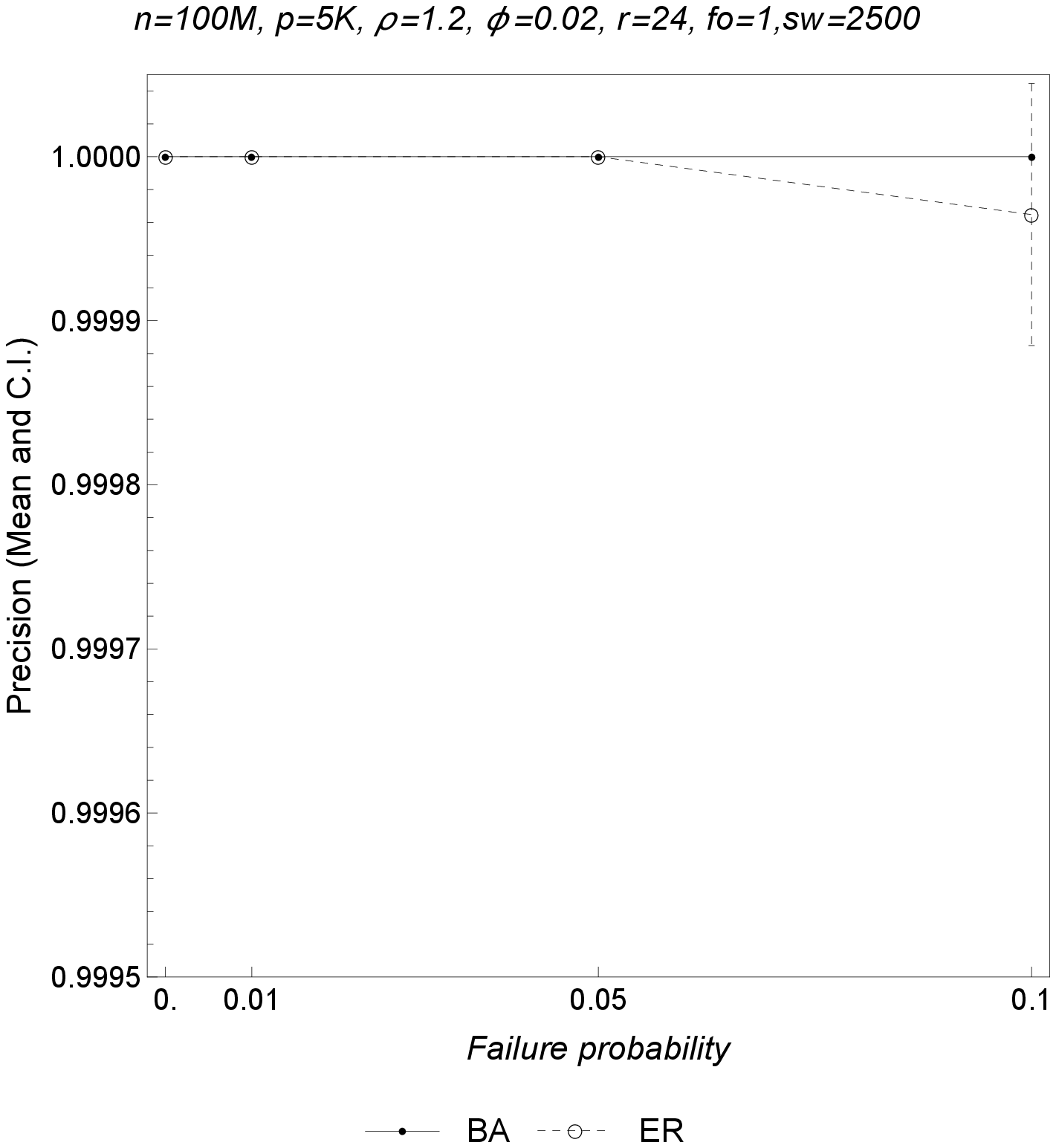}
			\label{fp-prec}
		} &
		
		\subfloat[Average Relative Error ]{
			\includegraphics[width=0.3\textwidth]{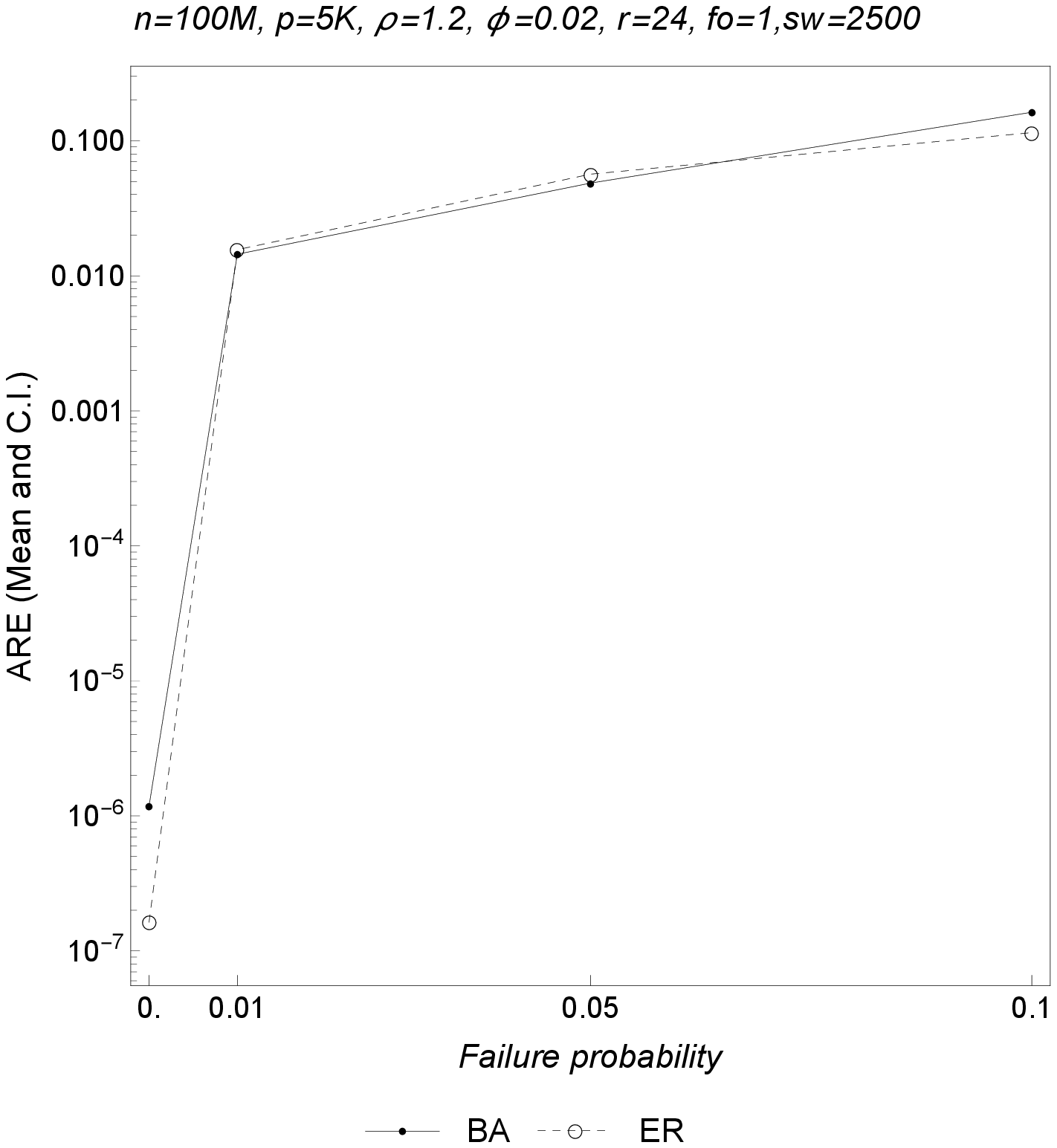}
			\label{fp-are}
		} 
		
	\end{tabular}
	
	\caption{Recall, Precision and Average Relative Error (mean and confidence interval) varying the failure probability in a fail-stop model of churning,  for both a Barabasi-Albert (BA) and an Erdos-Renyi (ER) type of network graph.} 
	\label{fp_plot}
\end{figure*}

When the Yao model of churning was adopted, we run our algorithm with the default values of Table \ref{experiments} and the parameters of churning already discussed, varying the number of peers available to  participate in the gossip and the number of rounds executed. Also in these cases, recall and precision are not  affected by the introduction of churning. In fact, we obtained the same plots for recall and precision varying the number of peers and the number of rounds as those in Figures~\ref{p-rec}, \ref{p-prec} and \ref{r-rec} and \ref{r-prec}, the reason why we do not report these plots again. On the other hand, the average relative error is affected by the churning, as expected: Figures \ref{p-churn} and \ref{r-churn} are related respectively to the ARE measured varying the number of peers and the number of rounds with Pareto distributions for lifetimes, meanwhile Figures \ref{p-expchurn} and \ref{r-expchurn} refer to the ARE measured when using Exponential distributions for lifetimes.

\begin{figure*}[h]
	\centering
	\begin{tabular}{cc}		
		\subfloat[Average Relative Error]{
			\includegraphics[width=0.45\textwidth]{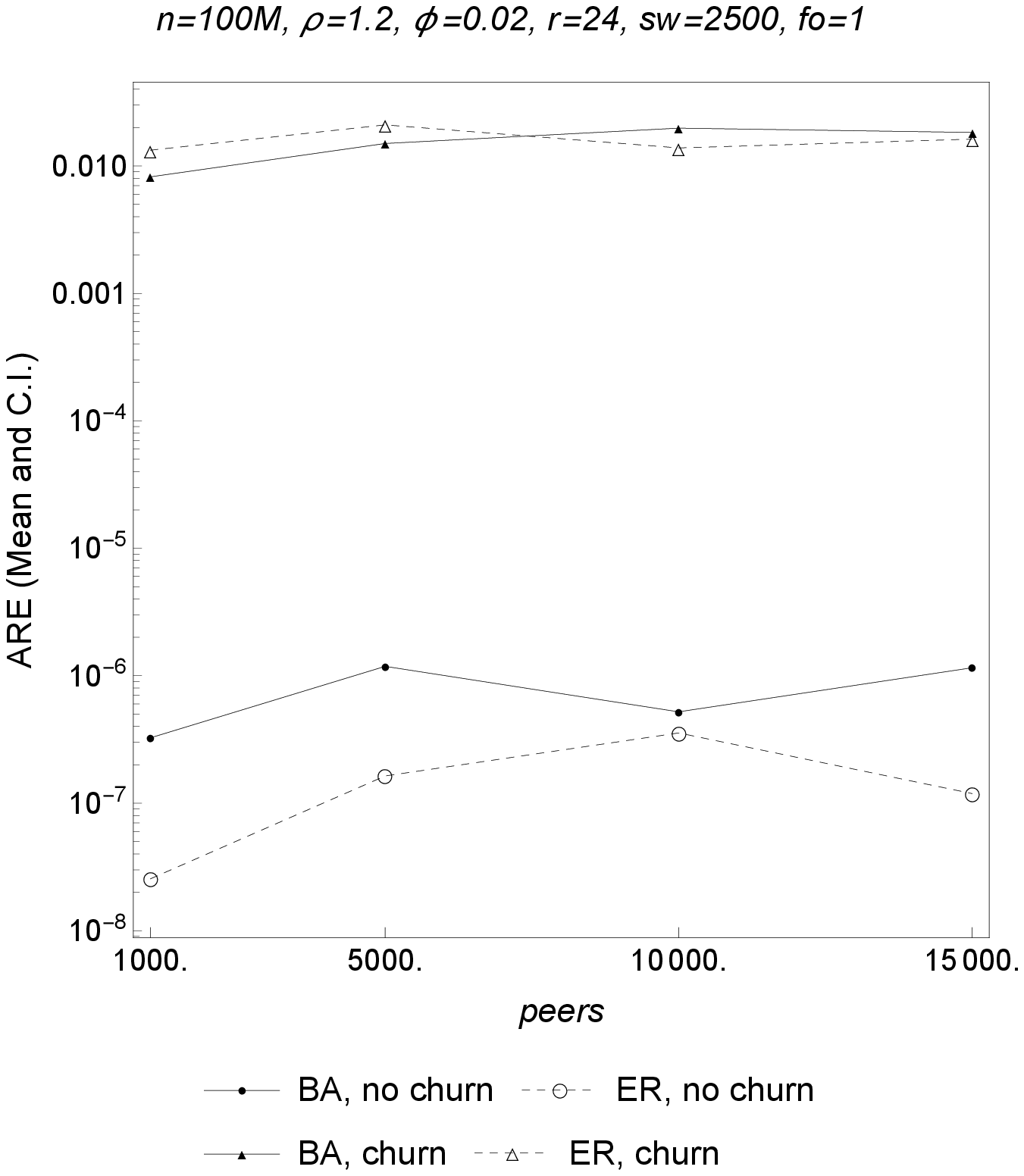}
			\label{p-churn}
		} &
		
		\subfloat[Average Relative Error]{
			\includegraphics[width=0.45\textwidth]{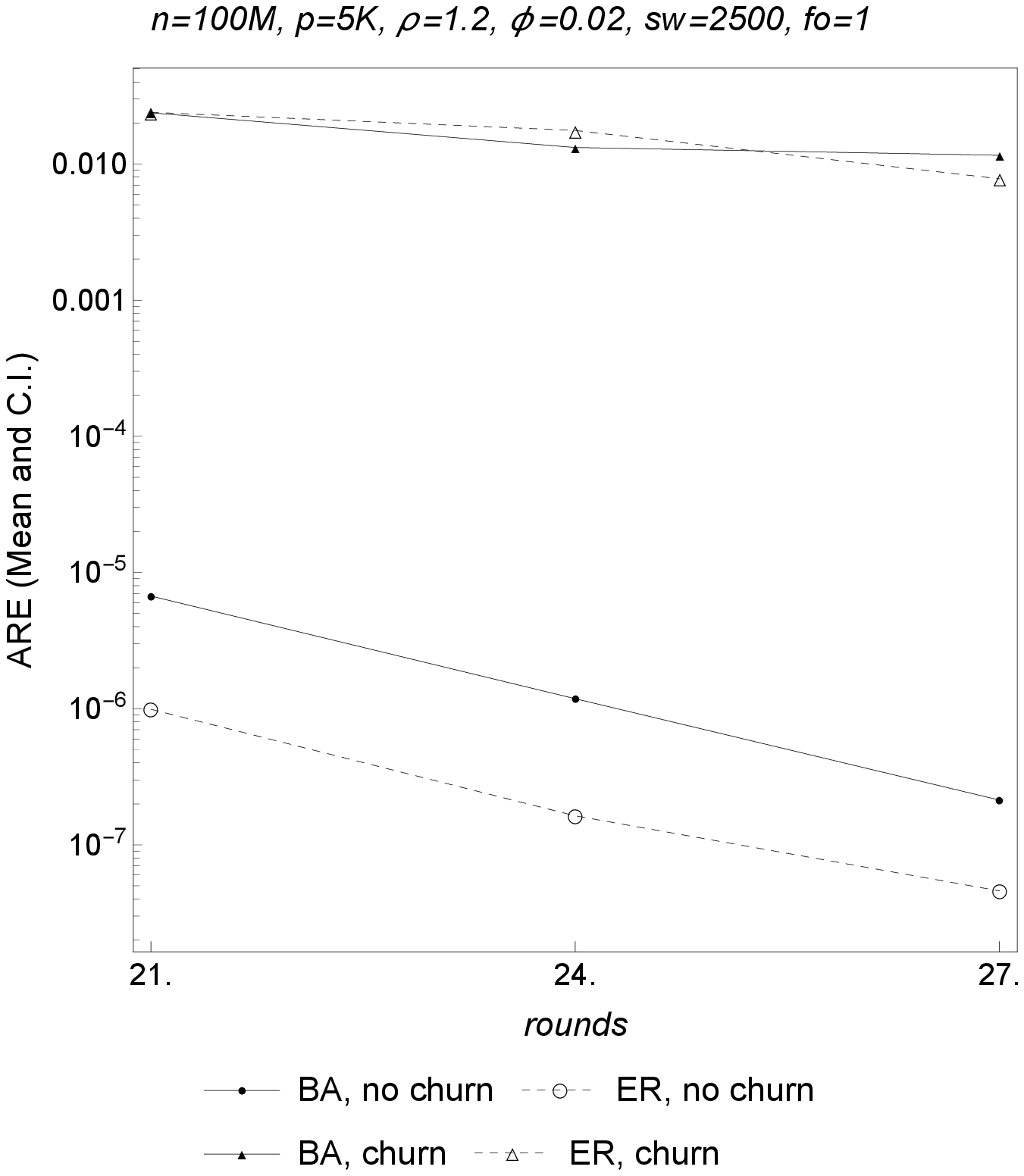}
			\label{r-churn}
		} \\
		
		\subfloat[Average Relative Error ]{
			\includegraphics[width=0.45\textwidth]{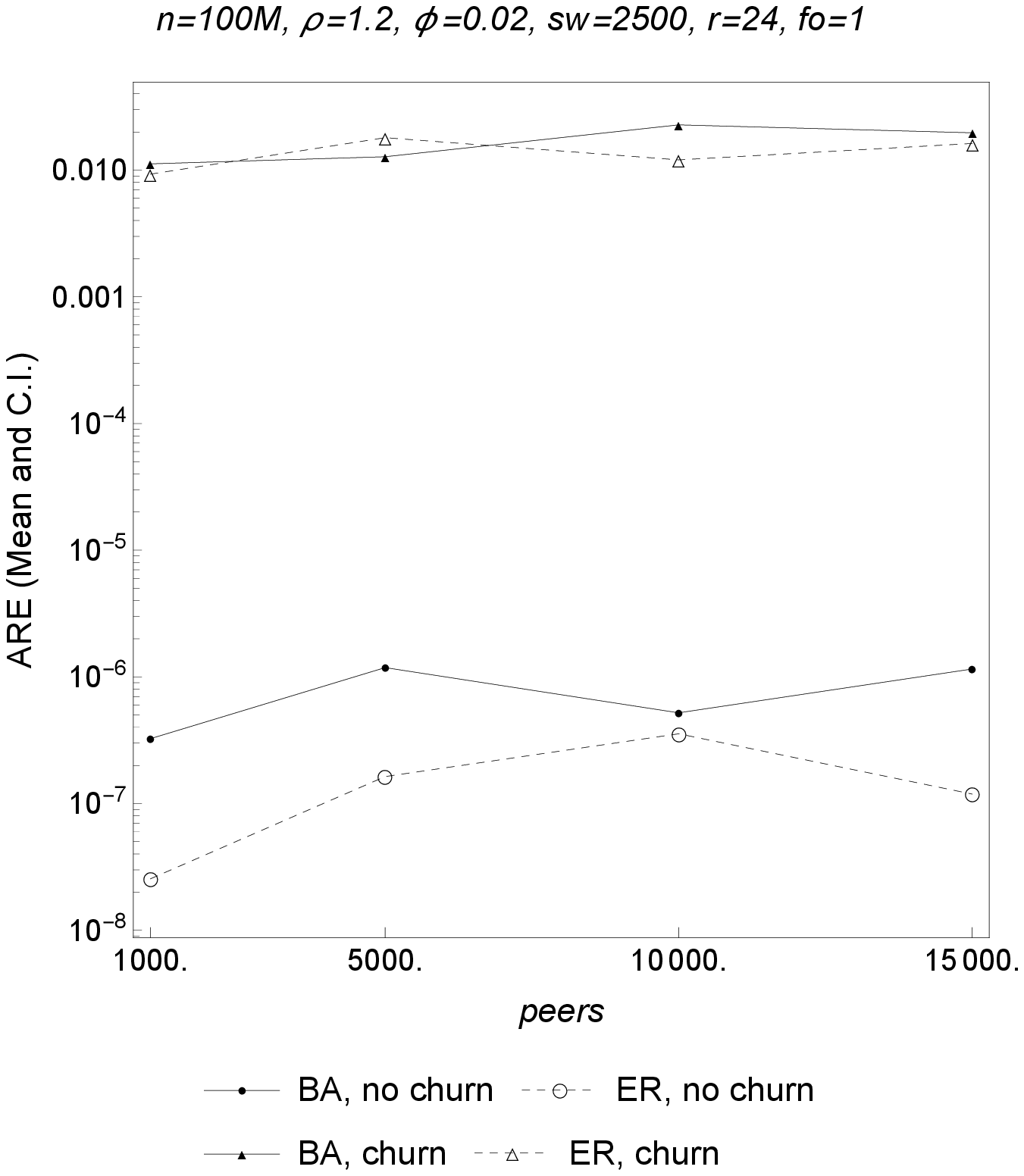}
			\label{p-expchurn}
		} &
		
		\subfloat[Average Relative Error ]{
			\includegraphics[width=0.45\textwidth]{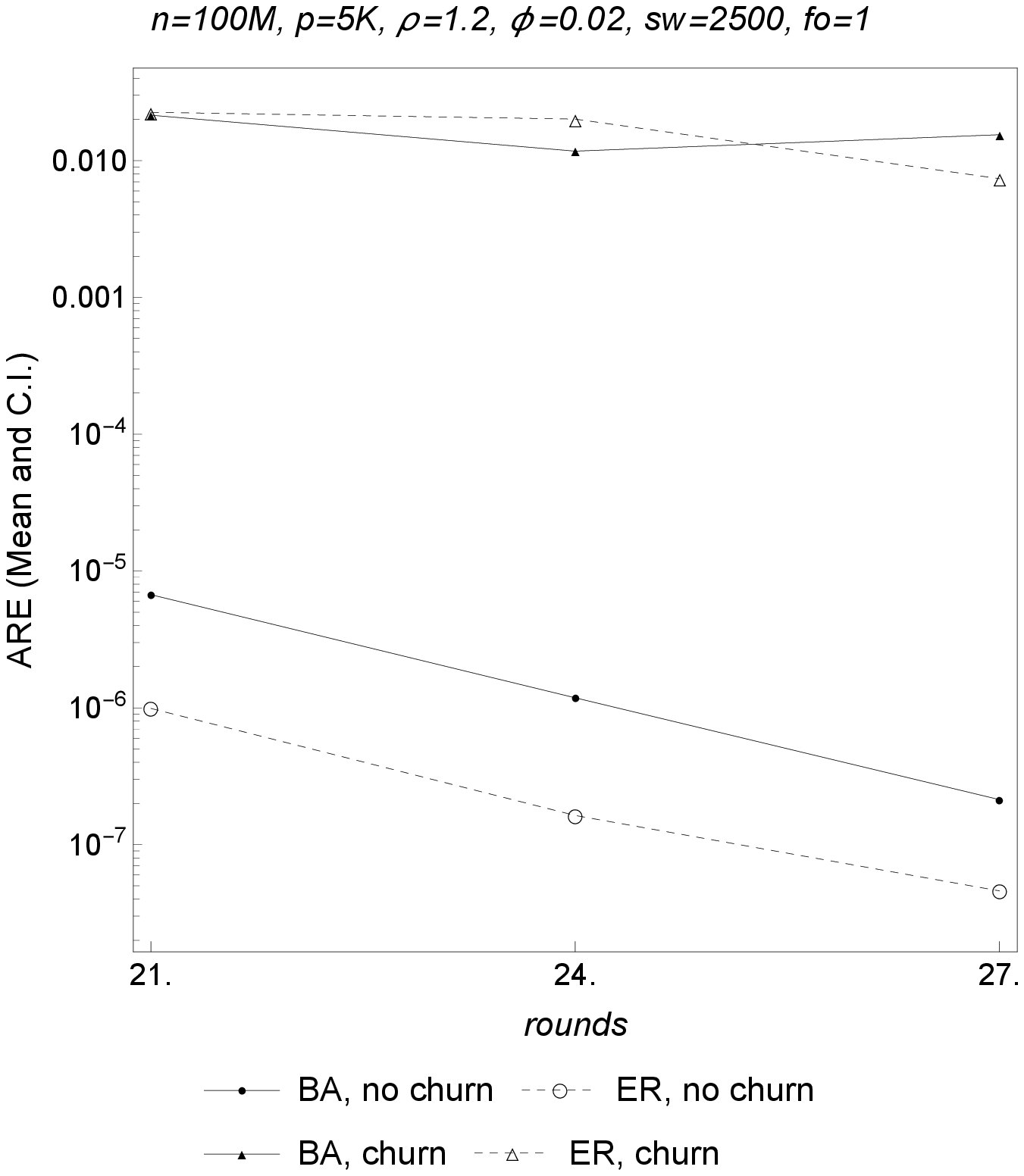}
			\label{r-expchurn}
		} 
		
	\end{tabular}
	
	\caption{Average Relative Error (mean and confidence interval) varying the number of peers and rounds in Yao model of churning and Pareto (a,b) or Exponential (c, d) lifetimes,  for both a Barabasi-Albert (BA) and an Erdos-Renyi (ER) type of network graph.} 
	\label{yao_plots}
\end{figure*}

\section{Conclusions}
\label{conclusions}

In this paper we presented the \textsc{P2PTFHH} (Peer--to--Peer Time--Faded Heavy Hitters) distributed algorithm. Building on the distributed averaging protocol, we designed a gossip--based version of our sequential \textsc{FDCMSS} (Forward Decay Count-Min Space-Saving) algorithm for distributed mining of time--faded heavy hitters. To the best of our knowledge, this is the first distributed protocol designed specifically for the problem of mining on unstructured P2P networks time--faded heavy hitters, a data mining task with wide applicability, for instance in the context of sensor data mining, for business decision support, analysis of web query logs, network measurement, monitoring and traffic analysis. In these applications, the use of the time--fading model discounts the effect of old data, by putting heavier weights on recent batches of streaming  data than older batches. We formally proved the algorithm's correctness and error bounds on frequency estimation. The extensive experimental results also proved that \textsc{P2PTFHH} is extremely accurate and fast, allowing near real time processing of large datasets.


\clearpage

\section*{References}
\bibliographystyle{elsarticle-harv}
\bibliography{bibliography}

\end{document}